\newcommand{\Id}{\ensuremath{\mathbb{I}}}
\newcommand{\norm}[1]{\left\lVert#1\right\rVert}
\providecommand{\myvec}[1]{\ensuremath{\boldsymbol{#1}}}
\providecommand{\bb}{\ensuremath{\myvec{b}}}
\providecommand{\rr}{\ensuremath{\myvec{r}}}
\providecommand{\uu}{\ensuremath{\myvec{u}}}
\providecommand{\vv}{\ensuremath{\myvec{v}}}
\providecommand{\xx}{\ensuremath{\myvec{x}}}
\providecommand{\yy}{\ensuremath{\myvec{y}}}
\providecommand{\calH}{\ensuremath{\mathcal{H}}}
\providecommand{\calI}{\ensuremath{\mathcal{I}}}
\providecommand{\calO}{\ensuremath{\mathcal{O}}}
\providecommand{\calP}{\ensuremath{\mathcal{P}}}
\providecommand{\calS}{\ensuremath{\mathcal{S}}}
\providecommand{\bbP}{\ensuremath{\mathbb{P}}}
\providecommand{\bbZ}{\ensuremath{\mathbb{Z}}}
\newcommand{\tvert}[1]{{\left\vert\kern-0.25ex\left\vert\kern-0.25ex\left\vert #1 
    \right\vert\kern-0.25ex\right\vert\kern-0.25ex\right\vert}}
\newcommand{\poly}{\ensuremath{\operatorname{poly}}}
\newcommand{\proj}[1]{\ket{#1}\bra{#1}}
\def\ketbra#1#2{{\vert#1\rangle\!\langle#2\vert}} 
\newcommand{\Abs}[1]{\left\lvert #1 \right\rvert}
\newcommand{\True}{\texttt{TRUE}}
\newcommand{\False}{\texttt{FALSE}}
\let\oldproof\proof
\renewcommand{\proof}{\color{black}\oldproof}
\DeclareMathOperator*{\Argmax}{arg\,max}
\DeclarePairedDelimiterX{\infdivx}[2]{[}{]}{%
  #1\;\delimsize\|\;#2%
}
\newcommand{\infdiv}{D\infdivx}
\definecolor{brandblue}{rgb}{0.34, 0.7, 1}
\newtcolorbox{mainbox}[1]{
  colframe=brandblue, 
  base={#1}
}
\newtcolorbox{subbox}[1]{
  colframe=black!30!white,
  base={#1}
}
\declaretheoremstyle[
  headfont=\color{red}\normalfont\bfseries,
  bodyfont=\color{red}\normalfont\itshape,
]{colored}
\newtheorem{theorem}{Theorem}
\newtheorem{lemma}[theorem]{Lemma}
\newtheorem{proposition}[theorem]{Proposition}
\newtheorem{definition}[theorem]{Definition}
\newtheorem{corollary}[theorem]{Corollary}
\newtheorem*{oproblem}{Open problem}
\newtheorem*{remark}{Remark}
\definecolor{alexcolor}{HTML}{e74c3c}
\NewDocumentCommand{\lr}{+!o +!d() +!d|| +!d<>}
\newcommand{\dccqs}{Dahlem Center for Complex Quantum Systems, Freie Universit{\"a}t Berlin, 14195 Berlin, Germany}
\newcommand{\hzb}{Helmholtz-Zentrum Berlin f{\"u}r Materialien und Energie, 14109 Berlin, Germany}
\newcommand{\hhi}{Fraunhofer Heinrich Hertz Institute, 10587 Berlin, Germany}
\newcommand{\papertitle}{A measurement-driven quantum algorithm for SAT:\\Performance guarantees via spectral gaps and measurement parallelization}
\begin{document}	

\title{\papertitle}
\date{\today}

\author{Franz\ J.\ Schreiber}
\thanks{These authors contributed equally to this work.\\ \href{mailto:f.schreiber@fu-berlin.de}{f.schreiber@fu-berlin.de}, \href{mailto:m.kramer@fu-berlin.de}{m.kramer@fu-berlin.de}.}
\affiliation{\dccqs}

\author{Maximilian\ J.\ Kramer}
\thanks{These authors contributed equally to this work.\\ \href{mailto:f.schreiber@fu-berlin.de}{f.schreiber@fu-berlin.de}, \href{mailto:m.kramer@fu-berlin.de}{m.kramer@fu-berlin.de}.}
\affiliation{\dccqs}

\author{Alexander\ Nietner}
\affiliation{\dccqs}

\author{Jens\ Eisert}
\affiliation{\dccqs}
\affiliation{\hzb}
\affiliation{\hhi}

\begin{abstract}
The Boolean satisfiability problem (SAT) is of central importance in both theory and practice. Yet, most provable guarantees for quantum algorithms rely exclusively on Grover-type methods that cap the possible advantage at only quadratic speed-ups, making the search for approaches that surpass this quadratic barrier a key challenge. In this light, this work presents a rigorous worst-case runtime analysis of a recently introduced measurement-driven quantum SAT solver. Importantly, this quantum algorithm does not exclusively rely on Grover-type methods and shows promising numerical performance. Our analysis establishes that the algorithm's runtime depends on an exponential trade-off between two key properties: the spectral gap of the associated Hamiltonian and the success probability of the driving measurements. We show that this trade-off can be systematically controlled by a tunable rotation angle. Beyond establishing a worst-case runtime expression, this work contributes significant algorithmic improvements. First, we develop a new readout routine that efficiently finds a solution even for instances with multiple satisfying assignments. Second, a measurement parallelization scheme, based on perfect hash families, is introduced. Third, we establish an amplitude-amplified version of the measurement-driven algorithm. Finally, we demonstrate the practical utility of our framework: By suitably scheduling the algorithm's parameters, we show that its runtime collapses from exponential to polynomial on a special class of SAT instances, consistent with their known classical tractability. A problem we leave open is to establish a non-trivial lower bound on the spectral gap as a function of the rotation angle. Resolving this directly translates into an improved worst-case runtime, potentially realizing a super-quadratic quantum advantage.
\end{abstract}

\maketitle

\section{Introduction}

The \emph{Boolean satisfiability problem} (SAT) is one of \emph{the} central problems in computer science, holding foundational importance for both complexity theory and practical optimization. Since the landmark Cook-Levin theorem established 
its $\mathsf{NP}$-completeness~\cite{Cook1971, Levin1973}, 
SAT---and $3$-SAT in particular---has served as a canonical benchmark for computational hardness. The immense practical value of solving large SAT instances, combined with its established intractability, makes SAT a prime target for novel computational paradigms, thereby motivating an urgent search for more efficient quantum algorithms.

The \emph{exponential time hypothesis} (ETH) posits an exponential worst-case runtime for SAT \cite{impagliazzo2001eth1, impagliazzo2001eth2}. As a consequence, an exponential separation between the run times of quantum and classical solvers is believed to be impossible. Adding to that, the \emph{quantum strong exponential time hypothesis} (QSETH) \cite{buhrman2021_quantum_SETH, aaronson2020quantumcomplexityclosestpair} suggests that quantum speed-ups for SAT are at most quadratic. While the accepted complexity-theoretic assumptions leave open the compelling possibility of a \emph{super-quadratic quantum speed-up} for $3$-SAT, most known approaches rely on straightforward Grover-type adaptations of classical algorithms \cite{grover1996, Brassard2002, ambainis2004, dantsin2005_groverization} or utilize quantum walk techniques \cite{montanaro2016backtracking, Ambainis2017}. Consequently, the achievable speed-up is capped at a quadratic improvement over the best classical runtime. Pursuing such a super-quadratic advantage is especially relevant in light of the postulated ``quartic barrier'' to practical quantum advantage, which suggests a merely quadratic speed-up may be insufficient to overcome the overheads of error correction \cite{beyondQuadratic} in the foreseeable future. 

In this work, we take a substantial step in that direction by providing a rigorous analysis of a novel measurement-driven quantum algorithm introduced in Ref.~\cite{benjamin2017}. 
Besides being hardware-friendly on \emph{near-term quantum devices}, 
it uses inherently quantum features by introducing a key parameter, the rotation angle $\theta$, which systematically tunes the degree of non-commutativity of the measurements that drive the computation. 
The regime of \emph{near-term intermediate scale quantum} (NISQ) computers augmented by mid-circuit measurements has been dubbed \emph{NISQ+} \cite{NISQPlus}: As an important milestone towards \emph{fault-tolerant application-scale quantum} (FASQ) computers \cite{MindTheGaps}, the algorithm at hand seems to fit in this highly relevant regime that can be accommodated in several physical platforms with a special emphasis on being particularly hardware-friendly on platforms with all-to-all connectivity that implement the quantum measurements used in our setting very natively (such as photonic \cite{bartolucci2021fusionbasedquantumcomputation,bombin2021}, cold atom \cite{evered2023high,bluvstein2022quantum,bluvstein2024logical}, and ion-trap \cite{PhysRevX.13.041052,kielpinski2002architecture} devices). In this work, we make \emph{conceptual} as well as \emph{algorithmic} contributions. 

On the conceptual side, we establish the dependence of the worst-case runtime on two key, tunable properties: the spectral gap of the associated Hamiltonian and the success probability of the driving measurements. Using the \emph{method of alternating projections} \cite{escalante2011alternating} and the \emph{detectability lemma} \cite{aharonov2011DL,Anshu_2016}, our analysis in \cref{sec:Measurement-driven quantum SAT solver} reveals an exponential trade-off between these two properties, which is controlled by the algorithm's rotation angle, $\theta$. In contrast to the undecidability of spectral gaps in general \cite{cubitt2022}, the Hamiltonians considered in this work are frustration-free and have a particular structure that can be exploited to make statements about the gap scaling (see \cref{subsubsec:rate of convergence,sec:appendix:Detectability Lemma and Method of Alternating Projections,sec:worst-case_bounds_for_gaps}).

We then demonstrate the practical utility of this framework: In \cref{sec:Analysis of some restricted input classes}, we show that the algorithm's runtime on certain inputs can be exponentially improved compared to a naive parameter setting. On these instances, appropriately tuning $\theta$ according to our analysis gives rise to a polynomial time algorithm, matching their classical tractability.

On the algorithmic side, we significantly broaden the scope, practicality, and performance of the algorithm. In \cref{subsec:time complexity of state preparation}, we point out that the algorithm is compatible with amplitude-amplification techniques, boosting the asymptotic runtime. In \cref{subsec:inferring_solution}, we introduce a new, rigorous readout routine capable of efficiently finding a solution even for general instances with multiple satisfying assignments. Furthermore, in \cref{subsec:parallelized_measurements} we develop a measurement parallelization scheme using a perfect hash family construction. The resulting \emph{layers} of commuting measurements can each be implemented as a single, hardware-friendly measurement, severely reducing the overall runtime of our algorithm.

On a general note, the methods and results of this work offer insights of potential independent interest for the broader analysis of dissipation-driven algorithms.

\section{Background}
\subsection{The Boolean satisfiability problem}
The \emph{Boolean satisfiability problem} (SAT) asks us to decide whether a 
given Boolean formula $\phi$ admits a truth value assignment $\xx \in \{0,1\}^n$ to the $n$ variables such that the formula, as a whole, evaluates to true under this assignment. This formula $\phi$ consists of $n$ Boolean variables $b_1, \dots, b_n$ that are connected by the Boolean operators (conjunction $\wedge$, disjunction $\vee$, and negation $\neg$). For every such formula, there is an equivalent formula that is in \emph{conjunctive normal form} (CNF), i.e., is a conjunction of $m$ clauses,
\begin{align}
    \phi = (c_1 \wedge c_2 \wedge \dots \wedge c_m)
\end{align}
with $c_j = (l_{j_1} \vee l_{j_2} \vee \dots \vee l_{j_k})$. Here, each clause $c_j$ is a disjunction of at most $k$ literals. A literal is a stand-in for either a variable $b_i$ or its negation $\neg b_i$. $k$-SAT is known to be in $\mathsf{P}$ for $k=2$ \cite{schaefer1978} and is $\mathsf{NP}$-complete for $k\geq 3$ due to the famous Cook–Levin theorem~\cite{Cook1971, Levin1973} which makes it relevant for theoretical computer science. Of particular interest is $3$-SAT as it is the canonical $\mathsf{NP}$-complete problem to which many other problems are reduced. Moreover, $3$-SAT serves as a central benchmark in both theoretical and practical studies, with widespread applications in areas such as hardware and software verification, planning and scheduling, and cryptographic problem solving. Accordingly, our review and analysis below will put a special emphasis on $3$-SAT.

\subsection{Classical SAT solvers}
Classical $3$-SAT solvers have been studied extensively, with worst-case runtimes available for several prominent algorithms. The naive brute-force search runs in $\mathcal{O}(2^n)$, but more advanced algorithms, often exploiting $3$-local structures, achieve much better runtimes. One such notable example is Schöning's stochastic local search algorithm \cite{Schoening1999}, known for its simplicity, which achieves a runtime of $\mathcal{O}(1.3334^n)$. The currently best rigorously proven runtime is achieved by a variant of the PPSZ algorithm \cite{PPSZ_2005,Hertli_PPSZ}, which achieves a runtime of $\mathcal{O}(1.306995^n)$ \cite{Hansen_biased_PPSZ}. A comprehensive overview of runtime guarantees for $3$-SAT solvers can be found in Refs.~\cite{schöning2013satisfiability,Biere2021}.

Modern SAT solvers, which utilize heuristics, have become highly effective by combining techniques such as \emph{backtracking}~\cite{Davis1960,Davis1962}, \emph{conflict-driven clause learning} (CDCL)~\cite{Grasp_CDCL, Chaff_CDCL}, and \emph{inprocessing}. These methods enable them to efficiently solve large, structured instances from various applications~\cite{schöning2013satisfiability, Biere2021}. The gap between this stellar performance of modern solvers on instances of interest and their worst-case exponential runtime remains poorly understood~\cite{Ganesh2020_unreasonable_effectiveness_SAT}.

\subsection{Quantum SAT solvers}
A natural approach to solving SAT on a quantum computer is to apply Grover's algorithm \cite{grover1996} to achieve a quadratic speed-up over brute-force search, i.e., $\mathcal{O}(1.414^n)$. Further improvements are possible by ``groverizing'' (via amplitude amplification \cite{Brassard2002}) more sophisticated classical algorithms that utilize the instance's structure. Here, we focus on results for $3$-SAT. For example, Ref.~\cite{ambainis2004} demonstrates that this framework can be utilized to achieve a quadratic improvement over Schöning's algorithm, which yields a quantum runtime of $\mathcal{O}(1.155^n)$. However, a quadratic improvement over the record-breaking PPSZ algorithm~\cite {PPSZ_2005,Hertli_PPSZ} can also be achieved, as discussed in Ref.~\cite{dantsin2005_groverization}, thus yielding an asymptotic runtime of $\mathcal{O}(1.144^n)$. Moreover, hybrid quantum-classical variants of Schöning's algorithm have been investigated more recently in Refs.~\cite{Dunjko2018,eshaghian2024_hybrid_schoening}.

Quantum backtracking algorithms~\cite{montanaro2016backtracking, Ambainis2017} provide a quantum approach to solving SAT on a quantum computer by accelerating classical backtracking procedures on which many state-of-the-art classical SAT solvers rely. Unlike Grover-based approaches, these methods utilize a quantum walk framework to explore the backtracking tree, offering an almost quadratic speed-up in terms of the number of nodes visited. Suppose a classical algorithm explores $T$ nodes in the search tree. In that case, there is a quantum backtracking algorithm that decides whether the instance is satisfiable or unsatisfiable in $\mathcal{O}(\poly(n)\sqrt{T})$ steps. 
It should be noted that $T$ scales usually exponentially in the number of variables, i.e., $n$. 
Refs.~\cite{Campbell2019, brehm2024} highlight the importance of evaluating these approaches beyond worst-case asymptotics and report actual scalings and performance benchmarks on structured, practical instances.

Adiabatic quantum computing and quantum annealing~\cite{kadowaki1998, farhi2000, kadowaki2002} are popular heuristics for combinatorial optimization, and as such also for SAT, but suffer from exponentially small gaps~\cite{vanDam2001, farhi2002, Reichardt2004, farhi2005, Znidaric2006, farhi2010, Altshuler2010, hen2014, werner2023, braida2025}. Inspired by annealing, the \emph{quantum approximate optimization algorithm} (QAOA)~\cite{Montanaro_SAT_with_QAQO, montanaro2024QAOA} and its Grover-enhanced variant~\cite{zhang2024, Mandl_2024} show promise for $k$-SAT, though rigorous guarantees remain often lacking.

Another line of work is centered around \emph{dissipation-driven quantum computing}, first introduced in Refs.~\cite{Childs2002, Verstraete2009,Kraus2008}. This paradigm, which leverages engineered dissipation as a computational resource, has recently attracted considerable attention. Current research includes advances in dissipative ground state preparation (see, e.g., Refs.~\cite{Gilyen2017,cubitt2023dissipative,chen2023,ding2024singleancillagroundstatepreparation,Lambert2024,li2024,Mi2024,motlagh2024,Eder2025,zhan2025,Lloyd2025,purcell2025}) and Gibbs sampling (see, e.g., Refs.~\cite{chen2023efficientexactnoncommutativequantum,chen2023fastthermalizationeigenstatethermalization,chen2023quantumthermalstatepreparation, zhang2023dissipativequantumgibbssampling,Ding2025, gilyen2024quantumgeneralizationsglaubermetropolis,jiang2024quantummetropolissamplingweak,Mozgunov2020,Rall2023,shtanko2023,Temme2011}).
The ground state preparation techniques are, in principle, applicable to combinatorial optimization problems such as SAT. For these problems, the corresponding Hamiltonian would (in the simplest case) consist of mutually commuting terms that are diagonal in the computational basis. While some dissipative approaches were developed precisely to handle arbitrary non-commuting Hamiltonians \cite{cubitt2023dissipative} and offer convergence guarantees under global resampling strategies, their worst-case runtime still scales exponentially as $\mathcal{O}(2^n)$. Applied to SAT, this performance is as bad as classical brute-force search. The algorithm investigated in this work is also dissipation-driven. However, it relies on a different mechanism: it combines discrete, projective measurements with a clever encoding of SAT into a Hamiltonian that introduces non-commutativity in a very controlled fashion while staying frustration-free.

\subsection{On the quantum-classical separation for SAT}
Before turning to the algorithm's details, we first discuss what types of quantum speed-ups can reasonably be expected for SAT. The widely accepted \emph{exponential time hypothesis} (ETH) suggests that even quantum computers are unlikely to solve SAT in sub-exponential time in the worst case. In fact, the \emph{quantum strong exponential time hypothesis} (QSETH) implies that Grover's search, which runs in $\mathcal{O}(2^{n/2})$, represents the optimal quantum speed-up \cite{buhrman2021_quantum_SETH}, compared to the classical $\mathcal{O}(2^n)$ scaling implied by the \emph{strong exponential time hypothesis} (SETH) \cite{impagliazzo2001eth1, impagliazzo2001eth2}.

However, if we restrict our attention to $3$-SAT on $n$ variables, only the ETH \cite{impagliazzo2001eth1, impagliazzo2001eth2} is relevant. It states that the worst-case classical runtime will be $\mathcal{O}(\lambda_{\mathrm{c}}^n)$ with $\lambda_{\mathrm{c}}>1$. In fact, as pointed out above, the best-known classical algorithm achieves $\lambda_c^* = 1.306995$ \cite{Hansen_biased_PPSZ}. For quantum computers, we also expect $\mathcal{O}(\lambda_{\mathrm{q}}^n)$ with $\lambda_{\mathrm{q}}>1$. However, complexity-theoretic assumptions do not rule out the possibility of $\lambda_{\mathrm{q}} < \sqrt{\lambda_{\mathrm{c}}^*}$, thus yielding a super-quadratic quantum speed-up for $3$-SAT.

While the above-mentioned complexity-theoretic arguments rule out super-polynomial quantum advantages in the worst-case, they leave open the possibility of such advantages for specific subsets of instances. Only recently has this perspective begun to receive systematic study (see, e.g., Refs.~\cite{Pirnay2024,szegedy2022,Yamakawa2024,jordan2024DQI,buhrman2025formalframeworkquantumadvantage}).

\subsection{Related works}
The present work is an extension of the measurement-driven quantum algorithm introduced in Ref.~\cite{benjamin2017} (with details provided in \cref{subsec:encoding_hamiltonian,subsec:algorithm_overview}). In the past, this work has already been extended in two directions: 
On the one hand, Ref.~\cite{Zhao2019measurement_driven} extends the framework of Ref.~\cite{benjamin2017} to general frustration-free Hamiltonians. The authors describe a measurement-driven analog of adiabatic quantum computation for frustration-free Hamiltonians. Here, slowly varying measurements are executed to mimic the adiabatic evolution. It is shown that for Hamiltonians that remain frustration-free along the evolution path, the necessary measurements can be implemented using measurements of random terms of the Hamiltonian. The presented results are based on a connection between the \emph{adiabatic theorem} and the \emph{quantum Zeno effect}.
On the other hand, Ref.~\cite{zhang2024ksatproblems} extends the original projection-based measurement approach of Ref.~\cite{benjamin2017} to quantum measurements of arbitrary strength, a technique often referred to as \emph{Zeno dragging}. Subsequently, Ref.~\cite{zhang2025} offers a more detailed analysis of this generalized measurement framework, including a partial analytical explanation of numerical observations reported in earlier work.
Moreover, it is worth noting that the present work can be cast within the dissipative framework introduced in Ref.~\cite{cubitt2023dissipative}.

\section{Measurement-driven quantum SAT solver} \label{sec:Measurement-driven quantum SAT solver}
A novel quantum algorithm for solving SAT has been proposed in Ref.~\cite{benjamin2017} (based on an earlier manuscript to be found in Ref.~\cite{benjamin2015}). Promising numerics indicate that this algorithm outperforms Schöning's algorithm~\cite{Schoening1999} in solving certain $3$-SAT instances. The proposed quantum algorithm consists of repeated projective measurements, 
where each measurement corresponds to the truth value of a generalized clause. The details are explained below.
\subsection{Encoding SAT into a rotated Hamiltonian}
\label{subsec:encoding_hamiltonian}
We encode a SAT instance $\phi(n,m)$ in CNF with $n$ variables and $m$ clauses into a Hamiltonian. As such, determining a zero-energy ground state of the Hamiltonian is equivalent to finding a satisfying solution to the SAT instance. 
First, we choose a mapping from truth value assignments to quantum states. Solution candidates for $\phi(n,m)$ are represented as binary strings of length $n$, encoding \True{} and \False{} assignments of the $n$ variables in Boolean logic. A typical way to encode such a length $n$ binary string into an $n$-qubit quantum state is to identify the Boolean values $\{0,1\}$ with the computational basis state vectors $\{\ket{0},\ket{1}\}$. Here, we deviate from this paradigm and parametrize our encoding with a tunable parameter $\theta \in (0, \frac{\pi}{2}]$, which we refer to as the \emph{rotation angle}. The parametrization of the rotated normalized basis states is chosen such that the standard encoding 
is recovered in the limit $\theta=\frac{\pi}{2}$
as
\begin{align}
    \mathrm{\True{}} = \ket{1} & \mapsto \ket{\theta} = R_Y(+\theta) \ket{+},\\
    \mathrm{\False{}} = \ket{0} & \mapsto \ket{\bar{\theta}} = R_Y(-\theta) \ket{+},
\end{align}
where $R_Y(\theta)$ is the usual rotation operator around the Y-axis, given as
\begin{align}
    R_Y(\theta) = 
    \begin{pmatrix}
        \cos(\theta/2) & -\sin(\theta/2)\\
        \sin(\theta/2) & \cos(\theta/2)
    \end{pmatrix}.
\end{align}
By way of example, the all-zero binary string is then encoded as 
$0^n 
\mapsto \ket{\bar{\theta}}^{\otimes n}$. We illustrate 
this encoding in the top of \cref{fig:rotated_states}.
The Hamiltonian corresponding to $\phi(n,m)$ takes the general form 
\begin{align}\label{eqn:SAT encoding Hamiltonian}
    H(\theta) = \sum_{i=1}^m P_i(\theta),
\end{align}
where each $P_i(\theta)$ is an orthogonal projector corresponding to the $i$'th of the $m$ clauses. We denote the quantum state encoding the Boolean truth value assignment $\xx \in \{0,1\}^n$ via a tensor product of $\ket{\theta}$- and $\ket{\bar{\theta}}$-state vectors as $\ket{\Theta_{\xx}}$. More formally, we define
\begin{align} \label{eqn:theta string}
    \ket{\Theta_{\xx}} \coloneqq \bigotimes_{i=1}^n R_Y \left( L_i \theta \right) \ket{+},
\end{align}
where $L_i=1$ if $x_i=1$ and $L_i=-1$ if $x_i=0$. 
The projector $P_i(\theta)$ is constructed such that $\braket{\Theta_{\xx}|P_i(\theta)|\Theta_{\xx}}= 0$ if $\xx$ satisfies the $i$'th clause and $\braket{\Theta_{\xx}|P_i(\theta)|\Theta_{\xx}}\neq0$ else. To this end, we introduce the pair of normalized
state vectors 
\begin{align}
    \ket{\theta^\perp} &= R_Y(\pi + \theta) \ket{+}, \\
    \ket{\bar{\theta}^\perp} &= R_Y(\pi - \theta) \ket{+},
\end{align}
with the properties
\begin{align}
\braket{\theta | \theta^\perp} &= 0, & \braket{\bar{\theta} | \bar{\theta}^\perp} &= 0, \\
\braket{\theta | \bar{\theta}} &= \cos(\theta), & \braket{\theta^\perp | \bar{\theta}^\perp} &= \cos(\theta), \\
\braket{\theta | \bar{\theta}^\perp} &= \sin(\theta), & \braket{\bar{\theta} | \theta^\perp} &= \sin(\theta).
\end{align}

Generalizing the above, we denote by $\ket{\Theta_{\xx}^\perp}$ the state vector perpendicular to $\ket{\Theta_{\xx}}$ which results from changing $R_Y(L_i\theta)$ to $R_Y(\pi + L_i\theta)$ in \cref{eqn:theta string}.
The projector $P_i(\theta)$ is then constructed in the natural way: A $k$-local clause $c_i$ has the effect of disallowing any potential solution string containing a certain $k$-local binary string $\xx_i=(x_{i_1},x_{i_2},\dots,x_{i_k})$. The corresponding clause projector $P_i(\theta)$ is then
\begin{align}
    P_i(\theta) = \proj{\Theta_{\xx_i}^\perp }_{i_1,i_2,\dots,i_k} \otimes \Id_{[n]\setminus \{i_1,i_2,\dots,i_k\}}.
\end{align}
For example, the $3$-SAT clause $c_i = (b_{1} \vee b_{4} \vee \neg b_6)$ corresponds to
\begin{align}
    P_i(\theta) = \proj{\theta_{1}^\perp \theta_{4}^\perp \bar{\theta}_{6}^\perp} \otimes \Id_{[n]\setminus \{1,4,6\} }.
\end{align}
Let us denote by $\calS$ the set of solution-encoding binary strings for a given SAT-instance.
A quantum state vector $\ket{\Theta_{\xx}}$ encoding a satisfying solution lives in the kernel of all projectors $P_i(\theta)$, such that $H(\theta)\ket{\Theta_{\xx}}=0$ for a binary string $\xx \in \calS$. In fact, the complete ground space of $H(\theta)$ is spanned by solution states, meaning any state in the ground space of $H(\theta)$ can be written as a superposition of state vectors $\ket{\Theta_{\xx}}$ encoding the satisfying solutions of the underlying SAT formula $\phi(n,m)$. This follows from Ref.~\cite[Appendix F]{benjamin2017} where it is proven that the rotation preserves the original ground state dimension, i.e., $d_{\mathrm{sol}}$ classical solutions imply $d_{\mathrm{sol}}$ quantum solutions for $0<\theta<\frac{\pi}{2}$.

\begin{figure}
    \centering
    \includegraphics[width=0.99\linewidth]{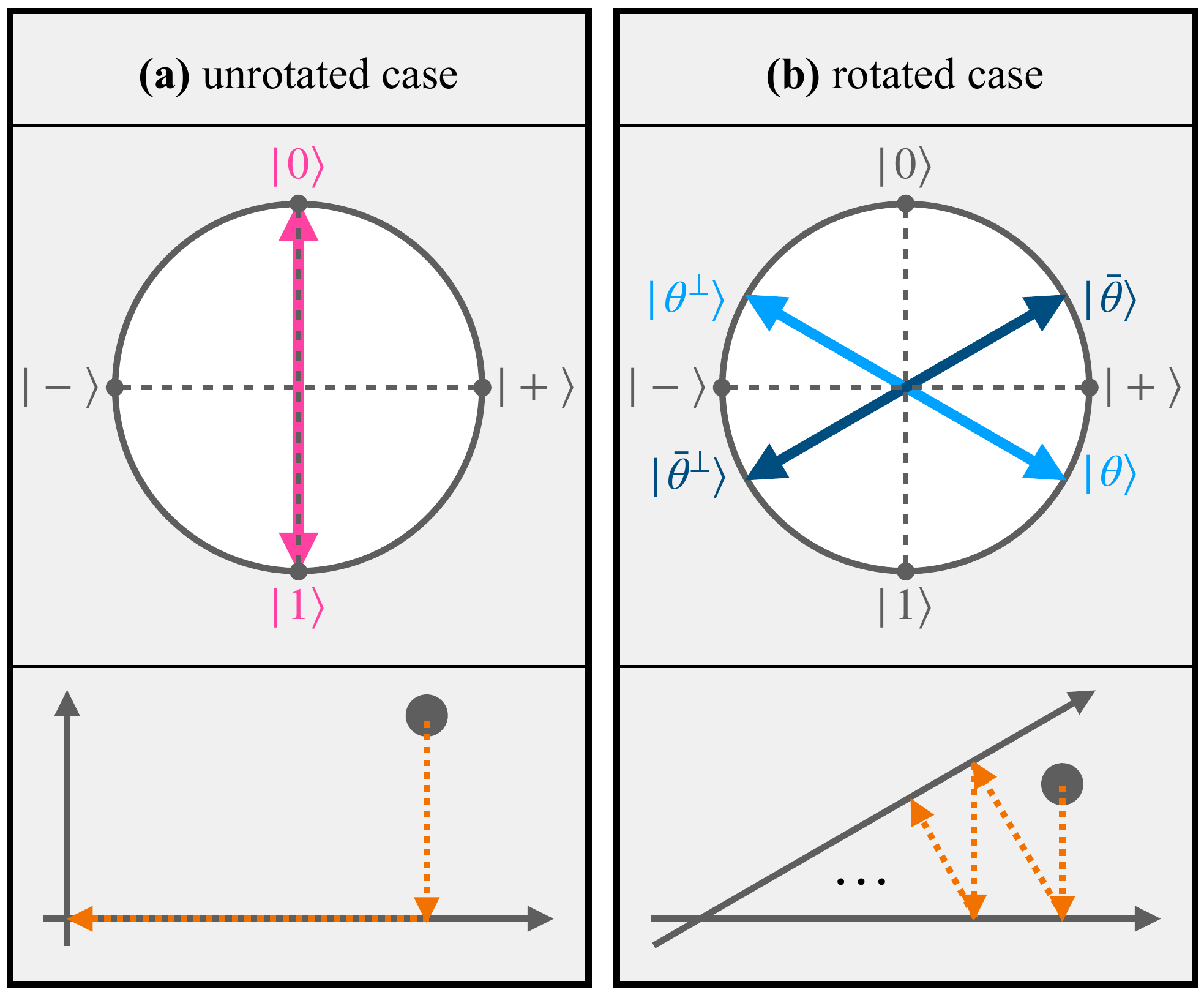}
    \caption{Illustration of (a) the unrotated and (b) the rotated settings. The top panels show the locations of the corresponding states in the XZ-projection of the Bloch sphere. The bottom panels illustrate, for two projectors, how convergence proceeds in each setting. Using an orthogonal encoding ($\theta=\frac{\pi}{2}$), we converge to the ground space at $(0,0)$ with a single pass of each clause check. In the non-orthogonal setting with $\theta \neq \frac{\pi}{2}$, we slowly converge towards the ground space.} 
    \label{fig:rotated_states}
\end{figure}

\subsection{Overview of the algorithm} \label{subsec:algorithm_overview}

In this section, we will elaborate on the actual measurement-assisted algorithm. On a high level, we find a satisfying solution for the SAT formula $\phi(n,m)$ by iterating a two-stage process: a \emph{state preparation routine} followed by a \emph{solution readout routine}. First, run a \emph{state preparation routine} that produces a state vector $\ket{\psi_{\mathrm{out}}}$ that has a high fidelity with the ground space of $H(\theta)$. Since the ground space is spanned by all states that encode satisfying assignments, $\ket{\psi_{\mathrm{out}}}$ serves as a quantum representation of the solution set. In the second step, we perform a \emph{specialized readout measurement} on $\ket{\psi_{\mathrm{out}}}$ to extract a single classical solution string from the rotated ground space. This two-stage procedure is repeated sufficiently often to ensure that a valid satisfying assignment is found with high probability. The overall scheme is summarized in \cref{alg:quantum SAT solver}.

\begin{algorithm}[H]
\caption{Measurement-driven quantum SAT solver}
    \label{alg:quantum SAT solver}
    \begin{algorithmic}
    \Require SAT formula $\phi(n,m)$ with $n$ variables and $m$ clauses, angle $\theta$, failure probability $\delta$
    \Ensure satisfying assignment $\myvec{s}$
    \State \hrulefill
    \State $\myvec{s} \gets \myvec{0}_n$ \Comment{$\myvec{s} \in \{0,1\}^n$, stores solution}
    \State $\epsilon \gets \epsilon(\theta)$ \Comment{tolerance, see \cref{thrm:unique_solution_readout,thrm:multiple_solution_readout}}
    \State $R \gets R(\theta,n,\delta)$ \Comment{\#copies, see \cref{thrm:unique_solution_readout,thrm:multiple_solution_readout}}
    \For{$i$ in $1:R$}
    \State prepare $\ket{\psi_\mathrm{out}}$ with tolerance $\epsilon$ \Comment{see \cref{alg:state preparation}}
    \State $\myvec{s} \gets \textsc{Readout}\left(\ket{\psi_\mathrm{out}}\right)$ \Comment{see \cref{alg:unique_solution_readout,alg:multiple_solution_readout}}
    \EndFor
    \end{algorithmic}
\end{algorithm}

\subsubsection{State preparation routine}

\begin{figure}[t]
    \begin{center}
    \includegraphics[width=0.55\linewidth]{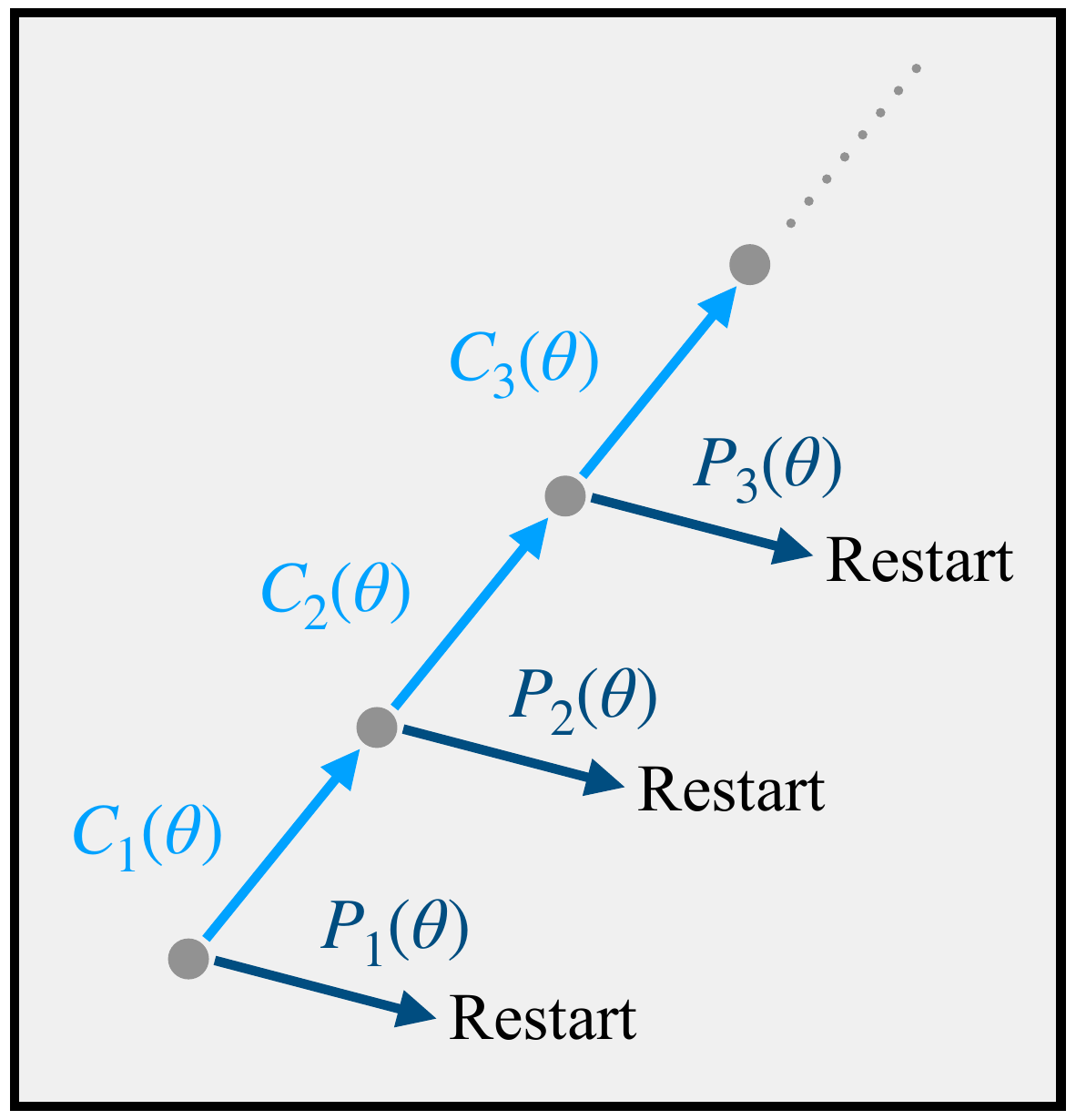}
    \caption{Algorithmic primitive of the measurement-driven approach. We sequentially perform measurements $\{C_i(\theta), P_i(\theta)\}$, with the measurement outcomes associated with $C_i(\theta)$ driving the state in the desired direction. Whenever we encounter an undesired outcome, we restart the procedure. The procedure stops 
    once we are sufficiently close to our target state.}
    \label{fig:algorithmic_primitive}
    \end{center}
\end{figure}

We begin the state preparation routine with the initial state vector $\ket{\psi_0} = \ket{+}^{\otimes n}$. It is worth noting at this 
stage that this state 
is an equal superposition of all basis state vectors 
$\ket{\Theta_{\xx}}$, i.e.,
\begin{align}
    \ket{\psi_0} = \ket{+}^{\otimes n} = \frac{1}{(2(1 + \cos(\theta)))^{n/2}} \sum_{\xx \in \{0,1\}^n} \ket{\Theta_{\xx}}.
\end{align}
By this rewriting, we see that an overlap between the ground space and the initial state is guaranteed. Further, let $C_i(\theta) \coloneqq \Id-P_i(\theta)$, such that $\{C_i(\theta), P_i(\theta)\}$ defines a projective measurement. These three-local measurements can be implemented in the circuit model with a triply-controlled-NOT gate and single qubit rotations \cite{benjamin2017}. We call such a measurement a \emph{clause check}. If we observe the measurement outcome associated with $C_i(\theta)$, we call the clause check \emph{passed}, else we say the clause check is \emph{failed}. We now drive $\ket{\psi_0}$ towards the ground state of $H(\theta)$ by sequentially performing clause checks on the state, continuing conditioned on passing the clause checks and restarting if a clause check fails (see \cref{fig:algorithmic_primitive})---an approach reminiscent of brute-force search. We summarize this procedure in \cref{alg:state preparation}. 

\begin{algorithm}[H]
\caption{State preparation}
    \label{alg:state preparation}
    \begin{algorithmic}
    \Require SAT formula $\phi(n,m)$ with $n$ variables and $m$ clauses, angle $\theta$, tolerance $\epsilon$
    \Ensure Prepared state vector $\ket{\psi_\mathrm{out}}$
        \State \hrulefill
        \State translate $\phi(n,m) \mapsto \{C_j(\theta), P_j(\theta)\}_{j=1}^m$
        \State $r^* \gets \left\lceil \frac{\ln((\epsilon \cdot \cos^n(\theta/2))^{-1})}{\ln(\mu^{-1})} \right\rceil$ \Comment{see \cref{lem:cycle_bound}}
        \State $\ket{\psi_0} \gets \ket{+}^{\otimes n}$
        \For{$i$ in $1:r^*$} 
        \For{$j$ in $1:m$} 
        \State  $x \gets \textsc{Measure} \left( \ket{\psi_{(i-1)m + j}}, \left\{C_j(\theta),P_j(\theta)\right\}\right)$ 
        \State \Comment{single shot, $x =\pm 1$}
        \If{$x=1$} \Comment{clause check passed}
        \State $\ket{\psi_{(i-1)m + j}} \gets C_j(\theta) \ket{\psi_{(i-1)m + j-1}} /p(C_j(\theta))$
        \State \Comment{post-measurement state} 
        \Else 
        \State restart
        \EndIf
        \EndFor
        \EndFor 
        \State \Return $\ket{\psi_\mathrm{out}} \gets \ket{\psi_{r^* m}}$
    \end{algorithmic}
\end{algorithm}

To build intuition, we first discuss the case $\theta=\frac{\pi}{2}$, where we encode the Boolean values in the computational basis states. The initial state vector is the uniform superposition over all computational basis state vector 
\begin{align}\label{eqn:initial state}
    \ket{\psi_0} = \ket{+}^{\otimes n} = \frac{1}{\sqrt{2^n}} \sum_{\xx \in \{0,1\}^n} \ket{\xx}.
\end{align}
A $k$-literal clause is violated by a fraction of $\frac{1}{2^k}$ of all binary strings and satisfied by the rest. Passing the corresponding clause check projects into the space $\operatorname{im}(C_i(\frac{\pi}{2}))$ spanned by strings satisfying the clause, where else a failing means projecting into $\operatorname{im}(P_i(\frac{\pi}{2}))$, spanned by strings violating the clause. Thus, with each passed clause check, we eliminate more and more computational basis states from the superposition in \cref{eqn:initial state}. After passing all $m$ clause checks, only satisfying solution strings remain in the superposition. A single computational basis measurement then outputs a binary string representing a classical solution. Here, the exponential factor dominating the runtime is determined by the success probability of passing $m$ clause checks in sequence. This results in an unsatisfactory runtime of $\mathcal{O}(2^n)$, matching the classical brute-force search performance. By deferring measurements to the end of the circuit (see Ref.~\cite{Wilde2013}) and using amplitude amplification~\cite{Brassard2002}, the overall runtime can be brought down to $\mathcal{O}(2^{n/2})$, matching the performance of Grover's algorithm~\cite{grover1996}. The claimed runtimes are rigorously proven in \cref{thrm:run time (unrotated)}.

However, we can modulate this success probability by choosing smaller values for $\theta$. In fact, in the limit $\theta=0$, where \True{} and \False{} assignments cannot be discerned anymore, a clause check never fails. However, increasing the success probability comes at a price. For $0<\theta< \frac{\pi}{2}$, the clause projectors $P_i(\theta)$ no longer mutually commute. As a consequence, performing all $m$ clause checks successfully is not sufficient to converge to the ground space of $H(\theta)$. Instead, we have to successfully measure multiple \emph{cycles} of all $m$ clause checks, iteratively converging towards the ground space (see \cref{fig:rotated_states} [bottom]). This introduces a trade-off between the convergence rate and the success probability, with numerics in Ref.~\cite{benjamin2017} indicating a ``sweet spot'' for some non-trivial choice of $\theta$. We give an in-depth analysis of this trade-off in \cref{subsec:time complexity of state preparation}.

\subsubsection{Solution readout}
In the case $\theta=\frac{\pi}{2}$, the readout of a solution string can be performed by a single computational basis measurement. However, for $0<\theta< \frac{\pi}{2}$, the \True{}- and \False{}-assignments are no longer encoded into orthogonal states and can therefore no longer be discerned by a single measurement. In fact, the situation has been significantly complicated and requires sophisticated algorithmic techniques. In \cref{subsec:inferring_solution}, we develop such techniques and perform a rigorous analysis of their resource requirements, proving that for arbitrary $\theta \in (0,\frac{\pi}{2})$, a satisfying solution can be inferred using only $\mathcal{O}(\ln(n))$ copies of $\ket{\psi_{\mathrm{out}}}$.

\subsubsection{Resource reduction via parallelization}
The state preparation routine requires executing $m$ clause checks in each cycle. A naive implementation would perform these as $m$ sequential projective measurements. However, the total number of measurements can be substantially reduced by grouping clause checks into layers, where all measurements within such a layer mutually commute. All checks within a single layer can then be implemented simultaneously as a single, larger projective measurement. In \cref{subsec:parallelized_measurements}, we develop a systematic parallelization scheme based on this principle and provide a rigorous analysis of the resulting resource savings.

\subsubsection{Fixed-angle vs. evolving angle approach} \label{subsubsec:fixed evolving}
The algorithm can be implemented using one of two primary strategies for choosing the rotation angle $\theta$: a \emph{fixed-angle} or an \emph{evolving-angle} approach. In the fixed-angle setting, a single, constant $\theta \in (0,\frac{\pi}{2}]$ is chosen and used for the entire state preparation procedure. The algorithm then iteratively drives the quantum state towards the ground space of the corresponding Hamiltonian $H(\theta)$. Alternatively, the evolving-angle approach gradually adjusts the angle over the course of the computation, typically starting from an initial angle $\theta_{\mathrm{init}} > 0$ and scheduling it towards $\theta=\frac{\pi}{2}$. This strategy is reminiscent of an adiabatic-like evolution, aiming to guide the state towards the solution space corresponding to the unrotated encoding. While the evolving-angle scheme shows numerical promise \cite{benjamin2017}, our work focuses on providing a rigorous worst-case analysis of the fixed-angle approach, for which we can precisely characterize the trade-offs governing performance. In \cref{subsubsec:fixed_vs_evolving_discussion}, we argue that analyzing the fixed-angle algorithm is likely sufficient for inferring the exponential factor in the algorithm's runtime.

\subsubsection{Previous numerical findings}
Numerical simulations in Ref.~\cite{benjamin2017} benchmark the algorithm's performance on $3$-SAT instances with up to 34 variables, exploring both fixed- and evolving-angle strategies. In these studies, the evolving-angle approach was reported to be more efficient. A specific cubic schedule was used,
\begin{align}
    \theta(\textrm{cycle c}) = \theta_{\mathrm{init}} + \left(\frac{\pi}{2} - \theta_{\mathrm{init}} \right) \left( \frac{c}{c_Q}\right)^3,
\end{align}
with an initial angle of $\theta_{\mathrm{init}}=0.47\cdot \pi/2$ and $c_Q$ as the target number of cycles. When compared to Schöning's algorithm~\cite{Schoening1999}---a stochastic local search solver related to this approach---the quantum solver demonstrated superior average performance, with a reported runtime scaling of $(1.19)^n$. Two interesting qualitative features were also observed. First, the classical algorithm exhibited a much wider distribution of runtimes across different instances. Second, there was very little correlation between the set of instances that were difficult for the (arguably analogous) classical solver and those that were challenging for the quantum algorithm, suggesting the two approaches have different intrinsic notions of ``algorithmic hardness''.

\subsection{Structure of the worst-case analysis}
To establish a worst-case runtime for the fixed-angle algorithm, we must analyze four key quantities that multiplicatively determine its performance. These components form the structure of our analysis, as presented in the following sections:
\begin{enumerate}[leftmargin=*]
    \item \textbf{Number of cycles for state preparation ($r^*$):} We must determine the number of successful measurement cycles, $r^*$,  required to ensure that the output state vector $\ket{\psi_{\mathrm{out}}}$ is $\epsilon$-close to the ground space. This condition is met when the fidelity with the ground space is sufficiently high, i.e., $\norm{P_{\mathrm{GS}}(\theta)\ket{\psi_{\mathrm{out}}}}_2 \geq 1-\epsilon$, where $P_{\mathrm{GS}}(\theta)$ is the projector onto the ground space of $H(\theta)$. We derive a bound on the number of cycles in \cref{subsec:convergence}.
    \item \textbf{Cumulative success probability ($p_s$):} We need a lower bound on the probability of successfully executing these $r^*$ measurement cycles consecutively. This probability is given by
    \begin{align}
        p_s=\norm{\left(\prod_{i=1}^m C_i(\theta)\right)^{r^*} \ket{\psi_0}}_2^2.
    \end{align}
    We establish a bound on this quantity in \cref{subsec:success_prob}.
    \item \textbf{Number of readout repetitions ($R$):} After a state vector $\ket{\psi_{\mathrm{out}}}$ has been successfully prepared, we must determine the number of copies of $\ket{\psi_{\mathrm{out}}}$, $R$, that are needed to infer a classical solution from the quantum output with high confidence. This procedure 
    is analyzed in \cref{subsec:inferring_solution}.
    \item \textbf{Measurements per cycle ($l$):} Each cycle requires checking all $m$ clauses, which could naively be done in $m$ separate measurements. In \cref{subsec:parallelized_measurements}, we show how to parallelize measurements by grouping them into layers, yielding a much tighter upper bound on $l$, the number of measurements required per cycle.
\end{enumerate}
Throughout the paper, we will take the runtime $T$ to be the number of measurements for \cref{alg:quantum SAT solver}. With the above parameters, the runtime $T$ can be written as
\begin{align}
    T = \mathcal{O} \left( \frac{r^* \cdot R \cdot l}{(p_s)^{1/q}} \right).
\end{align}
For our default algorithm described in \cref{alg:quantum SAT solver}, we have $q=1$. By deferring measurements of the state-preparation routine to the end of the circuit and using additional auxiliary qubits
(see, e.g.,  Ref.~\mbox{\cite{Wilde2013}}) by virtue of using amplitude amplification~\mbox{\cite{Brassard2002}}, the runtime can be quadratically improved to $q=2$. It is worth noting that by using the amplitude amplitude-amplified version of the algorithm we loose the aforementioned practical advantage of a favorable sequence of measurements that might find a solution much faster than indicated by the worst-case runtime. In addition, much longer coherence times are needed to implement the amplitude-amplified version of the algorithm. 

\subsection{Time complexity of the state preparation} \label{subsec:time complexity of state preparation}
We begin with a key lemma that quantifies the overlap between the initial state and any rotated state.

\begin{lemma}[Overlap between the initial state and any rotated state]
\label{lemma:overlap_GS_plus}
Let $\ket{\Theta_{\xx}}$ be the rotated state vector associated to any length-$n$ binary string $\xx$. We find
\begin{align}
        \Abs{ \braket{\Theta_{\xx}|+^{\otimes n}}} = \cos^{n}\left( \frac{\theta}{2} \right) = \left(\frac{1 + \cos(\theta)}{2}\right)^{\frac{n}{2}}
        .\label{eq:overlap_plus}
\end{align}
\end{lemma}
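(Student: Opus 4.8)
The plan is to exploit the fact that both vectors are product states across the $n$ qubits, so the overlap factorizes. By definition $\ket{\Theta_{\xx}} = \bigotimes_{i=1}^n R_Y(L_i\theta)\ket{+}$ with $L_i = \pm 1$, while $\ket{+}^{\otimes n} = \bigotimes_{i=1}^n \ket{+}$. First I would write
\begin{align}
    \braket{\Theta_{\xx}|+^{\otimes n}} = \prod_{i=1}^n \bra{+} R_Y(L_i\theta)^\dagger \ket{+} = \prod_{i=1}^n \bra{+} R_Y(-L_i\theta) \ket{+},
\end{align}
using $R_Y(\alpha)^\dagger = R_Y(-\alpha)$. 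This reduces the whole claim to a single-qubit computation that is then raised to the $n$'th power.

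The core step is to evaluate the single-qubit matrix element $\bra{+} R_Y(\alpha) \ket{+}$. Using the explicit matrix form of $R_Y(\alpha)$ given in the excerpt together with $\ket{+} = \tfrac{1}{\sqrt 2}(1,1)^\transp$, a direct calculation gives $R_Y(\alpha)\ket{+} = \tfrac{1}{\sqrt 2}\bigl(\cos(\alpha/2) - \sin(\alpha/2),\,\cos(\alpha/2) + \sin(\alpha/2)\bigr)^\transp$, and hence
\begin{align}
    \bra{+} R_Y(\alpha) \ket{+} = \cos\!\left(\frac{\alpha}{2}\right),
\end{align}
since the $\sin(\alpha/2)$ contributions cancel. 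Setting $\alpha = -L_i\theta$ and using that cosine is even, each factor equals $\cos(L_i\theta/2) = \cos(\theta/2)$, independently of the sign $L_i$ and therefore independently of the bit $x_i$.

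Finally I would collect the $n$ identical factors to obtain $\braket{\Theta_{\xx}|+^{\otimes n}} = \cos^n(\theta/2)$. Because $\theta \in (0,\tfrac{\pi}{2}]$ this quantity is already real and positive, so taking the absolute value is immediate; the second equality in \cref{eq:overlap_plus} then follows from the half-angle identity $\cos^2(\theta/2) = (1+\cos\theta)/2$ applied inside the $n$'th power. There is essentially no substantive obstacle here: the only points requiring care are the direction of the dagger/sign convention in $R_Y$ and confirming that the result is genuinely independent of $\xx$ (so that the bound holds uniformly for \emph{any} rotated basis state), both of which fall out of the evenness of the cosine.
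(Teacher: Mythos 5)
Your proposal is correct and is precisely the ``direct calculation'' that the paper invokes without spelling out: factorize the overlap over the $n$ tensor factors, evaluate $\bra{+}R_Y(\alpha)\ket{+}=\cos(\alpha/2)$, and use the evenness of cosine so each factor is $\cos(\theta/2)$ regardless of $L_i$. No gaps; the half-angle identity then gives the second equality.
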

\begin{proof}The proof is a direct calculation.
\end{proof}
Regarding commutation relations, 
we note the following.

\begin{lemma}[Commutation relations] \label{lem:commutation_T_P}
For all $\theta \in (0, \frac{\pi}{2})$, 
\begin{align}
    \left[\prod_{i=1}^m C_i(\theta), \calP_\mathrm{GS}(\theta) \right]=0.
\end{align}
\end{lemma}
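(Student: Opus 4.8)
The plan is to reduce the statement to the two one-sided identities $C\,\calP_\mathrm{GS}(\theta) = \calP_\mathrm{GS}(\theta)$ and $\calP_\mathrm{GS}(\theta)\,C = \calP_\mathrm{GS}(\theta)$, where I abbreviate $C \coloneqq \prod_{i=1}^m C_i(\theta)$ for the ordered product that appears inside the commutator. Subtracting these two identities immediately gives $[C,\calP_\mathrm{GS}(\theta)]=0$. The whole argument rests on the frustration-free structure already established above: since the instance is satisfiable, the ground energy of $H(\theta)=\sum_i P_i(\theta)$ is zero, and hence the ground space $\calG$ is precisely the common kernel $\calG = \bigcap_{i=1}^m \ker P_i(\theta) = \bigcap_{i=1}^m \operatorname{im} C_i(\theta)$, which is spanned by the solution states $\{\ket{\Theta_{\xx}}\}_{\xx\in\calS}$.

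First I would establish $C\,\calP_\mathrm{GS}(\theta)=\calP_\mathrm{GS}(\theta)$. For any vector $\ket{\psi}$ the vector $\calP_\mathrm{GS}(\theta)\ket{\psi}$ lies in $\calG$, and every $\ket{g}\in\calG$ satisfies $C_i(\theta)\ket{g}=\ket{g}$ for each $i$, because $\ket{g}\in\operatorname{im} C_i(\theta)$ and $C_i(\theta)$ is an orthogonal projector. Applying the factors of $C$ one after another therefore leaves $\calP_\mathrm{GS}(\theta)\ket{\psi}$ unchanged, so $C\,\calP_\mathrm{GS}(\theta)\ket{\psi}=\calP_\mathrm{GS}(\theta)\ket{\psi}$ for all $\ket{\psi}$, which is the claimed operator identity.

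The second identity requires a small twist, because for $\theta\in(0,\tfrac{\pi}{2})$ the projectors $P_i(\theta)$ do not commute, so $C$ is not Hermitian and the first identity does not simply transpose. I would instead use that each $C_i(\theta)$ is Hermitian, so that $C^\dagger = C_m(\theta)C_{m-1}(\theta)\cdots C_1(\theta)$ is again an ordered product of the very same projectors, merely taken in the opposite order. The argument of the previous paragraph applies verbatim to this reversed product, yielding $C^\dagger\,\calP_\mathrm{GS}(\theta)=\calP_\mathrm{GS}(\theta)$. Taking the adjoint of this equation and using $\calP_\mathrm{GS}(\theta)^\dagger=\calP_\mathrm{GS}(\theta)$ then delivers $\calP_\mathrm{GS}(\theta)\,C=\calP_\mathrm{GS}(\theta)$, as desired.

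There is no serious obstacle here; the only point that genuinely needs care is not to assume that $C$ is Hermitian — this non-commutativity is precisely the feature the whole algorithm exploits — which is exactly why the two one-sided identities must be proved separately rather than one being obtained from the other by transposition. I would also make explicit the standing assumption that the instance is satisfiable, so that the ground space truly coincides with the common kernel $\bigcap_i \operatorname{im} C_i(\theta)$; this is what makes each $C_i(\theta)$ act as the identity on $\calG$ and drives the entire argument.
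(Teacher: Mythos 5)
Your proof is correct and rests on the same key fact as the paper's: frustration-freeness of a satisfiable instance gives $P_i(\theta)\,\calP_\mathrm{GS}(\theta)=\calP_\mathrm{GS}(\theta)\,P_i(\theta)=0$, i.e., each $C_i(\theta)$ acts as the identity on the ground space. The paper packages this slightly more directly---it notes that each individual factor already satisfies $[C_i(\theta),\calP_\mathrm{GS}(\theta)]=0$, so the ordered product commutes automatically, which makes your separate adjoint argument for the right-sided identity $\calP_\mathrm{GS}(\theta)\,C=\calP_\mathrm{GS}(\theta)$ unnecessary---but the two arguments are essentially the same.
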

\begin{proof}
    We have $P_i (\theta)\calP_\mathrm{GS}(\theta)=\calP_\mathrm{GS}(\theta)P_i(\theta)=0$, such that $\left[C_i(\theta),\calP_\mathrm{GS}(\theta) \right]=0$ and thus $[\prod_{i=1}^m C_i(\theta),\calP_\mathrm{GS}(\theta)]=0$.
\end{proof}

\subsubsection{Number of cycles for state preparation}
\label{subsec:convergence}
The method of alternating projections (see, e.g., Ref.~\cite{alternating_projections} for a summary of the results) addresses the question of how fast a product of projectors converges to the intersection of their respective images. 
\begin{proposition}[Alternating projections from Ref.~\cite{escalante2011alternating} (see Eq.~(3.8)), adjusted to our setting]
\label{thrm:alternating_proj}
    Let $M_1, M_2, \dots, M_\ell$ be subspaces 
    of the Hilbert space $\calH$. Denote with $P_{M_1}, P_{M_2}, \dots, P_{M_\ell}$ the orthogonal projectors that map to these subspaces. Further, denote with $\mathcal{P}_M$ the orthogonal projection onto the intersection $M\coloneqq \bigcap_{i=1}^\ell M_i$. Then, for $\ket{x} \in \calH$ and $r\geq 1$,
    \begin{align}
        \norm{\left(\left(\prod_{i=1}^\ell P_{M_i}\right)^r - \mathcal{P}_M\right) \ket{x} }_2 \leq \mu^r \norm{\ket{x}}_2
    \end{align}
    for a constant $\mu < 1$.
\end{proposition}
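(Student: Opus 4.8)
The plan is to reduce the statement to a norm estimate for the single operator $P \coloneqq \prod_{i=1}^\ell P_{M_i}$ compressed to the orthogonal complement of the intersection $M$, and then to show that this compression is a strict contraction. First I would record the elementary but crucial algebraic facts tying $P$ to $\mathcal{P}_M$. Since $M \subseteq M_i$ for every $i$, each projector $P_{M_i}$ acts as the identity on $M$, so $P_{M_i}\mathcal{P}_M = \mathcal{P}_M$; taking adjoints (all projectors are self-adjoint) gives $\mathcal{P}_M P_{M_i} = \mathcal{P}_M$. Chaining these over the product yields
\begin{align}
    P\,\mathcal{P}_M = \mathcal{P}_M = \mathcal{P}_M\, P,
\end{align}
and in particular $M^\perp$ is invariant under $P$, since $\mathcal{P}_M\ket{v}=0$ implies $\mathcal{P}_M P\ket{v} = \mathcal{P}_M \ket{v}=0$.

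Second, I would use these identities to peel off the intersection. Because $\mathcal{P}_M P^r = \mathcal{P}_M = P^r \mathcal{P}_M$ for every $r\ge 1$, a direct expansion gives
\begin{align}
    P^r - \mathcal{P}_M = (\Id - \mathcal{P}_M)\,P^r\,(\Id - \mathcal{P}_M) = T^r (\Id - \mathcal{P}_M),
\end{align}
where $T \coloneqq (\Id-\mathcal{P}_M)P(\Id-\mathcal{P}_M)$ is the compression of $P$ to $M^\perp$ and the second equality uses the invariance of $M^\perp$. Hence, for any $\ket{x}$,
\begin{align}
    \norm{(P^r-\mathcal{P}_M)\ket{x}}_2 \le \norm{T}^r\,\norm{(\Id-\mathcal{P}_M)\ket{x}}_2 \le \norm{T}^r\,\norm{\ket{x}}_2,
\end{align}
so the whole statement reduces to proving $\mu \coloneqq \norm{T} < 1$.

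Third, the heart of the argument is showing that the compression is a \emph{strict} contraction. Since our Hilbert space is finite-dimensional (the $2^n$-dimensional space of $n$ qubits), the unit sphere of $M^\perp$ is compact and $\norm{T}$ is attained at some unit vector $\ket{v}\in M^\perp$, for which $T\ket{v}=P\ket{v}$. Suppose for contradiction that $\norm{T\ket{v}}_2 = 1$. Writing $P$ as its ordered product and applying its factors one at a time, the chain
\begin{align}
    1 = \norm{P_{M_\ell}\cdots P_{M_1}\ket{v}}_2 \le \cdots \le \norm{P_{M_1}\ket{v}}_2 \le \norm{\ket{v}}_2 = 1
\end{align}
must consist entirely of equalities. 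From $\norm{\ket{v}}_2^2 = \norm{P_{M_1}\ket{v}}_2^2 + \norm{(\Id-P_{M_1})\ket{v}}_2^2$ the first equality forces $\ket{v}\in M_1$; propagating stepwise forces $\ket{v}\in M_i$ for all $i$, hence $\ket{v}\in M$, contradicting $0\neq\ket{v}\in M^\perp$. Therefore $\norm{T}<1$, and taking $\mu = \norm{T}$ gives $\norm{T^r}\le\mu^r$, which establishes the claim.

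The step I expect to be the main obstacle is this last one: turning the pointwise norm-non-increase of a product of projections into a strict, uniform gap $\mu<1$. Finite-dimensionality makes this clean via compactness, whereas in the general Hilbert-space formulation of the cited reference the corresponding constant is the cosine of the Friedrichs angle between the subspaces, and one must separately verify that this angle is positive. For our application positivity is automatic, so I would simply invoke the finite-dimensional compactness argument; the only care needed is to make the chained equality analysis rigorous for an arbitrary but fixed ordering of the projectors, the ordering itself being immaterial to the conclusion.
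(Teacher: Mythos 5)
Your proof is correct. The algebraic identities $P\,\mathcal{P}_M=\mathcal{P}_M=\mathcal{P}_M\,P$, the resulting factorization $P^r-\mathcal{P}_M=T^r(\Id-\mathcal{P}_M)$ with $T$ the compression of $P$ to $M^\perp$, and the compactness argument showing $\norm{T}<1$ (via the Pythagorean equality chain forcing any norm-preserved unit vector into $\bigcap_i M_i$) are all sound, and the finite-dimensionality of the $n$-qubit Hilbert space is exactly what licenses the attainment of the supremum. The paper, however, does not prove this proposition at all: it imports it verbatim from the alternating-projections literature (Escalante--Raydan, Eq.~(3.8)), and in its appendix quantifies $\mu$ through the generalized Friedrichs angle of Badea et al.\ and, separately, through the detectability lemma. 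The trade-off is worth noting. Your route is self-contained and elementary, but it is purely existential: it certifies \emph{some} $\mu<1$ for each fixed instance without any handle on how $\mu$ depends on $(\theta,n,m,k)$, and a priori $\mu$ could approach $1$ as $n$ grows. The paper's citation route is chosen precisely because the downstream runtime analysis needs $\ln(\mu^{-1})$ bounded in terms of the spectral gap or the subspace angle; your argument cannot supply that quantitative input, though it does fully establish the proposition as stated. One minor caveat: as you acknowledge, the proposition is phrased for general subspaces of a Hilbert space, where one must additionally assume closedness and positivity of the Friedrichs angle; restricting to the finite-dimensional setting of the paper, your proof is complete.
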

Translated to our setting, this gives
\begin{align}
    \norm{\left( \left(\prod_{i=1}^m C_i(\theta)\right)^{r^*} - \calP_\mathrm{GS}(\theta) \right) \ket{\psi}}_2 \leq \mu^{r^*} \norm{\ket{\psi}}_2
    \label{eq:alternating_proj}
\end{align}
with $\mu = \mu(\theta,n,m,k)$ and $ \calP_\mathrm{GS}(\theta)$ the orthogonal projector onto the ground space of $H(\theta)$. The interpretation of $\mu$ is the \emph{speed of convergence} to the simultaneous ground space. We will further comment on bounds for this quantity in \cref{subsubsec:rate of convergence} and \cref{sec:appendix:Detectability Lemma and Method of Alternating Projections}. Using this, the number of required cycles can be bounded as shown in the following lemma.

\begin{lemma}[Cycle bound]\label{lem:cycle_bound}
    When choosing
    \begin{align}
        r^* = \left\lceil \frac{\ln((\epsilon \cdot \cos^n(\theta/2))^{-1})}{\ln(\mu^{-1})} \right\rceil,
    \end{align}
    the resulting quantum state \begin{align}
    \ket{\psi_\mathrm{out}} = \frac{\left(\prod_{i=1}^m C_i(\theta) \right)^{r^*} \ket{\psi_0}}{\norm{\left(\prod_{i=1}^m C_i(\theta) \right)^{r^*} \ket{\psi_0}}}_2
    \end{align}
    satisfies
    \begin{align}
            \norm{\mathcal{P}_\mathrm{GS}(\theta) \ket{\psi_\mathrm{out}}}_2 \geq 1 - \epsilon
    \end{align} 
    for $\ket{\psi_0} = \ket{+}^{\otimes n}$.
\end{lemma}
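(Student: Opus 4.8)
The plan is to decompose the unnormalized output vector $A^{r^*}\ket{\psi_0}$, with $A \coloneqq \prod_{i=1}^m C_i(\theta)$, into its component inside the ground space and its component in the orthogonal complement, to bound the two separately, and to recombine them after normalization. Writing $a \coloneqq \norm{\calP_\mathrm{GS}(\theta) A^{r^*}\ket{\psi_0}}_2$ and $b \coloneqq \norm{(\Id - \calP_\mathrm{GS}(\theta)) A^{r^*}\ket{\psi_0}}_2$, these two vectors are orthogonal, so by the Pythagorean theorem $\norm{A^{r^*}\ket{\psi_0}}_2^2 = a^2 + b^2$ and hence
\begin{align}
    \norm{\calP_\mathrm{GS}(\theta)\ket{\psi_\mathrm{out}}}_2 = \frac{a}{\sqrt{a^2 + b^2}} = \frac{1}{\sqrt{1 + (b/a)^2}}.
\end{align}
It therefore suffices to show $b/a \leq \epsilon$, since the elementary inequality $1/\sqrt{1+\epsilon^2} \geq 1 - \epsilon$ holds for all $\epsilon \geq 0$.

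For the parallel part, I would first observe that $A$ acts as the identity on the ground space: each $C_i(\theta) = \Id - P_i(\theta)$ fixes every ground state because $P_i(\theta)\calP_\mathrm{GS}(\theta) = 0$, whence $A\,\calP_\mathrm{GS}(\theta) = \calP_\mathrm{GS}(\theta)$ and, iterating, $A^{r^*}\calP_\mathrm{GS}(\theta) = \calP_\mathrm{GS}(\theta)$. Combining this with the commutation relation of \cref{lem:commutation_T_P} gives $\calP_\mathrm{GS}(\theta) A^{r^*} = A^{r^*}\calP_\mathrm{GS}(\theta) = \calP_\mathrm{GS}(\theta)$, so the ground-space component is frozen at its initial value, $a = \norm{\calP_\mathrm{GS}(\theta)\ket{\psi_0}}_2$, independently of $r^*$. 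To lower-bound it I would invoke \cref{lemma:overlap_GS_plus}: for any satisfying assignment $\xx \in \calS$ the normalized state $\ket{\Theta_{\xx}}$ lies in the ground space, and the projection onto a subspace dominates the overlap with any unit vector it contains (by Cauchy--Schwarz, using $\calP_\mathrm{GS}(\theta)\ket{\Theta_{\xx}} = \ket{\Theta_{\xx}}$), giving $a \geq \Abs{\braket{\Theta_{\xx}|+^{\otimes n}}} = \cos^n(\theta/2)$.

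For the perpendicular part, I would apply the projector $\Id - \calP_\mathrm{GS}(\theta)$ to the alternating-projections estimate of \cref{eq:alternating_proj}. Since $(\Id - \calP_\mathrm{GS}(\theta))\calP_\mathrm{GS}(\theta) = 0$, one has $(\Id - \calP_\mathrm{GS}(\theta))A^{r^*}\ket{\psi_0} = (\Id - \calP_\mathrm{GS}(\theta))(A^{r^*} - \calP_\mathrm{GS}(\theta))\ket{\psi_0}$, and because $\Id - \calP_\mathrm{GS}(\theta)$ is a contraction, $b \leq \norm{(A^{r^*} - \calP_\mathrm{GS}(\theta))\ket{\psi_0}}_2 \leq \mu^{r^*}$. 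Substituting the stated value of $r^*$ — which is chosen precisely so that $r^*\ln(\mu^{-1}) \geq \ln((\epsilon\cos^n(\theta/2))^{-1})$, equivalently $\mu^{r^*} \leq \epsilon\cos^n(\theta/2)$ — yields $b \leq \epsilon\cos^n(\theta/2) \leq \epsilon\, a$, so $b/a \leq \epsilon$ and the claim follows.

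The individual steps are routine, and the one point requiring genuine care is the claim that the ground-space component stays \emph{exactly} equal to $\calP_\mathrm{GS}(\theta)\ket{\psi_0}$ throughout the iteration; this invariance $A^{r^*}\calP_\mathrm{GS}(\theta) = \calP_\mathrm{GS}(\theta)$ together with \cref{lem:commutation_T_P} is what keeps the denominator and numerator locked in the clean ratio above, and without it one could only control $\norm{A^{r^*}\ket{\psi_0}}_2$ loosely and the sharp $1-\epsilon$ bound would degrade. A secondary subtlety worth flagging is that the argument presumes the instance is satisfiable, so that $\calS \neq \emptyset$ and \cref{lemma:overlap_GS_plus} can be applied; for an unsatisfiable instance the ground space is trivial and the statement is vacuous.
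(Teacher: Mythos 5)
Your proof is correct and follows essentially the same route as the paper: the same alternating-projections bound with the same choice $\mu^{r^*}\leq\epsilon\cos^n(\theta/2)$, the same invariance $\calP_\mathrm{GS}(\theta)A^{r^*}=\calP_\mathrm{GS}(\theta)$ via \cref{lem:commutation_T_P}, and the same lower bound $\norm{\calP_\mathrm{GS}(\theta)\ket{\psi_0}}_2\geq\cos^n(\theta/2)$ from \cref{lemma:overlap_GS_plus}. The only difference is cosmetic: you combine the pieces with an exact Pythagorean decomposition and $1/\sqrt{1+\epsilon^2}\geq 1-\epsilon$, which is marginally tighter than the paper's triangle-inequality estimate $a/(\eta+a)\geq 1-\eta/a$, but yields the identical conclusion.
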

\begin{proof}
First, we bound the number of cycles $r^*$ that are needed until
\begin{align}
    \norm{\left( \left(\prod_{i=1}^m C_i(\theta) \right)^{r^*} - \mathcal{P}_\mathrm{GS}(\theta) \right) \ket{\psi}}_2 \leq \eta \norm{\ket{\psi}}_2 
    \label{eq:condition_cyclces}
\end{align}
is observed for all quantum state vectors $\ket{\psi}$. The interpretation of this requirement is to bound the number of cycles that are needed until the product of single projectors approximates $\mathcal{P}_\mathrm{GS}(\theta)$ sufficiently well. Using the \emph{method of alternating projections} (see \cref{eq:alternating_proj}) we know that
\begin{align}
    \norm{\left( \left(\prod_{i=1}^m C_i(\theta) \right)^{r^*} - \mathcal{P}_\mathrm{GS}(\theta) \right) \ket{\psi}}_2 \leq \mu^{r^*} \norm{\ket{\psi}}_2
\end{align}
with $\mu<1$. If we choose
\begin{align}
    r^* \geq  \left\lceil \frac{\ln(\eta)}{\ln(\mu)} \right\rceil = \left\lceil \frac{\ln(\eta^{-1})}{\ln(\mu^{-1})} \right\rceil
    \label{eq:necessary_rounds}
\end{align}
we meet the condition in Eq.~(\ref{eq:condition_cyclces}). This raises the question of how to choose $\eta$ in order to ensure that the resulting quantum state vector $\ket{\psi_\mathrm{out}}$ is $\epsilon$-close to the target state. To this end, we lower bound our figure of merit by
\begin{align}
        & \norm{\mathcal{P}_\mathrm{GS}(\theta) \ket{\psi_\mathrm{out}}}_2 \\
        \nonumber
        & = \frac{\norm{\mathcal{P}_\mathrm{GS}(\theta) \left(\prod_{i=1}^m C_i(\theta) \right)^{r^*} \ket{\psi_0}}_2}{\norm{\left(\prod_{i=1}^m C_i(\theta) \right)^{r^*} \ket{\psi_0}}_2}\\
        \nonumber
        & =  \frac{\norm{\mathcal{P}_\mathrm{GS}(\theta) \ket{\psi_0}}_2}{\norm{\left(\left(\prod_{i=1}^m C_i(\theta) \right)^{r^*} - \mathcal{P}_\mathrm{GS}(\theta) + \mathcal{P}_\mathrm{GS}(\theta)\right)\ket{\psi_0}}_2}\\
        \nonumber
         & \geq \frac{\norm{\mathcal{P}_\mathrm{GS}(\theta) \ket{\psi_0}}_2}{\eta + \norm{\mathcal{P}_\mathrm{GS}(\theta) \ket{\psi_0}}_2} \\
         \nonumber
        &\geq 1 - \frac{\eta}{\norm{\mathcal{P}_\mathrm{GS}(\theta) \ket{\psi_0}}_2}.
        \nonumber
    \end{align}
    For the second equality, we have used $\left[\prod_{i=1}^m C_i(\theta), \calP_\mathrm{GS}(\theta) \right]=0$ (\cref{lem:commutation_T_P}) and for the first inequality, the method of alternating projections combined with the triangle inequality is used. Note that
    \begin{align}
        \norm{\mathcal{P}_\mathrm{GS}(\theta) \ket{\psi_0}}_2 & \geq\norm{\proj{\Theta_{\xx}}\ket{\psi_0}}_2 \\
        \nonumber
        & = \cos^n(\theta/2)
    \end{align}
    where we have used for the inequality the fact that $\proj{\Theta_{\xx}}  \subseteq \operatorname{im}(\mathcal{P}_\mathrm{GS}(\theta))$ for any solution-encoding binary string $\xx$. For the equality, the relation from \cref{lemma:overlap_GS_plus} has been  used.
    Therefore,
    \begin{align}
        \norm{\mathcal{P}_\mathrm{GS}(\theta) \ket{\psi_\mathrm{out}}}_2 \geq 1 - \frac{\eta}{\cos^n(\theta/2)}.
    \end{align}
    By choosing $\eta = \epsilon \cdot \cos^n(\theta/2)$, the requirement is met.
\end{proof}

\subsubsection{Cumulative success probability}
\label{subsec:success_prob}
\begin{lemma}[Success probability] \label{lem:success probability}
    When choosing $\ket{\psi_0} = \ket{+}^{\otimes n}$, the overall success probability is bounded by
    \begin{align}
        p_s\left(r^*, \{C_i(\theta) \}_{i=1}^m, \ket{\psi_0} \right) \geq \left(\frac{1 + \cos(\theta)}{2}\right)^n.
    \end{align}
\end{lemma}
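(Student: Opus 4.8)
The plan is to lower bound the squared norm
\begin{align}
    p_s = \norm{\left(\prod_{i=1}^m C_i(\theta)\right)^{r^*}\ket{\psi_0}}_2^2
\end{align}
by isolating the component of $\ket{\psi_0}$ lying in the ground space, which the product of clause checks leaves untouched. Write $A \coloneqq \prod_{i=1}^m C_i(\theta)$ for brevity. First I would observe that each $C_i(\theta)=\Id-P_i(\theta)$ acts as the identity on the ground space: since $P_i(\theta)\calP_\mathrm{GS}(\theta)=0$ (the relation already exploited in \cref{lem:commutation_T_P}), we have $C_i(\theta)\calP_\mathrm{GS}(\theta)=\calP_\mathrm{GS}(\theta)$, and hence $A^{r^*}\calP_\mathrm{GS}(\theta)\ket{\psi_0}=\calP_\mathrm{GS}(\theta)\ket{\psi_0}$.

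Next, I would use the commutation relation $[A,\calP_\mathrm{GS}(\theta)]=0$ from \cref{lem:commutation_T_P} to conclude that both $\operatorname{im}(\calP_\mathrm{GS}(\theta))$ and its orthogonal complement $\operatorname{im}(\Id-\calP_\mathrm{GS}(\theta))$ are invariant under $A$, and therefore under $A^{r^*}$. Decomposing $\ket{\psi_0}=\calP_\mathrm{GS}(\theta)\ket{\psi_0}+(\Id-\calP_\mathrm{GS}(\theta))\ket{\psi_0}$, the two images
\begin{align}
    A^{r^*}\calP_\mathrm{GS}(\theta)\ket{\psi_0} \in \operatorname{im}(\calP_\mathrm{GS}(\theta)), \qquad A^{r^*}(\Id-\calP_\mathrm{GS}(\theta))\ket{\psi_0} \in \operatorname{im}(\Id-\calP_\mathrm{GS}(\theta))
\end{align}
remain orthogonal, so by the Pythagorean theorem
\begin{align}
    p_s = \norm{\calP_\mathrm{GS}(\theta)\ket{\psi_0}}_2^2 + \norm{A^{r^*}(\Id-\calP_\mathrm{GS}(\theta))\ket{\psi_0}}_2^2 \geq \norm{\calP_\mathrm{GS}(\theta)\ket{\psi_0}}_2^2.
\end{align}
This discards the (possibly decaying) complement contribution and reduces the claim to a static overlap bound, independent of $r^*$.

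Finally I would bound the ground-space overlap from below by the overlap with a single solution state. For any $\xx\in\calS$ the rank-one projector $\proj{\Theta_{\xx}}$ satisfies $\operatorname{im}(\proj{\Theta_{\xx}})\subseteq\operatorname{im}(\calP_\mathrm{GS}(\theta))$, so
\begin{align}
    \norm{\calP_\mathrm{GS}(\theta)\ket{\psi_0}}_2^2 \geq \Abs{\braket{\Theta_{\xx}|+^{\otimes n}}}^2 = \cos^{2n}\!\left(\frac{\theta}{2}\right) = \left(\frac{1+\cos(\theta)}{2}\right)^{n},
\end{align}
where the equality is \cref{lemma:overlap_GS_plus}. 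Chaining the two bounds gives the claim. The only genuinely nontrivial step is the orthogonality argument: one must verify that the non-ground-space part of the state stays orthogonal to the ground-space part even after $r^*$ cycles of the full product, which is exactly what the commutation relation of \cref{lem:commutation_T_P} secures; the remainder is bookkeeping and a direct invocation of \cref{lemma:overlap_GS_plus}.
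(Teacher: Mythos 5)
Your proof is correct and reaches the paper's key intermediate bound $p_s \geq \norm{\calP_\mathrm{GS}(\theta)\ket{\psi_0}}_2^2$ by a genuinely different, and arguably cleaner, route. The paper first notes that $p_s(r,\cdot,\cdot)$ is monotonically non-increasing in $r$ (each $C_i(\theta)$ is a contraction) and then passes to the limit $r\to\infty$, where the repeated product converges to $\calP_\mathrm{GS}(\theta)$ by the method of alternating projections (\cref{thrm:alternating_proj}). You instead split $\ket{\psi_0}$ into its ground-space component and its orthogonal complement, use $C_i(\theta)\calP_\mathrm{GS}(\theta)=\calP_\mathrm{GS}(\theta)$ together with the commutation relation of \cref{lem:commutation_T_P} to see that the two components remain in their respective, mutually orthogonal, invariant subspaces after $r^*$ cycles, and then discard the complement term by Pythagoras. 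This is a finite, non-asymptotic argument that requires neither the monotonicity observation nor the convergence theorem---only facts the paper has already established---which is a modest but real simplification. The final step, lower bounding $\norm{\calP_\mathrm{GS}(\theta)\ket{\psi_0}}_2^2$ by $\Abs{\braket{\Theta_{\xx}|+^{\otimes n}}}^2$ for a single solution string $\xx\in\calS$ and invoking \cref{lemma:overlap_GS_plus}, is identical in both proofs.
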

\begin{proof}
It should be clear that $p_s\left(r,(.), (.) \right)$ is monotonically decreasing for a fixed input as $r$ grows. Therefore,
\begin{align}
    &p_s\left(r^*, \{C_i(\theta) \}_{i=1}^m, \ket{\psi_0} \right)\\
    \nonumber
    &\geq \lim_{c\rightarrow\infty} p_s\left(c, \{C_i(\theta) \}_{i=1}^m, \ket{\psi_0} \right)\\
    \nonumber
    &= \lim_{c\rightarrow\infty} \norm{\left(\prod_{i=1}^m C_i(\theta) \right)^{c} \ket{\psi_0}}_2^2\\
    \nonumber
    & = \norm{\mathcal{P}_\mathrm{GS}(\theta)\ket{\psi_0}}_2^2\\
    \nonumber
    & \geq \norm{\ket{\Theta_{\xx}} \braket{\Theta_{\xx}|\psi_0}}_2^2\\
    \nonumber
    &= \left\lvert \braket{\Theta_{\xx}|\psi_0} \right\rvert ^2
    \nonumber
\end{align}
in fact, due to Ref.~\cite{alternating_projections}. In the above, we have used that $\proj{\Theta_{\xx}}  \subseteq \operatorname{im}(\mathcal{P}_\mathrm{GS}(\theta))$
for any binary string $\xx$ that is a solution. This result is not surprising; however, it shows that the success probability ultimately depends on the overlap with the input state. Using \cref{lemma:overlap_GS_plus}, we find the desired result.
\end{proof}

\subsubsection{Rate of convergence}
\label{subsubsec:rate of convergence}
Per \cref{lem:cycle_bound}, the necessary number of cycles $r^*$ depends on the rate of convergence $\mu$. For a specific instance, this quantity can be naturally upper bounded by the spectral gap $\Delta(\theta,n,m,k)$ of the Hamiltonian
\begin{align} \label{eqn:SAT-encoding Hamiltonian}
    H(\theta) = \sum_{i=1}^m P_i(\theta).
\end{align}
To obtain a worst-case bound, we define $\Delta= \Delta(\theta,n,m,k)$ to be the smallest possible lower bound over the family of such $n$ variable $3$-SAT Hamiltonians with fixed angle $\theta$. Via the \emph{detectability lemma} \cite[Lemma 1]{Anshu_2016}, we then obtain the bound
\begin{align} \label{eqn:detectability lemma}
    \mu \leq \frac{1}{\sqrt{\Delta/g^2 + 1}} \leq \left(1 - \frac{\Delta}{4g^2}\right),
\end{align}
where $g$ denotes the maximal number of projectors that any given projector does not commute with. For more details on bounding the rate of convergence via the \emph{detectability lemma}, see \cref{app:DL_details}.
The existing lower bounds for our particular type of Hamiltonian gap are not particularly strong, and finding tighter bounds remains an open challenge. 
As an alternative formulation, we relate the quantity $\mu$ in \cref{sec:alternating_projections} to a bound on the \emph{angle of subspaces}, also known as \emph{Friedrichs angle} \cite{escalante2011alternating,badea_announcement,badea_proofs} and studied in the \emph{method of alternating projections} community. However, proving worst-case bounds on $\mu$ in our setting as a function of $(\theta,n,m,k)$ seems challenging also in this case.
The most important results surrounding both the \emph{detectability lemma} and the \emph{method of alternating projections} are highlighted in \cref{sec:appendix:Detectability Lemma and Method of Alternating Projections}.

\subsubsection{Summary}
We summarize our results so far.
\begin{theorem}[State preparation cost]
\label{thrm:runtime of state preparation} \cref{alg:state preparation} prepares a state vector $\ket{\psi_\mathrm{out}}$ that is $\epsilon$-close to the ground space, satisfying $\norm{\calP_\mathrm{GS}(\theta) \ket{\psi_\mathrm{out}}} \geq 1-\epsilon$. The runtime to achieve this with success probability at least $1-\delta$ are bounded by
\begin{align}
    T_{\mathrm{S}}(\theta,n,m,\epsilon) \leq &m \left\lceil \frac{\ln \left(\epsilon^{-1}\right) + n \ln \left( \left[\cos \left(\frac{\theta}{2} \right)\right]^{-1} \right)}{\ln\left( \mu^{-1}\right)} \right\rceil \nonumber \\ 
    &\, \times \ln\left(\delta^{-1}\right) \left( \frac{2}{1+\cos(\theta)}\right)^{n/q},
\end{align}
where $q=1$ per default and $q=2$ for the amplitude-amplified version.
\end{theorem}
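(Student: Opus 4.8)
The plan is to assemble the worst-case cost from the two quantitative lemmas already established: \cref{lem:cycle_bound}, which fixes the number of successful cycles $r^*$ needed for the output to be $\epsilon$-close to the ground space, and \cref{lem:success probability}, which lower-bounds the probability $p_s$ that a single run of \cref{alg:state preparation} passes every clause check. Since each of the $r^*$ cycles consists of $m$ sequential clause-check measurements, one attempt of the state-preparation routine costs exactly $r^* m$ measurements, and upon success it returns a state obeying $\norm{\calP_\mathrm{GS}(\theta)\ket{\psi_\mathrm{out}}}_2 \geq 1-\epsilon$ by \cref{lem:cycle_bound}. Substituting $\eta = \epsilon\cos^n(\theta/2)$ into the ceiling and expanding $\ln((\epsilon\cos^n(\theta/2))^{-1}) = \ln(\epsilon^{-1}) + n\ln([\cos(\theta/2)]^{-1})$ recovers the numerator appearing in the theorem. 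It therefore remains only to account for the restarts triggered by failed clause checks, which is where the two regimes $q=1$ and $q=2$ differ.

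For the default algorithm ($q=1$), I would treat the independent attempts as Bernoulli trials with success probability $p_s$. Demanding that the probability of failing all $N$ attempts be at most $\delta$ gives $(1-p_s)^N \leq \delta$, so it suffices to take $N = \lceil \ln(\delta^{-1}) / (-\ln(1-p_s)) \rceil$. Using the elementary inequality $-\ln(1-p_s) \geq p_s$ together with the bound $p_s \geq ((1+\cos\theta)/2)^n$ from \cref{lem:success probability} yields $N \leq \ln(\delta^{-1})\,(2/(1+\cos\theta))^n$ up to the ceiling. Multiplying the per-attempt cost $r^* m$ by $N$ then produces exactly the claimed expression with exponent $n/q = n$.

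For the amplitude-amplified version ($q=2$), the key observation is that the restart-conditioned preparation can be made coherent: deferring the clause-check measurements to the end of the circuit and recording each outcome on a fresh ancilla (see Ref.~\cite{Wilde2013}) turns $r^*$ cycles into a single unitary acting on system and ancillas, whose ``all-pass'' branch carries amplitude $\norm{(\prod_{i=1}^m C_i(\theta))^{r^*}\ket{\psi_0}}_2 = \sqrt{p_s}$. Standard amplitude amplification~\cite{Brassard2002} then boosts this branch to constant weight in $\mathcal{O}(1/\sqrt{p_s})$ rounds rather than the $\mathcal{O}(1/p_s)$ classical restarts, each round costing $\mathcal{O}(r^* m)$ measurements; this replaces the exponent $n$ by $n/2$, i.e.\ the factor $(2/(1+\cos\theta))^{n/q}$ with $q=2$, while the $\ln(\delta^{-1})$ confidence factor is retained by a majority-vote argument.

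I expect the genuinely delicate step to be the amplitude-amplification argument rather than the bookkeeping: one must verify that the dilation faithfully reproduces the post-selected state $\ket{\psi_\mathrm{out}}$ on the flagged branch and that its amplitude is precisely $\sqrt{p_s}$ independent of the intermediate outcomes, so that amplifying the flag prepares the correct $\epsilon$-close state. The remaining ingredients---composing the two ceilings, the inequality $-\ln(1-p_s)\geq p_s$, and the algebraic rewriting of the logarithm---are routine.
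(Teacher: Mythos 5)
Your proposal is correct and follows essentially the same route as the paper: combine \cref{lem:cycle_bound} and \cref{lem:success probability}, bound the number of restarts via $(1-p_s)^\kappa \leq \exp(-p_s\kappa)$ (equivalently $-\ln(1-p_s)\geq p_s$), and obtain the $q=2$ case by deferring measurements and applying amplitude amplification. The paper treats the amplitude-amplified case just as briefly as a corollary, so your added caution about verifying the dilation is reasonable but does not reflect a divergence from the paper's argument.
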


\begin{proof}
    Denote with $p_s$ the probability of successfully performing all $m$ clause checks in sequence. The failure probability of trying this $\kappa$ times and remaining unsuccessful is $(1-p_s)^\kappa \leq \exp(-p_s \kappa)$. For this quantity to remain smaller then $\delta$, we thus need $\kappa \geq \ln(\delta^{-1}) p_s^{-1}$. In each of the $\kappa$ attempts, we perform at most $r^*$ cycles of $m$ clause checks. The runtime is therefore upper bounded by $T_S \leq m r^* \ln(\delta^{-1}) p_s^{-1}$. Substituting the requirement for $r^*$ (\cref{lem:cycle_bound}) and the lower bound for $p_s$ (\cref{lem:success probability}) yields the claimed upper bound. The claimed runtime for the amplitude-amplified version follows directly as a corollary. By deferring all measurements until the end of the computation (see, e.g., Ref.~\cite{Wilde2013}), amplitude amplification techniques can be applied (cf.\  Ref.~\cite{Brassard2002}). Since each execution of the non-amplitude-amplified state preparation routine succeeds with probability $((1 + \cos(\theta))/2)^n$, applying amplitude amplification yields the claimed runtime with $q=2$.
\end{proof}

\subsection{Inferring the solution}
In this section, we address the readout routine of the algorithm. To do so, the following lemma will become useful as it shows that an overlap property implies a bound on trace-norm closeness.
\begin{lemma}[Overlaps implying trace-norm closeness] \label{lem:inferring_a_solution:trace_dist_bound}
    Let us denote by $\calP_\mathrm{GS}(\theta)$ the projector onto the ground space. If $\norm{\calP_\mathrm{GS}(\theta)\ket{\psi_\mathrm{out}}}_2 \geq 1-\epsilon$,
    then there is a state vector $\ket{\psi^*} \in \operatorname{im}(\calP_\mathrm{GS}(\theta))$ such that the trace distance between the output state vector $\ket{\psi_\mathrm{out}}$ and $\ket{\psi^*}$ satisfies
    \begin{align}
        D(\ketbra{\psi^*}{\psi^*}, \ketbra{\psi_\mathrm{out}}{\psi_\mathrm{out}}) \leq \sqrt{2\epsilon}.
    \end{align}
\end{lemma}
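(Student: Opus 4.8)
The plan is to exploit the standard relationship between fidelity and trace distance for pure states. Recall that for two pure states the trace distance is given exactly by $D(\ketbra{\phi}{\phi},\ketbra{\psi}{\psi}) = \sqrt{1 - \Abs{\braket{\phi|\psi}}^2}$. The entire lemma will then follow once I identify a suitable $\ket{\psi^*}$ in the ground space whose overlap with $\ket{\psi_\mathrm{out}}$ is controlled by the fidelity hypothesis $\norm{\calP_\mathrm{GS}(\theta)\ket{\psi_\mathrm{out}}}_2 \geq 1-\epsilon$.

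\medskip

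First I would define the natural candidate: let $\ket{\psi^*} \coloneqq \calP_\mathrm{GS}(\theta)\ket{\psi_\mathrm{out}} / \norm{\calP_\mathrm{GS}(\theta)\ket{\psi_\mathrm{out}}}_2$, i.e., the normalized projection of the output state onto the ground space. By construction $\ket{\psi^*} \in \operatorname{im}(\calP_\mathrm{GS}(\theta))$, as required. Since $\calP_\mathrm{GS}(\theta)$ is an orthogonal projector, the overlap is
\begin{align}
    \braket{\psi^*|\psi_\mathrm{out}} = \frac{\braket{\psi_\mathrm{out}|\calP_\mathrm{GS}(\theta)|\psi_\mathrm{out}}}{\norm{\calP_\mathrm{GS}(\theta)\ket{\psi_\mathrm{out}}}_2} = \norm{\calP_\mathrm{GS}(\theta)\ket{\psi_\mathrm{out}}}_2 \geq 1-\epsilon,
\end{align}
using $\calP_\mathrm{GS}^2 = \calP_\mathrm{GS} = \calP_\mathrm{GS}^\dagger$. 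Thus the overlap is real, nonnegative, and bounded below by $1-\epsilon$.

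\medskip

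Next I would plug this into the pure-state trace-distance formula. Writing $F \coloneqq \Abs{\braket{\psi^*|\psi_\mathrm{out}}} \geq 1-\epsilon$, we get
\begin{align}
    D(\ketbra{\psi^*}{\psi^*}, \ketbra{\psi_\mathrm{out}}{\psi_\mathrm{out}}) = \sqrt{1 - F^2} \leq \sqrt{1 - (1-\epsilon)^2} = \sqrt{2\epsilon - \epsilon^2} \leq \sqrt{2\epsilon},
\end{align}
which is exactly the claimed bound. The final inequality just drops the nonnegative $\epsilon^2$ term, and the monotonicity of $t \mapsto \sqrt{1-t^2}$ on $[0,1]$ justifies replacing $F$ by its lower bound $1-\epsilon$ (valid since $1-\epsilon \in [0,1]$ for the relevant small $\epsilon$).

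\medskip

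There is no serious obstacle here; the result is essentially a one-line consequence of the fidelity–trace-distance identity for pure states. The only points demanding minor care are: verifying that $\norm{\calP_\mathrm{GS}(\theta)\ket{\psi_\mathrm{out}}}_2 > 0$ so that the normalization of $\ket{\psi^*}$ is well-defined (guaranteed since $1-\epsilon > 0$ for $\epsilon < 1$), and ensuring $1-\epsilon \in [0,1]$ so that the monotonicity argument applies. Both are implicit in the regime of interest where $\epsilon$ is a small tolerance.
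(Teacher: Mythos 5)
Your proof is correct and follows essentially the same route as the paper: the paper takes $\ket{\psi^*}$ to be the maximizer of $|\braket{\psi|\psi_\mathrm{out}}|$ over the ground space (which is exactly your normalized projection $\calP_\mathrm{GS}(\theta)\ket{\psi_\mathrm{out}}/\norm{\calP_\mathrm{GS}(\theta)\ket{\psi_\mathrm{out}}}_2$), and then applies the same pure-state fidelity--trace-distance identity and the bound $\sqrt{1-(1-\epsilon)^2}\leq\sqrt{2\epsilon}$. Your version is merely slightly more explicit about the construction and the well-definedness of the normalization.
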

\begin{proof}
As $\calP_\mathrm{GS}(\theta)$ is the orthogonal projector onto the ground space, we have
    \begin{align}
        \norm{\calP_\mathrm{GS}(\theta) \ket{\psi_\mathrm{out}}}_2 &= \max_{\ket{\psi} \in \operatorname{im}(\calP_\mathrm{GS}(\theta))} |\braket{\psi|\psi_\mathrm{out}}|.
    \end{align}
    Denote with $\ket{\psi^*}$ the (not necessarily 
     unique) state vector that attains this maximum. Then we have 
    \begin{align}
        D(\ketbra{\psi^*}{\psi^*}, \ketbra{\psi_\mathrm{out}}{\psi_\mathrm{out}}) &= \sqrt{1-|\braket{\psi^*|\psi_\mathrm{out}}|^2}\\
        \nonumber
        &= \sqrt{1 - \norm{\calP_\mathrm{GS}(\theta) \ket{\psi_\mathrm{out}}}_2^2} \\
        \nonumber
        &\leq \sqrt{2\epsilon}.
        \nonumber
    \end{align}
\end{proof}

We want to learn the solution state vector $\ket{\psi^*}$ from measurements of $\ket{\psi_\mathrm{out}}$. \cref{lem:inferring_a_solution:trace_dist_bound} tells us how much the measurements obtained from $\ket{\psi_\mathrm{out}}$ deviate from the ones we would have obtained had we perfectly prepared $\ket{\psi^*}$.

\label{subsec:inferring_solution}
\subsubsection{Unique solution}
The readout scheme in case of a promised unique solution is already described in Ref.~\cite{benjamin2017}, albeit without a completely rigorous resource analysis. Here, we restate their approach (\cref{alg:unique_solution_readout}) and derive a rigorous bound on $R$, taking into account that our final state is only $1-\epsilon$ close to the ground space in fidelity. A unique solution implies that $\ket{\psi^*}$ is a product state vector and the measurement outcomes on each qubit are uncorrelated, allowing us to treat them in isolation. This also holds approximately true for $\ket{\psi_\mathrm{out}}$. Now, prepare and measure the ground state repeatedly and assign each qubit to \True{} or \False{} based on a majority vote on the outcomes. We summarize this procedure in \cref{alg:unique_solution_readout} and provide rigorous performance guarantees in \cref{thrm:unique_solution_readout}.

\begin{algorithm}[H] 
\caption{Unique solution readout}
\label{alg:unique_solution_readout}
    \begin{algorithmic}
        \Require \#variables $n$, angle $\theta$, failure probability $\delta$
        \Ensure Satisfying assignment $\myvec{s}$ 
        \State \hrulefill
        \State $\bb \gets \myvec{0}_{n}$ \Comment{$\bb \in \bbZ^n$, stores measurement outcomes}
        \State  $\myvec{s} \gets \myvec{0}_n$ \Comment{$\myvec{s} \in \{0,1\}^n$, stores solution}
        \State $\epsilon \gets \frac{\left(1-\frac{1}{\sqrt{2}}\right)^2}{8} \sin^2(\theta)$ \Comment{see \cref{thrm:unique_solution_readout}}
        \State $R \gets 2\ln \left(\frac{n}{\delta}\right) \left(\ln\left(\frac{2}{1+\cos^2(\theta)}\right) \right)^{-1}$ \Comment{see \cref{thrm:unique_solution_readout}}
        \For{$i$ in $1:R$}
        \State prepare $\ket{\psi_\mathrm{out}}$ with tolerance $\epsilon$\Comment{see \cref{alg:state preparation}}
        \For{$j$ in $1:n$} 
        \State $b_j \mathrel{+}= \textsc{Measure}(\ket{\psi_\mathrm{out}},Z_j)$ \Comment{single shot}
        \EndFor 
        \EndFor
        \For{$i$ in $1:n$} \Comment{majority vote on each qubit}
        \If{$\mathrm{sgn}(b_i) = +1$} 
        \State $s_i \gets 0$ 
        \Else 
        \State $s_i \gets 1$ 
        \EndIf
        \EndFor
        \State \textbf{return} $\myvec{s}$
    \end{algorithmic}
\end{algorithm}

\begin{theorem}[Unique solution readout]
\label{thrm:unique_solution_readout}
Choosing
\begin{align}
    \epsilon = \frac{\left(1-\frac{1}{\sqrt{2}}\right)^2}{8} \sin^2(\theta),
\end{align}
\cref{alg:unique_solution_readout} succeeds with probability at least $1-\delta$ using 
\begin{align}
    R \geq \frac{2\ln \left(\frac{n}{\delta}\right)}{\ln\left(\frac{2}{1+\cos^2(\theta)}\right)}
\end{align}
copies of $\ket{\psi_\mathrm{out}}$.
\end{theorem}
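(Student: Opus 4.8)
The plan is to reduce correctness of the per-qubit majority vote to a Chernoff estimate, where the only quantum inputs are the single-qubit $Z$-statistics of the ideal solution state and the trace-distance guarantee of \cref{lem:inferring_a_solution:trace_dist_bound}. First I would use the uniqueness promise: the ground space is one-dimensional and spanned by the product state vector $\ket{\Theta_{\xx}}=\bigotimes_{i=1}^n R_Y(L_i\theta)\ket{+}$ for the unique solution $\xx\in\calS$, so the maximizing vector $\ket{\psi^*}$ furnished by \cref{lem:inferring_a_solution:trace_dist_bound} equals $\ket{\Theta_{\xx}}$. A direct computation of $\braket{0|R_Y(\pm\theta)|+}$ then shows that a $Z$-measurement on qubit $i$ of $\ket{\Theta_{\xx}}$ returns the correct Boolean value (outcome $+1$ for $x_i=0$, outcome $-1$ for $x_i=1$) with probability exactly $\tfrac{1+\sin\theta}{2}$.

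Next I would transfer this bias to the actually prepared $\ket{\psi_\mathrm{out}}$. Since the $+1$-outcome of $Z$ on qubit $i$ corresponds to the effect $\ket{0}\!\bra{0}_i\otimes\Id$, and trace distance equals the maximal gap in outcome probabilities over all effects, \cref{lem:inferring_a_solution:trace_dist_bound} implies that the correct-bit probability changes by at most $\sqrt{2\epsilon}$. With the stated choice $\epsilon=\tfrac{(1-1/\sqrt2)^2}{8}\sin^2\theta$ one has $\sqrt{2\epsilon}=\tfrac12\bigl(1-\tfrac{1}{\sqrt2}\bigr)\sin\theta$, so the probability $p$ of reading the correct bit on each qubit of $\ket{\psi_\mathrm{out}}$ satisfies $p\ge\tfrac{1+\sin\theta}{2}-\sqrt{2\epsilon}=\tfrac12+\tfrac{\sin\theta}{2\sqrt2}$; the purpose of this particular $\epsilon$ is precisely that the perturbed bias collapses to this clean form.

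The core step is the Chernoff bound for majority voting. For a fixed qubit the $R$ outcomes are i.i.d.\ Bernoulli$(p)$ (the $R$ copies of $\ket{\psi_\mathrm{out}}$ are independent), and the optimized Chernoff bound over the tilt parameter gives $\Pr[\text{majority wrong}]\le\bigl(2\sqrt{p(1-p)}\bigr)^{R}=(4p(1-p))^{R/2}$. Because $4p(1-p)=1-4(p-\tfrac12)^2$ is decreasing in the bias, the lower bound on $p$ yields $4p(1-p)\le 1-\tfrac{\sin^2\theta}{2}=\tfrac{1+\cos^2\theta}{2}$. A union bound over the $n$ qubits gives overall failure probability at most $n\bigl(\tfrac{1+\cos^2\theta}{2}\bigr)^{R/2}$, and requiring this to be at most $\delta$ and solving for $R$ reproduces $R\ge 2\ln(n/\delta)/\ln\bigl(2/(1+\cos^2\theta)\bigr)$.

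I expect the main obstacle to be twofold. First, one must ensure that the single-qubit statistics of $\ket{\psi_\mathrm{out}}$---not of the ideal $\ket{\psi^*}$---drive the Bernoulli model; this is handled by contraction of trace distance under partial trace, so each qubit's marginal is $\sqrt{2\epsilon}$-close and the per-outcome perturbation bound applies uniformly. Second, one must invoke the sharp multiplicative bound $(2\sqrt{p(1-p)})^{R}$ rather than a cruder additive (Hoeffding) estimate, since only the former produces the exact denominator $\ln\bigl(2/(1+\cos^2\theta)\bigr)$; a Hoeffding bound would instead give a $\sin^{-2}\theta$-type denominator. The calibration of $\epsilon$ is what ties these pieces together, turning the perturbed success probability into the advertised closed form.
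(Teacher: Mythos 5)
Your proposal is correct and follows essentially the same route as the paper's proof: the ideal per-qubit bias $\tfrac{1}{2}(1+\sin\theta)$, the $\sqrt{2\epsilon}$ degradation via \cref{lem:inferring_a_solution:trace_dist_bound}, and a Chernoff-type bound on the majority vote whose exponent $(4p(1-p))^{R/2}$ is exactly the paper's $\exp(-R\,\infdiv*{1/2}{p})$. Your treatment is slightly more explicit than the paper's in spelling out the union bound over the $n$ qubits and the contraction of trace distance under partial trace, but these are presentational refinements rather than a different argument.
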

\begin{proof}
    Upon measuring $\ket{\psi^*}$ in the $z$-basis, the correct value is returned with probability at least $\tilde{p}=\frac{1}{2}\left(1+\sin(\theta)\right)$. According to \cref{lem:inferring_a_solution:trace_dist_bound}, the actual state we prepare $\ket{\psi_\mathrm{out}}$ is at least $\sqrt{2\epsilon}$-close to $\ket{\psi^*}$ in trace distance, such that the correct truth value is returned with probability at least $p=\frac{1}{2}\left(1+\sin(\theta)\right)-\sqrt{2\epsilon}$. Performing a majority vote, we effectively estimate the value of $p$. \cref{alg:unique_solution_readout} fails if we observe the incorrect outcome more than half of the time. This corresponds to the estimate $\hat{p}$ of $p$ being smaller than $\frac{1}{2}$. This failure probability is upper-bounded by 
    \begin{align} \label{eqn:inferring_solution_bernoulli_concentration}
        \bbP \left[\hat{p}\leq \frac{1}{2} \right] < \exp \left(-R \cdot  \infdiv*{1/2}{p} \right),
    \end{align}
    where $\infdiv*{\alpha}{\beta}$ denotes the KL-divergence between two Bernoulli distributions with parameters $\alpha$ and $\beta$. By computing the KL-divergence as
    \begin{align}
        \infdiv*{1/2}{p} &= \frac{1}{2} \ln \left( \frac{1}{4p(1-p)} \right) \\
        &= \frac{1}{2} \ln \left( \frac{2}{1+\cos^2(\theta)} \right),\nonumber
    \end{align}
    the required bound on $R$ follows directly by substituting into \cref{eqn:inferring_solution_bernoulli_concentration}.
\end{proof}

\subsubsection{Multiple solutions}

In the above case of a promised unique solution, we were able to exploit the (approximate) tensor-product structure of $\ket{\psi_\mathrm{out}}$ to read out the satisfying assignment efficiently. This strategy is no longer possible if there are multiple satisfying assignments. Let us denote by $\calS$ the set of all solution-encoding binary strings. Then, $\ket{\psi^*}$ might be an arbitrary superposition of solution state vectors $\ket{\Theta_{\xx}}$ for $\xx \in \calS$, for which \cref{alg:unique_solution_readout} fails. Instead, we present an iterative strategy that fixes the truth values of the qubits one by one. To this end, note that upon performing a $Z$-measurement on the first qubit, the expectation value is in the interval $[-\sin(\theta), \sin(\theta)]$. Writing $\ket{\psi^*}$ as a superposition of solution state vectors,
\begin{align}
    \ket{\psi^*} = \sum_{\xx \in \calS} \alpha_{\xx} \ket{\Theta_{\xx}},
\end{align}
the outcome $\braket{Z_1}=-\sin(\theta)$ only occurs if no solution string with $\ket{\theta}$ on the first qubit contributes to the superposition, corresponding to no solution string with the first bit set to $\True{}$. Similarly, $\braket{Z_1}=\sin(\theta)$ only occurs if all contributing solution strings have $\ket{\theta}$ on the first qubit. 

With this in mind, our scheme works as follows: Starting with the first qubit, we estimate $\braket{Z_1}$ to a precision that allows us to \emph{exclude} either that $\braket{Z_1}=-\sin(\theta)$ or that $\braket{Z_1}=\sin(\theta)$. If we can guarantee that $\braket{Z_1}\neq -\sin(\theta)$, then we can safely fix the first bit to $\True{}$, as it is guaranteed that there are solutions consistent with this assignment. If, instead, we can guarantee that $\braket{Z_1}\neq\sin(\theta)$, we fix the first bit to $\False$ by analogous reasoning. We depict this argument in \cref{fig:readout_multiple_solutions}. 

\begin{figure}
    \begin{center}
    \includegraphics[width = 0.95\linewidth]{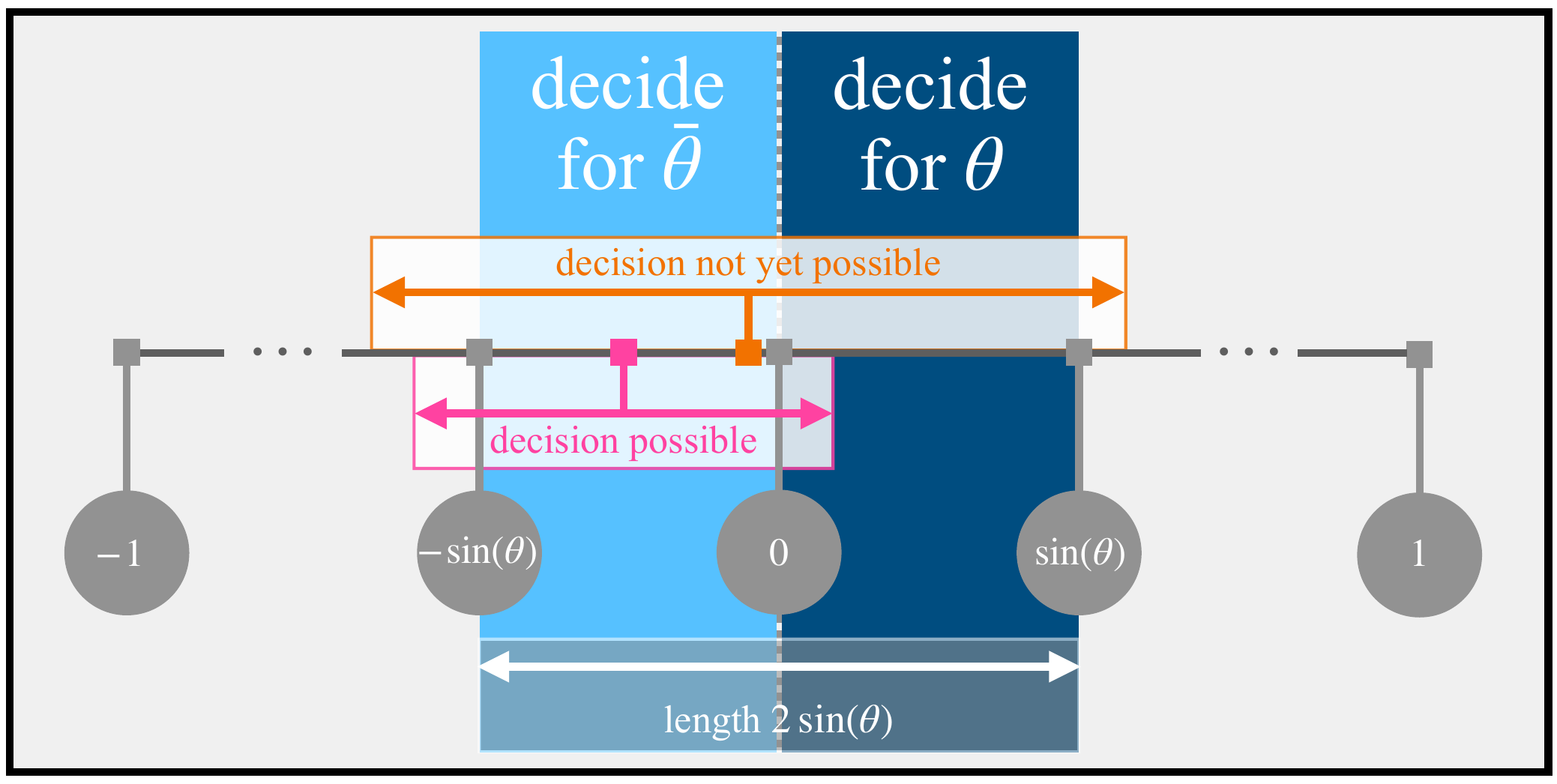}
    \end{center}
    \caption{Decision process for the proposed readout. If the estimate upon performing a $Z$-measurement on one of the qubits excludes for sure either $-\sin(\theta)$ or $\sin(\theta)$, then decide for the corresponding assignment variable. To be precise: if $-\sin(\theta)$ is excluded, then propagate \True{} on corresponding variable, if $\sin(\theta)$ is excluded, then propagate for \False{} on variable $x_i$. In the orange-colored case, we are not yet in a clear regime. Therefore, we need more shots to minimize the error bar. In the pink-colored regime, we can be relatively sure that \True{} is the right choice. Therefore, fix the variable.}
    \label{fig:readout_multiple_solutions}
\end{figure}

To \emph{hardcode} this fixed variable in the Hamiltonian, we \emph{propagate a variable} through the SAT formula by fixing its value and simplifying the resulting clauses. We illustrate this in \cref{alg:clause-propagation}.

\begin{algorithm}[H]
\caption{Propagation of a fixed variable through a SAT formula}
\label{alg:clause-propagation}
\begin{algorithmic}
\Require SAT formula $\phi$ with variables $\{x_1,\dots,x_n\}$, fixed assignment $x_j = v \in \{\True{},\False{}\}$
\Ensure Simplified SAT formula $\phi'$
\State \hrulefill
\State $\phi' \gets \emptyset$ \Comment{empty SAT formula}
\For{each clause $c_i$ in $\phi$}
    \If{$c_i$ is satisfied under $x_j = v$}
        \State remove $c_i$ from $\phi$ \Comment{satisfied clause is discarded}
    \ElsIf{$c_i$ contains literal $x_j$ that is false under $v$}
        \State remove literal $x_j$ from $c_i$ \Comment{clause is simplified}
        \State add simplified $c_i$ to $\phi'$
    \Else
        \State add $c_i$ unchanged to $\phi'$
    \EndIf
\EndFor
\State \Return $\phi'$
\end{algorithmic}
\end{algorithm}

We now repeat this decision process using the updated SAT formula to assign a truth value to the next bit.
This process is repeated for all $n$ variables until all bits have fixed truth values, thus giving a solution to the original SAT instance.
Note that the dimension of the ground space reduces by propagating choices on the assignment of certain variables further and further. The overall decision procedure is summarized in \cref{alg:multiple_solution_readout} and rigorous performance guarantees are provided in \cref{thrm:multiple_solution_readout}. In each iteration of the outer for-loop, the number of variables---and therefore the number of required qubits---is reduced by one. 

\begin{algorithm}[H]
\caption{Multiple solution readout}
\label{alg:multiple_solution_readout}
    \begin{algorithmic}
        \Require SAT formula $\phi$ with $n$ variables, angle $\theta$, failure probability $\delta$
        \Ensure Satisfying assignment $\myvec{s}$
        \State \hrulefill
        \State $\myvec{s} \gets \myvec{0}_n$ \Comment{$\myvec{s} \in \{0,1\}^n$, stores solution}
        \State $\epsilon \gets \frac{\sin^2(\theta)}{8}$ \Comment{see \cref{thrm:multiple_solution_readout}}
        \State $r \gets 8\ln\left(\frac{2n}{\delta}\right)\sin^{-2}(\theta)$ \Comment{see \cref{thrm:multiple_solution_readout}}
        \For{$i$ in $1:n$}
        \State $p \gets 0$
        \For{$j$ in $1:r$}
        \State prepare $\ket{\psi_\mathrm{out}}$ with tolerance $\epsilon$ \Comment{see \cref{alg:state preparation}}
        \State $p \mathrel{+}= \textsc{Measure} \left( \ket{\psi_\mathrm{out}},Z_j \right)$ \Comment{single shot}
        \EndFor
        \State $p \mathrel{/}= r$ \Comment{sufficiently close estimate of $\braket{Z_i}$}
        \If{$|p+\sin(\theta)| \leq |p-\sin(\theta)|$} \Comment{$\braket{Z_i}=\sin(\theta)$ ruled out}
        \State $s_i \gets 0$
        \Else \Comment{$\braket{Z_i}=-\sin(\theta)$ ruled out}
        \State $s_i \gets 1$
        \EndIf
        \State update $\phi$ according to $s_i$ \Comment{see \cref{alg:clause-propagation}}
        \EndFor
        \State \Return $\myvec{s}$
    \end{algorithmic}
\end{algorithm}

\begin{theorem}[Multiple solution readout] \label{thrm:multiple_solution_readout}
    Choosing 
    \begin{align}
        \epsilon \coloneqq \frac{\sin^2(\theta)}{8},
    \end{align}
    \cref{alg:multiple_solution_readout} succeeds with probability at least $1-\delta$ using 
    \begin{align}
        R \geq 8 \frac{n \ln \left(\frac{2n}{\delta}\right)}{\sin^2(\theta)}
    \end{align}
    copies of $\ket{\psi_\mathrm{out}}$.
\end{theorem}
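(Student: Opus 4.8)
The plan is to treat the $n$ iterations of the outer loop as $n$ sequential binary decisions, show that each individual decision errs with probability at most $\delta/n$, and conclude by a union bound that the full assignment is correct with probability at least $1-\delta$. The engine driving a single decision is a concentration estimate for the empirical mean $p$ of $r$ single-shot $Z_i$-measurements, combined with the trace-distance guarantee of \cref{lem:inferring_a_solution:trace_dist_bound}. The invariant I would carry through the loop is that after fixing $s_1,\dots,s_i$ and propagating them via \cref{alg:clause-propagation}, the reduced formula remains satisfiable; the output is then a genuine solution once all $n$ bits are fixed.

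First I would pin down the only way a single decision can fail. Writing the ideal ground-space state as $\ket{\psi^*}=\sum_{\xx\in\calS}\alpha_\xx\ket{\Theta_\xx}$ and splitting qubit $i$ as $\ket{\psi^*}=\ket{\theta}_i\otimes\ket{A}+\ket{\bar\theta}_i\otimes\ket{B}$, a short computation using $\braket{\theta|Z|\theta}=-\sin(\theta)$, $\braket{\bar\theta|Z|\bar\theta}=+\sin(\theta)$ and the vanishing cross term $\braket{\theta|Z|\bar\theta}=0$ gives $\braket{Z_i}_{\psi^*}=\sin(\theta)\,(\|B\|_2^2-\|A\|_2^2)$. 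Hence, when every surviving solution agrees on bit $i$ (i.e.\ $A=0$ or $B=0$), $\braket{Z_i}_{\psi^*}$ sits exactly at an endpoint $\pm\sin(\theta)$, at distance $\sin(\theta)$ from the decision threshold $0$; and when both values occur among the solutions, either choice of $s_i$ is consistent with some solution, so no propagation error can arise. Therefore a decision can only be wrong when all solutions force one value yet the estimate lands on the wrong side of $0$, and the error budget I must beat is the full margin $\sin(\theta)$.

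Next I would split this margin into an approximation part and a statistical part. By \cref{lem:inferring_a_solution:trace_dist_bound} the prepared $\ket{\psi_\mathrm{out}}$ is $\sqrt{2\epsilon}$-close in trace distance to some $\ket{\psi^*}$ in the ground space, so the measured expectation obeys a bound of the form $|\braket{Z_i}_{\psi_\mathrm{out}}-\braket{Z_i}_{\psi^*}|\lesssim\sqrt{2\epsilon}$; choosing $\epsilon=\sin^2(\theta)/8$ is intended to make this shift at most $\sin(\theta)/2$, using up half the margin. The remaining half, $\sin(\theta)/2$, must be covered by the statistical fluctuation of $p$ around $\braket{Z_i}_{\psi_\mathrm{out}}$. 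Since $p$ is the mean of $r$ bounded $\pm1$ outcomes, Hoeffding's inequality gives $\bbP[\,|p-\braket{Z_i}_{\psi_\mathrm{out}}|\ge\sin(\theta)/2\,]\le 2\exp(-r\sin^2(\theta)/8)$. Demanding that this be at most $\delta/n$ yields exactly $r\ge 8\ln(2n/\delta)/\sin^2(\theta)$, matching the stated choice, and summing the $r$ copies over the $n$ outer iterations gives the claimed total $R=nr\ge 8n\ln(2n/\delta)/\sin^2(\theta)$. A union bound over the $n$ decisions then certifies that all of them succeed with probability at least $1-\delta$, at which point the invariant guarantees that the returned string $\myvec{s}$ satisfies $\phi$.

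The step I expect to be most delicate is the balancing of the two error sources: the rigorous observable-shift bound for a $\pm1$ operator carries a factor $2\sqrt{2\epsilon}$ rather than $\sqrt{2\epsilon}$, so I would need to track this constant carefully (tightening $\epsilon$, or re-apportioning the margin between approximation and statistics) to keep both contributions strictly below $\sin(\theta)$ while still landing on the stated $\epsilon$ and $r$. A second point requiring care is that the endpoint characterization $\braket{Z_i}_{\psi^*}=\pm\sin(\theta)$ and the $\epsilon$-closeness must be re-established at every propagation step for the progressively smaller instance produced by \cref{alg:clause-propagation}, since each re-preparation targets the ground space of a different reduced Hamiltonian $H'(\theta)$. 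This is precisely where the argument genuinely uses that fixing a bit consistently with a surviving solution preserves both satisfiability and the structural facts above, so that the single-variable estimate can be applied anew on the reduced system.
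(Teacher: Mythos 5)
Your proposal matches the paper's proof essentially step for step: the same failure-mode analysis at the endpoints $\pm\sin(\theta)$ (an inconsistent assignment is only possible when the true expectation sits at a boundary), the same split of the margin into a $\sqrt{2\epsilon}=\sin(\theta)/2$ trace-distance contribution from \cref{lem:inferring_a_solution:trace_dist_bound} and a $\sin(\theta)/2$ Hoeffding contribution with per-qubit failure probability $\delta/n$ yielding $r \geq 8\ln(2n/\delta)/\sin^2(\theta)$, and the same union bound giving $R=nr$. The factor-of-two subtlety you flag in converting trace distance into a shift of $\braket{Z_i}$ is real and is in fact glossed over in the paper's own proof, which silently uses $|\braket{Z_i}_{\psi_\mathrm{out}}-\braket{Z_i}_{\psi^*}|\leq\sqrt{2\epsilon}$ where the rigorous bound for a $\pm 1$-valued observable is $2\sqrt{2\epsilon}$; so your proposal is, if anything, more careful than the original on this constant.
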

\begin{proof}
    \cref{alg:multiple_solution_readout} fails if we make a truth value assignment that is inconsistent with all possible satisfying assignments of the instance. This happens if either $\braket{Z_i}=-\sin(\theta)$ and we set the $i$'th bit to $\True$ or if $\braket{Z_i}=\sin(\theta)$ and we set the $i$'th bit to $\False$. Measuring $\ket{\psi^*}$ directly, these failure modes are prevented if $\braket{Z_i}$ is estimated to precision $\epsilon'<\sin(\theta)$, such that we can rule out either $\braket{Z_i}=-\sin(\theta)$ or $\braket{Z_i}=\sin(\theta)$. Practically, we can only infer information about $\ket{\psi^*}$ from measurement of $\ket{\psi_\mathrm{out}}$, which is $\sqrt{2 \epsilon} =\frac{1}{2}\sin(\theta)$-close in trace distance (\cref{lem:inferring_a_solution:trace_dist_bound}). Thus, to rule out at least one boundary of the interval $[-\sin(\theta), \sin(\theta)]$, we need to measure $\braket{\psi_{\mathrm{out}}|Z_i|\psi_{\mathrm{out}}}$
    to precision $\epsilon''<\frac{1}{2} \sin(\theta)$, such that $\epsilon' < \sin(\theta)$ for the estimate of $\braket{Z_i}=\braket{\psi^*|Z_i|\psi^*}$ is ensured. We apply Hoeffding's inequality to obtain precision $\epsilon''$ with failure probability at most $\delta/ n$ 
    using 
    \begin{align}
        r \geq 8 \frac{ \ln \left(\frac{2n}{\delta}\right)}{\sin^2(\theta)}
    \end{align}
    rounds. We repeat this process for all $n$ qubits, applying the union bound over the individual failure probabilities. This yields an overall failure probability of at most $\delta$ after $R=nr$ rounds.
\end{proof}

Note that the multiple-solution readout matches the asymptotic performance of the unique-solution readout in the rotation angle in the limit of small angles $\theta$, i.e., we have $R=\calO(1/\theta^2)$ in both cases.
In each iteration of the outer loop of \cref{alg:multiple_solution_readout}, the Hamiltonian encoding the underlying SAT formula is updated. Let us denote by $H_j(\theta)$ the Hamiltonian at the $j$'th step, i.e., after fixing the $j$'th variable. With the Hamiltonian gap controlling the convergence rate, one might be worried that this complicates the analysis of this quantity. We show that it suffices the analyze the so-called \emph{uniform spectral gap}, as it lower bounds the gap of any updated Hamiltonian (\cref{theorem:uniform_gap_is_a_lower_bound}).
\begin{definition}[Uniform spectral gap]
    For a given Hamiltonian $H =\sum_{i \in [m]} \Pi_i$ with $\Pi_i$ being projectors, we denote by $\Delta(H)$ its spectral gap. Following Ref.~\cite{Gilyen2017}, the \emph{uniform spectral gap} of $H$ is defined as
\begin{align}
    \Delta_{\mathrm{uni}}(H) \coloneq \min_{\calI \subseteq [m]} \Delta\left(\sum_{i\in \calI} \Pi_i \right).
\end{align}
\end{definition}
\begin{theorem}[Uniform spectral gap as a lower bound]
\label{theorem:uniform_gap_is_a_lower_bound}
    The spectral gap of the Hamiltonian $H_j(\theta)$ at the $j$'th step of \cref{alg:multiple_solution_readout} is lower bounded by the uniform gap $\Delta_{\mathrm{uni}}(H(\theta,n,m,k))$ where $H(\theta,n,m,k)$ is the Hamiltonian associated with the SAT-instance.
\end{theorem}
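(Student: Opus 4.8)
The plan is to exhibit $H_j(\theta)$ as a diagonal block of a \emph{sub-sum} $H_\calI \coloneqq \sum_{i\in\calI} P_i(\theta)$ of the original Hamiltonian, where $\calI\subseteq[m]$ is the set of clauses that \emph{survive} the variable fixings performed up to step $j$ (i.e.\ the clauses not discarded by \cref{alg:clause-propagation}). Since $\calI$ is a subset of the clause index set, $H_\calI$ is literally one of the operators ranged over in the minimization defining $\Delta_\mathrm{uni}$, so $\Delta(H_\calI)\ge\Delta_\mathrm{uni}(H(\theta,n,m,k))$ holds by definition. The remaining and substantive task is to transfer this gap from $H_\calI$ down to $H_j(\theta)$.

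First I would track how clause propagation acts on the projectors. Let $F\subseteq[n]$ be the set of variables fixed up to step $j$, with chosen values $v_\ell$. A clause is removed precisely when one of its fixed literals is satisfied, and otherwise survives into $\calI$ with every fixed literal falsified. The key structural point is that, for each fixed variable $\ell$, all surviving clauses containing $\ell$ must use the literal of the \emph{same} polarity: a literal on $\ell$ that is falsified by $v_\ell$ has its polarity uniquely determined by $v_\ell$ (clauses using the opposite polarity are satisfied and hence removed). Consequently each surviving projector factorizes as
\begin{align*}
    P_i(\theta) = \Bigl(\bigotimes_{\ell\in F\cap c_i}(\proj{\chi_\ell})_\ell\Bigr)\otimes P_i'(\theta),
\end{align*}
where $\ket{\chi_\ell}\in\{\ket{\theta_\ell^\perp},\ket{\bar\theta_\ell^\perp}\}$ is a \emph{single} state per fixed variable $\ell$, independent of $i$, and $P_i'(\theta)$ acts only on the unfixed clause qubits. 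Using $\braket{\bar\theta|\theta^\perp}=\braket{\theta|\bar\theta^\perp}=\sin\theta$ one verifies that $P_i'(\theta)$ is exactly the clause projector of the propagated formula, so that $\sum_{i\in\calI}P_i'(\theta)=H_j(\theta)$.

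Next I would diagonalize the fixed-qubit factor. For each $\ell\in F$, complete $\ket{\chi_\ell}$ to an orthonormal qubit basis $\{\ket{\chi_\ell},\ket{\chi_\ell^\perp}\}$. Because every surviving projector carries only $\proj{\chi_\ell}$ or $\Id$ on each fixed qubit, and both are diagonal in this basis, $H_\calI$ is block diagonal with respect to the induced product basis $\{\ket{\chi_{\bb}}\}_{\bb\in\{0,1\}^{|F|}}$ of the fixed register. A direct evaluation of $\braket{\chi_{\bb}|P_i(\theta)|\chi_{\bb}}$ shows that the block labelled $\bb=\mathbf 0$ (all factors equal to $\ket{\chi_\ell}$) equals $\sum_{i\in\calI}P_i'(\theta)=H_j(\theta)$; equivalently $H_j(\theta)=W^\dagger H_\calI W$ for the isometry $W\ket{\psi}=\bigl(\bigotimes_{\ell\in F}\ket{\chi_\ell}\bigr)\otimes\ket{\psi}$.

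Finally, since $H_\calI$ is block diagonal, its spectrum is the union of the spectra of its blocks, one of which is $H_j(\theta)$. Hence every nonzero eigenvalue of $H_j(\theta)$ is also an eigenvalue of $H_\calI$, giving $\Delta(H_j(\theta))\ge\Delta(H_\calI)\ge\Delta_\mathrm{uni}(H(\theta,n,m,k))$ (the bound being vacuous when $H_j(\theta)=0$). I expect the main obstacle to lie in the second and third steps rather than the final comparison: one must establish that the perp state $\ket{\chi_\ell}$ is genuinely common to all surviving clauses on variable $\ell$, and that the $\bb=\mathbf 0$ block reproduces the propagated clause projectors \emph{exactly}, not rescaled by powers of $\sin\theta$. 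This exactness is precisely what makes $H_j(\theta)$ an honest block of $H_\calI$; a naive restriction onto the fixed assignment $\ket{\bar\theta_\ell}/\ket{\theta_\ell}$ would instead introduce $\sin^2\theta$ factors and only yield a compression whose gap could be strictly smaller, so choosing the orthogonal (perp) embedding is the crucial move.
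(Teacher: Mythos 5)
Your proof is correct, but it takes a genuinely different route from the paper's. The paper argues via a term-by-term operator inequality: deleting a falsified literal replaces $\proj{x}\otimes\proj{\phi_{\mathrm{rest}}}$ by $\Id\otimes\proj{\phi_{\mathrm{rest}}}$, so each updated projector dominates its predecessor, giving $H_j(\theta)\geq\sum_{i\in\calI_j}P_i^{(0)}(\theta)$, after which the definition of the uniform gap is invoked for the right-hand side. You instead exhibit $H_j(\theta)$ as an exact invariant block of the sub-sum $H_\calI=\sum_{i\in\calI}P_i(\theta)$ of \emph{original} projectors, using the observation that all surviving clauses carry the same perpendicular state $\ket{\chi_\ell}$ on each fixed qubit, and conclude by spectrum inclusion that $\Delta(H_j(\theta))\geq\Delta(H_\calI)\geq\Delta_{\mathrm{uni}}$. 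The comparison is instructive: the paper's route is shorter, but its final step tacitly uses that $A\geq B\geq 0$ implies $\Delta(A)\geq\Delta(B)$, which fails for general positive semidefinite operators (take $B=\proj{0}$ and $A=\proj{0}+\epsilon\proj{1}$) and requires control of the kernels that the paper does not spell out. Your spectrum-inclusion argument is unconditional once the block structure is established, and your choice of the basis $\{\ket{\chi_\ell},\ket{\chi_\ell^\perp}\}$ built from the perp states appearing in the projectors---rather than the assignment states $\ket{\theta},\ket{\bar\theta}$, which would introduce $\sin^2\theta$ factors and yield only a compression---is exactly the move that makes the block exact. The one point worth stating explicitly is that $0\in\mathrm{spec}(H_\calI)$, guaranteed by satisfiability of the instance, so that containment of spectra really does imply the claimed inequality between the smallest nonzero eigenvalues.
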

\begin{proof}
    After the $j$'th step, $j$ variables have been assigned fixed boolean truth values.
    In the corresponding updates of the Hamiltonian (see \cref{alg:clause-propagation}), a clause is either removed from the clause set (since it is already satisfied by the hardcoded variables) or the locality of the clause decreases. In either case, the Hamiltonian after the $j$'th step can be written as
    \begin{align}
        H_j(\theta,n,m,k) = \sum_{i \in \calI_j} P^{(j)}_i(\theta)
    \end{align}
    with $\calI_j\subseteq [m]$ and $P^{(j)}_i(\theta)$ being a projector acting non trivially on at most $k$ qubits. We show $H_j(\theta) \geq H_{j-1}(\theta)$, which directly implies the desired claim. Each projector $P_i^{(j)}$ is either equal to $P_i^{(j-1)}$ or can be derived from it by deleting a literal of the corresponding clause. In the former case, $P_i^{(j)}(\theta) \geq P_i^{(j-1)}(\theta)$ holds trivially. For the latter case, assume w.l.o.g. that the fixed variable corresponds to the first qubit. Then we have
    \begin{align}
        P^{(j-1)}_i(\theta) &= \proj{x} \otimes \proj{\phi_{\mathrm{rest}}} \text{with $x \in \{\theta^\perp, \bar{\theta}^\perp\}$} ,\\
        P^{(j)}_i(\theta) &= \Id \otimes \proj{\phi_{\mathrm{rest}}},
    \end{align}
    where 
    $\proj{\phi_{\mathrm{rest}}}$ denotes the part of the projector not acting on the first qubit. It holds that $P^{(j)}_i(\theta) \geq P_i^{(j-1)}(\theta)$ because $\Id - \ketbra{x}{x} \geq 0$ and $\proj{\phi_{\mathrm{rest}}} \geq 0$ such that $P_j(\theta)-P_{j-1}(\theta) = (\Id-\ketbra{x}{x}) \otimes \proj{\phi_{\mathrm{rest}}} \geq 0$. Repeated application directly gives 
    \begin{align}
        \sum_{i\in \calI_j} P_i^{(j)}(\theta) \geq \sum_{i\in \calI_j} P_i^{(0)}(\theta).
    \end{align}
    By the definition of the uniform spectral gap of the original Hamiltonian, we directly see that the spectral gap of $H_j(\theta,n,m,k)$ is lower bounded by $\Delta_{\mathrm{uni}}(H(\theta,n,m,k))$, which concludes the proof.
\end{proof}
\subsection{Parallelizing measurements}
\label{subsec:parallelized_measurements}
In this section, we show how to group the $m$ clause checks, each being $k$-local, into at most $\ell(n,k) \leq \frac{\sqrt{k}}{\sqrt{2 \pi}}(2e)^k \ln(n)$ layers, where all clause checks within one such layer commute, thus allowing simultaneous measurement. To achieve this, we utilize \emph{perfect hash families} as a technical tool. 

\begin{definition}[Perfect hash family \cite{mehlhorn2013data}] \label{def:perfect hash family} Consider a collection of $N$ functions $f_i:[n] \mapsto [k]$, where $[n]=\{1,2,3,\dots,n\}$ and $[k]$ analogously. We call $\{f_i\}_{i=1}^N$ a $(N,n,k)$-perfect hash family if, for every $S \subset [n]$ with $|S|=k$, there is a function $f_i$ such that $f_i:S \to [k]$ is one-to-one.
\end{definition}
Our definition of a perfect hash family slightly deviates from the standard. To simplify notation, we fix $|S|=k$ instead of allowing $|S| \leq k$. Perfect has families can equivalently be defined as an array with certain properties. While less intuitive, this is often used in constructing such families.

\begin{definition}[Perfect hash families, array characterization] \label{def:perfect hash family (array)}
    A perfect hash family is uniquely characterized by an array of size $N \times n$ on $k$ symbols such that for any $N \times k$ subarray, there is at least one row comprised of distinct symbols.
\end{definition}
To construct a perfect hash family, we resort to a simple, deterministic scheme, the \emph{density algorithm} \cite{colbourn2008perfect_hash_families_deterministically}. The density algorithm constructs a set of rows $R_{\#}$ that form an array satisfying \cref{def:perfect hash family (array)}. In this context, we define the notion of a \emph{partial row} $\rr=(r_1, \dots, r_n)$, where $r_i \in [k] \cup \{\star\}$ with $\star$ denoting a value that has not yet been fixed. When a partial row is updated by replacing an unfixed value $r_i=\star$ with $r_i=x$, where $x \in [k]$, we denote the resulting row with $\sigma(\rr,x,i)$. The density algorithm starts with an empty set $R_{\#}$ and iteratively adds rows to this set. Each row starts completely undetermined as $\rr=(\star, \star, \dots, \star)$ and is updated entry by entry, maximizing a cost function. To this end, 
define
\begin{align}
    \delta(\rr) \coloneqq \sum_{\substack{\calI \subseteq [n], |\calI| = k}} \lambda(\calI) \cdot\chi(\calI,\rr),
\end{align}
where we sum over all subsets $\calI \subseteq [n]$ of size $k$. The function $\lambda:\calI \to \{0,1\}$ evaluates to zero if there is a row in $R_{\#}$ that has distinct (fixed) values on all entries indexed by $\calI$. Otherwise, $\lambda(\calI) = 1$. For $\chi(\calI,\rr)$, consider the restriction of $\rr$ to the index set $\calI$ and denote with $s$ the number of entries with fixed values. We have $\chi(\calI,\rr)=\frac{(k-s)(k-s-1)\dots 1}{k^{k-s}}$ if all entries of $\rr$ indexed by $\calI$ are distinct, otherwise $\chi(\calI,\rr)=0$. The next undetermined entry of $\rr$ is then set to
\begin{align}
    r_i = \Argmax_{x \in [k]}  \delta(\sigma(\rr,x,i)).
\end{align}
Once all $\star$-values in $\rr$ are replaced, the row is added to the set $R_{\#}$. This process is repeated until the terminating condition
\begin{align}
    \sum_{\calI \in [n], |\calI|=k} \lambda(\calI) = 0
\end{align}
is met. We summarize this procedure in \cref{alg:density_algorithm}. The density algorithm runs in time $\mathcal{O}(\ln(n) n^k)$ (Lemma 2.3 from \cite{colbourn2008perfect_hash_families_deterministically}), which is polynomial for any fixed $k$. The size $N$ of the resulting hash family grows only logarithmically in the number of variables $n$ (\cref{lem:density algorithm}). 

\begin{algorithm}[H]
\caption{Density algorithm \cite{colbourn2008perfect_hash_families_deterministically}}
\label{alg:density_algorithm}
    \begin{algorithmic}
        \Require $n$, $k$
        \Ensure $(\mathcal{O}(\ln(n)), n, k)$-perfect hash family $R_{\#}$
        \State \hrulefill
        \State $R_{\#} \gets \{\}$ \Comment{empty set of rows}
        \While{$\sum_{\calI \in [n], |\calI|=k} \lambda(\calI) \neq 0$}
            \State $\rr \gets (\star, \star, \dots, \star)$ \Comment{size of $\rr$ is $n$}
            \For{$i$ in $1:n$}
            \State $r_i = \Argmax_{x \in [k]}  \delta(\chi(\rr,x,i))$
            \EndFor
            \State $R_{\#} \gets R_{\#} \cup \rr$
        \EndWhile
        \State \Return $R_{\#}$ \Comment{array characterization, see \cref{def:perfect hash family (array)}}
    \end{algorithmic}
\end{algorithm}

\begin{lemma}[Lemma 2.2 from Ref.~\cite{colbourn2008perfect_hash_families_deterministically}, adjusted] \label{lem:density algorithm}
    \cref{alg:density_algorithm} constructs an $(N,n,k)$-perfect hash family with $N \leq c_k \ln \binom{n}{k} \leq k c_k \ln(n)$ where 
    \begin{equation}
        c_k = \left(\ln \left( \frac{k^k}{k^k-k!} \right) \right)^{-1} < \frac{e^k}{\sqrt{2\pi k}}.
    \end{equation}
\end{lemma}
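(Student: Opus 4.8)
The plan is to recognize \cref{alg:density_algorithm} as a derandomization, via the method of conditional expectations, of the standard probabilistic existence argument for perfect hash families. First I would set up the probabilistic baseline: if a single row is filled with symbols drawn independently and uniformly from $[k]$, then a fixed index set $\calI$ with $|\calI|=k$ receives $k$ distinct symbols with probability $k!/k^k$, so it fails to be ``covered'' with probability $(k^k-k!)/k^k$. The quantity $\chi(\calI,\rr)$ is precisely the conditional probability of covering $\calI$ given the already-fixed entries of the partial row $\rr$: with $s$ distinct fixed values on $\calI$, the remaining $k-s$ positions must receive the $k-s$ unused symbols in some order, which happens with probability $(k-s)!/k^{k-s}$ (and probability $0$ once a collision has occurred). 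Consequently $\delta(\rr)=\sum_{|\calI|=k}\lambda(\calI)\chi(\calI,\rr)$ is the expected number of currently-uncovered index sets that a random completion of $\rr$ would newly cover, and by the array characterization in \cref{def:perfect hash family (array)} we are done once every $\calI$ is covered by some row.

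The crux of the argument --- and the step I expect to be the most delicate to write cleanly --- is the averaging identity
\begin{align}
    \delta(\rr) = \frac{1}{k}\sum_{x\in[k]} \delta(\sigma(\rr,x,i))
\end{align}
for any still-undetermined position $i$. I would prove this term by term in $\calI$: sets with $i\notin\calI$ are unaffected by fixing position $i$, while for $i\in\calI$ exactly $k-s$ of the $k$ possible values of $x$ keep the entries distinct, and for each such value $\chi$ rescales from $(k-s)!/k^{k-s}$ to $(k-s-1)!/k^{k-s-1}$, so the average reproduces $\chi(\calI,\rr)$. This identity shows that $\delta$ equals the average of its children's values, whence the greedy rule $r_i=\Argmax_{x}\delta(\sigma(\rr,x,i))$ can never decrease $\delta$ as a row is filled. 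Since a fully determined row has $\delta$ equal to the exact number of uncovered sets it covers, while the initial all-$\star$ row has $\delta=U\cdot k!/k^k$ with $U$ the number of sets uncovered before that row is built, each completed row covers at least a fraction $k!/k^k$ of the remaining uncovered sets.

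From here the counting is routine. Writing $\rho=(k^k-k!)/k^k$, the number of uncovered index sets is multiplied by at most $\rho$ per row, starting from $\binom{n}{k}$, so after $N$ rows at most $\binom{n}{k}\rho^{N}$ sets remain uncovered. Because this count is a nonnegative integer, the terminating condition is reached as soon as $\binom{n}{k}\rho^{N}<1$, i.e.\ for $N=\lceil c_k\ln\binom{n}{k}\rceil$ with $c_k=(\ln(1/\rho))^{-1}=\left(\ln(k^k/(k^k-k!))\right)^{-1}$; this establishes the first bound (up to the ceiling), and $\ln\binom{n}{k}\le k\ln(n)$ follows from $\binom{n}{k}\le n^k$, giving the second.

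It remains to bound $c_k$. Using $-\ln(1-t)>t$ for $t\in(0,1)$ with $t=k!/k^k$ gives $\ln(1/\rho)>k!/k^k$ and hence $c_k<k^k/k!$. Finally I would invoke the standard Stirling lower bound $k!\ge\sqrt{2\pi k}\,(k/e)^k$, which yields $k^k/k!\le e^k/\sqrt{2\pi k}$ and therefore $c_k<e^k/\sqrt{2\pi k}$, completing the proof.
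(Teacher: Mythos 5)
Your proof is correct and follows essentially the same route as the paper: the paper simply defers the main counting to the cited reference (``follow the original argument and set $v=t$'') and only supplies the bound $c_k < k^k/k! \le e^k/\sqrt{2\pi k}$ via $\ln(1-x)<-x$ and Stirling, exactly as you do in your final paragraph. Your reconstruction of the conditional-expectation/averaging argument behind the density algorithm is a faithful (and correct) expansion of the step the paper outsources to the reference, so there is nothing to flag beyond the minor ceiling you already note.
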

\begin{proof}
    Follow the original argument and set $v = t$. For the upper bound on $c_k$, we have 
    \begin{align}
        c_k = \left(-\ln \left(1-\frac{k!}{k^k} \right)\right)^{-1} <  \frac{k^k}{k!}
    \end{align}
    as $\ln(1-x) < -x$ for $x \in (0,1)$. The desired claim then follows from the lower bound $\sqrt{2 \pi k} \left( \frac{k}{e} \right)^k < k!$.
\end{proof}

\begin{lemma}[Lemma 2.3 from Ref.~\cite{colbourn2008perfect_hash_families_deterministically}, corrected] \label{lem:density algorithm (runtime)}
    The density algorithm runs in time 
    \begin{align}
        \mathcal{O}\left( k^{3/2} \ln(n) (en)^k\right).
    \end{align}
\end{lemma}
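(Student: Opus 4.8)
The plan is to bound the total runtime as the product of two factors: the number of rows that the density algorithm appends to $R_{\#}$, and the cost of constructing each such row. For the first factor, \cref{lem:density algorithm} already establishes that the while-loop of \cref{alg:density_algorithm} terminates after at most $N \le k c_k \ln(n)$ iterations, and the same lemma supplies the Stirling-type estimate $c_k < e^k/\sqrt{2\pi k}$. Hence the number of rows is $\mathcal{O}(\sqrt{k}\, e^k \ln(n))$, which already accounts for the $\ln(n)$ factor and part of the $k$-dependence in the claimed bound.

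For the second factor, I would analyze the inner greedy that fixes a single row entry by entry. A fresh row starts as $(\star,\dots,\star)$ and its $n$ entries are set left to right by the rule $r_i = \Argmax_{x\in[k]} \delta(\sigma(\rr,x,i))$. The key observation is that fixing the $i$-th entry changes the summand $\lambda(\calI)\chi(\calI,\rr)$ only for those index sets $\calI$ of size $k$ that contain $i$; all remaining terms are common to every candidate $x$ and therefore do not influence the $\Argmax$. There are $\binom{n-1}{k-1}\le n^{k-1}$ such index sets, and for each of them the contribution to all $k$ candidate symbols, together with the distinctness test of the already-fixed prefix on $\calI$, can be handled in $\mathcal{O}(k)$ time. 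Thus each of the $n$ positions costs $\mathcal{O}(k\, n^{k-1})$, so building one full row costs $\mathcal{O}(n\cdot k\, n^{k-1}) = \mathcal{O}(k\, n^{k})$; the subsequent update of the indicator $\lambda$ over all $\binom{n}{k}\le n^k$ index sets is of the same order and hence subdominant.

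Multiplying the two factors then yields
\begin{align}
  T = \mathcal{O}\!\left(\sqrt{k}\, e^k \ln(n)\right)\cdot \mathcal{O}\!\left(k\, n^{k}\right) = \mathcal{O}\!\left(k^{3/2}\,(en)^{k}\,\ln(n)\right),
\end{align}
which is the claimed bound. I expect the main obstacle to be the careful bookkeeping in the per-row step: one must justify that only the $\binom{n-1}{k-1}$ index sets through the current position need to be touched, that the data structures storing $\lambda$ and the running partial sums of $\chi$ can be maintained incrementally so that the per-index-set cost is genuinely $\mathcal{O}(k)$ rather than larger, and that the crude bound $\binom{n-1}{k-1}\le n^{k-1}$ (rather than the sharper $\binom{n-1}{k-1}$) is precisely what, after combining with the factor $e^k$ coming from $c_k$ via Stirling, produces the clean $(en)^k$ form. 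This last point is exactly where the explicit $k$-dependence deviates from the statement in Ref.~\cite{colbourn2008perfect_hash_families_deterministically}, so pinning down the correct powers of $k$ — the $k^{3/2}$ together with the $e^k$ — is the delicate part of the argument, whereas the $n$-asymptotics are comparatively routine.
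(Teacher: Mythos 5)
Your proposal is correct and follows essentially the same route as the paper: the paper's proof simply defers to the original argument of Colbourn's Lemma~2.3 with the erroneous bound $c_k<2$ replaced by $c_k<e^k/\sqrt{2\pi k}$, and your decomposition into (number of rows) $\times$ (cost per row) $= \mathcal{O}(\sqrt{k}\,e^k\ln n)\cdot\mathcal{O}(k\,n^k)$ is precisely that original argument made explicit. The only caveat is the one you already flag yourself — that the per-index-set work must be amortized to $\mathcal{O}(k)$ over all $k$ candidate symbols to get $k^{3/2}$ rather than a higher power of $k$ — which is a bookkeeping detail the paper does not spell out either.
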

\begin{proof}
    The original proof mistakenly claims $c_k<2$, when in fact $c_k < \frac{e^k}{\sqrt{2\pi k}}$, see \cref{lem:density algorithm}. Following the original argument with $v=t$ (which corresponds to $k$ in our setting) and the correct bound on $c_k$, we obtain the claimed runtime.
\end{proof}

Using a perfect hash family, we can now group the clause checks into projective measurements. Let each clause check, represented by projector $P_i(\theta)$, be identified by an $n$-symbol string, mapping $\bar{\theta}^\perp \to 0$, $\theta^\perp \to 1$ and $\Id \to \mathrm{I}$. As an example, for $n=6$, the clause check defined by
\begin{align}
    P_i(\theta) = \proj{\theta^\perp_2 \theta^\perp_3 \bar{\theta}^\perp_5} \otimes \Id_{[6]\setminus \{2,3,5\}}
\end{align}
would be identified with the string $I11I0I$. We say two strings are compatible if they match in each position in which both strings do not have an $I$ as an entry. For instance, $11I0$ is compatible with $1I00$, but not with $1I01$. Note that if two strings $\xx, \yy \in \{0,1,\mathrm{I}\}^n$ are both compatible with a binary string $\bb \in \{0,1\}^n$, then $\xx$ and $\yy$ are also compatible with each other. Thus, we can characterize a group of commuting measurements with a length-$n$ binary string. We refer to such a group as a \emph{layer}.

We now use a perfect hash family to construct a family of binary strings and show that each clause check is compatible with at least one such layer.
\begin{theorem}[Parallelizing measurements] \label{thrm:number_of_layers}
    All clause checks can be performed using 
    \begin{equation}
    \ell(n,k) \leq \frac{\sqrt{k}}{\sqrt{2 \pi}}(2e)^k  \ln(n)
    \end{equation}
    projective measurements. 
\end{theorem}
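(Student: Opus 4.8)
The plan is to reduce the statement to a covering problem and solve it with the perfect hash family from \cref{lem:density algorithm}. First I would record that every $k$-local clause check is fully described by its support $S \subseteq [n]$ with $|S| = k$ together with the required values $v\colon S \to \{0,1\}$ it imposes there (reading off the $0/1$ entries of its $\{0,1,\mathrm{I}\}$-string and discarding the $\mathrm{I}$ positions). A layer is a binary string $\bb \in \{0,1\}^n$, and---by the definition of compatibility and the already-noted transitivity of compatibility through a common binary string---all clause checks compatible with $\bb$ pairwise commute, hence are jointly measurable. The clause with data $(S,v)$ is compatible with $\bb$ exactly when $\bb_j = v_j$ for all $j \in S$. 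Thus it suffices to exhibit a family of binary strings of size at most $\frac{\sqrt{k}}{\sqrt{2\pi}}(2e)^k\ln(n)$ such that every possible clause $(S,v)$ agrees with at least one of them on its support.

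Next I would build these strings from a perfect hash family. Running the density algorithm yields an $(N,n,k)$-perfect hash family $\{f_1,\dots,f_N\}$ with $N \le k\,c_k\ln(n)$ (\cref{lem:density algorithm}). For each index $i \in [N]$ and each word $\ww \in \{0,1\}^k$, I define a layer $\bb^{(i,\ww)}$ by pulling the word back through the hash function, i.e.\ $\bb^{(i,\ww)}_j := w_{f_i(j)}$ for every $j \in [n]$. This produces a collection of at most $N\cdot 2^k$ binary strings, which will be my set of candidate layers.

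The crux is the coverage argument. Fix any clause $(S,v)$. By the defining property of the perfect hash family (\cref{def:perfect hash family}), there is an index $i$ with $f_i$ injective on $S$; since $|S| = k$ and the codomain is $[k]$, the restriction $f_i|_S\colon S \to [k]$ is in fact a bijection. I then choose the word $\ww \in \{0,1\}^k$ by $w_c := v_{(f_i|_S)^{-1}(c)}$, which is well defined precisely because $f_i|_S$ is invertible. For every $j \in S$ this gives $\bb^{(i,\ww)}_j = w_{f_i(j)} = v_{(f_i|_S)^{-1}(f_i(j))} = v_j$, so the clause is compatible with the layer $\bb^{(i,\ww)}$. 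Hence the constructed family of layers covers all clause checks.

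Finally I would count: the number of layers is at most $N \cdot 2^k \le k\,c_k\,2^k\ln(n)$, and substituting the bound $c_k < e^k/\sqrt{2\pi k}$ from \cref{lem:density algorithm} gives $\ell(n,k) \le \frac{\sqrt{k}}{\sqrt{2\pi}}(2e)^k\ln(n)$, as claimed. The main obstacle is the coverage step: the whole argument hinges on recognizing that injectivity of $f_i$ on a $k$-set is equivalent to bijectivity onto $[k]$, which is exactly what lets one invert the hash and realize an \emph{arbitrary} pattern of required values on $S$ via a single pulled-back word---this is why enumerating only the $2^k$ words per hash function, rather than all $2^n$ binary strings, suffices. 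A secondary point to handle cleanly is the claim that a layer is genuinely a single projective measurement, but this follows directly from the transitivity of compatibility already observed in the text, so no additional commutation analysis is required.
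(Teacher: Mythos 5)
Your proposal is correct and follows essentially the same route as the paper: construct the perfect hash family via the density algorithm, derive $2^k$ binary layers per hash function by assigning a bit to each symbol in the image, use the perfect-hash property to guarantee every clause check is compatible with one such layer, and combine $N\cdot 2^k$ with the bound on $c_k$ from \cref{lem:density algorithm}. The only difference is that you make the pull-back construction and the injectivity-implies-bijectivity step explicit, which the paper states more informally.
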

\begin{proof}
    All clause checks in a layer commute and can thus be realized in a single projective measurement. We give an explicit construction of the layers. First, construct a $(\ell, n, k)$ perfect hash family using the density algorithm. Note that we can derive a layer from a function $f_i$ by identifying each element in the image of $f_i$ with a value of either $0$ or $1$. This results in a length-$n$ binary string. For every $f_i$, we construct all $2^k$ possible layers, resulting in $2^k\ell$ layers in total. Now, by virtue of $\{f_i\}$ being a perfect hash family, for any clause check, there is at least one $f_i$ that maps the indices of the non-trivial tensor factors of the projector $P_i(\theta)$ to $k$ distinct numbers. As a consequence, the string $\xx \in \{0,1,\mathrm{I}\}^n$ associated with the clause check is compatible with one of the $2^k$ binary strings (layers) derived from $f_i$. From \cref{lem:density algorithm}, we obtain a number of $2^k \ell \leq 2^k k c_k \ln(n) \leq \frac{\sqrt{k}}{\sqrt{2 \pi}}(2e)^k  \ln(n)$ layers. 
\end{proof}
By the above theorem, we have now grouped the clause checks into $\ell(n,k)=\mathcal{O}(\sqrt{k} (2e)^k \ln(n))$ projective measurements, each corresponding to a layer. In practice, each of these measurements can be performed by measuring all the clause checks that have been grouped into the corresponding layer. Thus, each layer is associated with a two-outcome measurement asking whether the state is in the subspace prohibited by the clauses sorted into the layer or not. In terms of the mutually commuting clause checks $C_1(\theta), \dots ,C_\ell(\theta)$ sorted into a given layer, the measurement can be written as $\{\prod_{i=1}^\ell C_i(\theta), \Id-\prod_{i=1}^\ell C_i(\theta)\}$. As all clause checks commute, this measurement can be implemented in the circuit model using a multi-controlled-NOT gate. Using auxiliary qubits, this can be done in the circuit model in depth $\mathcal{O}(\ln(n))$ \cite{he2017decompositions}.

While the perfect hash family construction guarantees an upper bound on the number of required measurement layers, this is a worst-case analysis. In practice, specific instances may permit a more compact grouping of commuting clause checks.

\subsection{Overall time complexity}
At this stage, we find that the expected runtime of this quantum algorithm can be bounded as follows:
\begin{theorem}[Overall time complexity]\label{thrm:overall runtime}
For every satisfiable $3$-SAT instance \cref{alg:quantum SAT solver} finds a solution in time
\label{thrm:summary_runtimes}
    \begin{align} \label{eqn:overall runtime 1}
         T(\theta, n, \delta) &= \mathcal{O} \left( \ln(n) \frac{\left[\ln \left(\frac{1}{1-\cos^2(\theta)}\right) + n \ln \left( \frac{1}{\cos \left(\frac{\theta}{2} \right)} \right)\right]}{\ln\left( \mu^{-1}\right)}  \right. \nonumber \\ 
    &\quad \left. \times \ln\left(\delta^{-1}\right) \left( \frac{2}{1+\cos(\theta)}\right)^{n/q} t(\theta, n,\delta) \right),
    \end{align}
    where $q=1$ per default and $q=2$ for the amplitude-amplified version.
    Here, $\delta$ denotes the joint failure probability of state preparation and readout algorithm. 
    The term $t(\theta,n,\delta)$ is determined by the readout algorithm used. 
    For $3$-SAT instances with possibly multiple satisfying assignments, \cref{alg:multiple_solution_readout} applies and we have 
    \begin{align} \label{eqn:overall runtime 2}
        t_{\mathrm{multiple}}(\theta, n, \delta) &= \mathcal{O} \left( \frac{\ln\left( \frac{n}{\delta}\right)}{\sin^2(\theta)}\right).
    \end{align}
    If promised a unique solution, \cref{alg:unique_solution_readout} applies, and the readout cost reduces to
    \begin{align}\label{eqn:overall runtime 3}
        t_{\mathrm{unique}}(\theta, n, \delta) &= \mathcal{O} \left( \frac{\ln \left(\frac{n}{\delta}\right)}{\ln \left(\frac{2}{1+\cos^2(\theta)}\right)} \right).
    \end{align}
\end{theorem}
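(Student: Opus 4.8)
The plan is to assemble the claimed bound multiplicatively from the four quantities identified in the structural decomposition $T = \mathcal{O}(r^* \cdot R \cdot l / p_s^{1/q})$. Concretely, I would view the total number of measurements as the product of (i) the cost of preparing a single copy of $\ket{\psi_\mathrm{out}}$ to the tolerance demanded by the readout, and (ii) the number of such copies the readout consumes, and then substitute the explicit expressions proved in the preceding sections. Both $q=1$ and the amplitude-amplified $q=2$ are handled in one stroke since they differ only in the exponent of the success-probability factor.

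For the per-copy cost I would start from \cref{thrm:runtime of state preparation}, which already bundles the cycle count $r^*$ (\cref{lem:cycle_bound}), the cumulative success probability $p_s$ (\cref{lem:success probability}), and the $\ln(\delta^{-1})$ repetition factor into $T_\mathrm{S}$. The only modification is to replace the naive per-cycle measurement count $m$ by the parallelized count $\ell(n,k) \leq \tfrac{\sqrt{k}}{\sqrt{2\pi}}(2e)^k \ln(n)$ from \cref{thrm:number_of_layers}; for fixed locality $k$ this is $\mathcal{O}(\ln n)$, which supplies the leading $\ln(n)$ prefactor. Next I would substitute the tolerance $\epsilon$ fixed by the readout routines: in both the unique and the multiple-solution case one has $\epsilon = \Theta(\sin^2\theta) = \Theta(1-\cos^2\theta)$, so that $\ln(\epsilon^{-1}) = \mathcal{O}(\ln(1/(1-\cos^2\theta)))$. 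Feeding this into the numerator of $T_\mathrm{S}$ reproduces the bracketed term $\ln(1/(1-\cos^2\theta)) + n\ln(1/\cos(\theta/2))$ divided by $\ln(\mu^{-1})$, while the success-probability factor contributes $(2/(1+\cos\theta))^{n/q}$.

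The remaining factor $t(\theta,n,\delta)$ is the number of prepared copies demanded by the chosen readout, taken from \cref{thrm:unique_solution_readout} and \cref{thrm:multiple_solution_readout}; substituting their bounds on $R$ gives $t_\mathrm{unique}$ and $t_\mathrm{multiple}$ respectively. To handle correctness I would split the global failure budget $\delta$ between state preparation and readout (say $\delta/2$ each) and apply a union bound over all preparations and single-shot measurements, which only renormalizes the logarithmic $\ln(\delta^{-1})$ and $\ln(n/\delta)$ factors and is absorbed into the $\mathcal{O}(\cdot)$.

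Two subtleties, both in the multiple-solution readout, will demand the most care. First, the SAT formula and hence the Hamiltonian is modified after each fixed variable, so a priori the convergence rate $\mu$ could degrade across the $n$ outer iterations; here I would invoke \cref{theorem:uniform_gap_is_a_lower_bound}, which guarantees every updated $H_j(\theta)$ has gap at least the uniform gap of the original instance, so a single worst-case $\mu$ legitimately governs all preparations. Second, and crucial for reproducing $t_\mathrm{multiple}$ without an extraneous factor of $n$: the outer loop performs $n$ rounds of $r = \mathcal{O}(\ln(n/\delta)/\sin^2\theta)$ preparations, but each successive round acts on one fewer qubit, so its cost carries a factor $(2/(1+\cos\theta))^{(n-j)/q}$. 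Summing this geometric series over $j$ is dominated by its largest term and collapses to $\mathcal{O}((2/(1+\cos\theta))^{n/q})$, thereby absorbing the apparent factor of $n$. I expect this amortization, together with the clean splitting of the failure budget and the union bound, to be the main thing one must get right; the remaining steps are direct substitutions of the already-established bounds.
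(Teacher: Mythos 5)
Your proposal is correct and follows essentially the same route as the paper: multiply the state-preparation cost from \cref{thrm:runtime of state preparation} (with $m$ replaced by the $\mathcal{O}(\ln n)$ layer count of \cref{thrm:number_of_layers} and $\epsilon=\Theta(1-\cos^2\theta)$ from the readout theorems) by the per-readout copy count, and for the multiple-solution case absorb the apparent extra factor of $n$ by noting that the $n$ successive preparations act on shrinking instances so that $\sum_{i=1}^n T_{\mathrm{S}}(\theta,i,m,\epsilon)=\mathcal{O}(T_{\mathrm{S}}(\theta,n,m,\epsilon))$, with \cref{theorem:uniform_gap_is_a_lower_bound} justifying a single worst-case $\mu$ throughout. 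The two subtleties you flag are exactly the ones the paper's proof addresses.
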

\begin{proof}
    The overall time complexity is 
    \begin{align} \label{eqn:proof summary runtime}
        T(\theta, n,m,\delta) = T_{\mathrm{S}}(\theta,n,m,\epsilon) \cdot t(\theta,n,\delta),
    \end{align}
    where $T_{\mathrm{S}}(\theta,n,m,\epsilon)$ denotes the time complexity of the state preparation routine (\cref{thrm:runtime of state preparation}) and the overhead $t(\theta, n,\delta)$ from reading out a solution. The $m$ clause checks can be implemented in $\mathcal{O}(\ln(n))$ measurements (\cref{thrm:number_of_layers}). The readout procedure dictates the necessary precision $\epsilon$ for the state preparation. Both the unique solution readout and the multiple solution readout require precision polynomial in $1-\cos^2(\theta)$ (see \cref{thrm:unique_solution_readout} and \cref{thrm:multiple_solution_readout}), which gives rise to the same asymptotic behavior $\ln(\epsilon^{-1})=\mathcal{O}(\ln(1/[1-\cos^2(\theta)]))$. Substituting $m$ and $\epsilon$ in \cref{eqn:proof summary runtime} directly gives \cref{eqn:overall runtime 1}. The overhead $t_{\mathrm{unique}}(\theta,n,\delta)$ from executing \cref{alg:unique_solution_readout} corresponds exactly to the number of copies of $\ket{\psi_\mathrm{out}}$ that are required per \cref{thrm:unique_solution_readout}, as expected. However, the overhead $t_{\mathrm{multiple}}(\theta,n,\delta)$ from executing \cref{alg:multiple_solution_readout} corresponds to the cost of reading out a single qubit, not the overall number of copies of $\ket{\psi_\mathrm{out}}$ needed. This is because in each iteration of \cref{alg:multiple_solution_readout}, the number of qubits necessary for the state preparation routine decreases by one. The overall runtime is thus given by
    \begin{align}
        T(\theta,n,m,\delta) &=\sum_{i=1}^n T_{\mathrm{S}}(\theta,i,m,\epsilon) \cdot t_{\mathrm{multiple}}(\theta,n,\delta)\\
        \nonumber
        &= \mathcal{O}(T_{\mathrm{S}}(\theta,n,m,\epsilon) \cdot t_{\mathrm{multiple}}(\theta,n,\delta)),
    \end{align}
    explaining the missing factor of $n$ compared to \cref{thrm:multiple_solution_readout}.
\end{proof}
For the sake of clarity, we assume $k=3$ in the above theorem. For general $k$-SAT, we pick up an additional factor of $\sqrt{k}(2e)^k$ from the parallelization scheme for the measurements (see \cref{subsec:parallelized_measurements}).
\begin{remark}[Termination]
    If \cref{alg:quantum SAT solver} does not terminate within the allocated time bound from \cref{thrm:overall runtime}, we output \texttt{UNSAT}. This output is correct with probability at least $1-\delta$.
\end{remark}

\section{Analysis of some restricted input classes} \label{sec:Analysis of some restricted input classes}
In this section, we analyze the asymptotic runtime for some restricted input classes. Crucially, we show that by choosing $\theta=\theta(n)$ correctly, we can exponentially improve the runtime on certain inputs compared to a fixed $\theta=\pi/2$.
We start by deriving a sharp upper bound on the runtime for $\theta=\pi/2$. Here, all check operations $C_i(\theta=\pi/2)$ commute, such that \cref{alg:state preparation} converges to the ground space in a single round of successful measurements. The runtime is thus determined by the probability of performing $m$ successful measurements in sequence.
\begin{theorem}[Overall time complexity in the unrotated setting]
\label{thrm:run time (unrotated)}
    For fixed angle $\theta=\pi/2$, \cref{alg:quantum SAT solver} finds a satisfying solution for a SAT instance with probability at least $(1-\delta)$ in time
    \begin{align} \label{eqn:run time (unrotated) 1}
        T(n,m,\delta) &=  \mathcal{O} \left(m \ln \left(\delta^{-1} \right) \left(\frac{2^n}{d_{\mathrm{sol}}}\right)^{1/q} \right),
    \end{align}
    where $q=1$ per default and $q=2$ for the amplitude-amplified version. In the above expression, $d_{\mathrm{sol}}$ denotes the dimension of the ground space, i.e., the number of satisfying solutions. The bound is asymptotically optimal in $\delta$ and $2^n/d_{\mathrm{sol}}$, i.e.,
    \begin{align} \label{eqn:run time (unrotated) 2}
        T(n,\delta) = \Omega \left(\ln \left(\delta^{-1} \right) \left(\frac{2^n}{d_{\mathrm{sol}}}\right)^{1/q} \right),
    \end{align}
\end{theorem}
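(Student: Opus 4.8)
The plan is to treat the upper and lower bounds separately, exploiting the fact that at $\theta = \pi/2$ the whole analysis collapses to an exact calculation because the clause checks commute. First I would observe that each $C_i(\pi/2)$ is diagonal in the computational basis, so the $C_i(\pi/2)$ mutually commute and $\prod_{i=1}^m C_i(\pi/2) = \calP_\mathrm{GS}(\pi/2)$ is itself the orthogonal projector onto the ground space $\operatorname{span}\{\ket{\xx} : \xx \in \calS\}$. Consequently a single successful cycle ($r^* = 1$) prepares a state exactly in the ground space, and the readout is a single computational-basis measurement that returns a satisfying string with certainty. This removes both the convergence overhead of \cref{lem:cycle_bound} and the readout overhead of \cref{thrm:unique_solution_readout,thrm:multiple_solution_readout} from the runtime.

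The crucial quantity is then the exact per-attempt success probability. Because $\calP_\mathrm{GS}(\pi/2) = \sum_{\xx \in \calS}\proj{\xx}$ is a sum of orthogonal rank-one projectors, I would compute
\begin{align}
    p_s = \norm{\calP_\mathrm{GS}(\pi/2)\ket{+}^{\otimes n}}_2^2 = \sum_{\xx \in \calS} \Abs{\braket{\xx|+^{\otimes n}}}^2 = \frac{d_\mathrm{sol}}{2^n},
\end{align}
the specialization of \cref{lem:success probability} to the orthogonal encoding with multiple solutions. For the default algorithm ($q=1$), following the repetition argument in the proof of \cref{thrm:runtime of state preparation}, one needs $\kappa = \lceil \ln(\delta^{-1})\,p_s^{-1}\rceil$ independent restarts to drive the failure probability below $\delta$, each costing at most $m$ clause checks, giving $T = \mathcal{O}(m\ln(\delta^{-1})\,2^n/d_\mathrm{sol})$. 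For the amplitude-amplified version ($q=2$), deferring measurements to the end and amplifying the amplitude $\sqrt{p_s}$ of the ground-space component yields the quadratic improvement $T = \mathcal{O}(m\ln(\delta^{-1})(2^n/d_\mathrm{sol})^{1/2})$, exactly as in \cref{thrm:runtime of state preparation}.

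For the matching lower bound I would argue that this upper bound is tight for the algorithm. In the $q=1$ case the restarts are i.i.d.\ Bernoulli trials with success probability exactly $p_s$, so after $\kappa$ attempts the success probability is $1-(1-p_s)^\kappa$; demanding this to be at least $1-\delta$ forces $\kappa \geq \ln(\delta^{-1})/\ln((1-p_s)^{-1})$, and using $\ln((1-p_s)^{-1}) \leq p_s/(1-p_s) \leq 2p_s$ for $p_s \leq 1/2$ gives $\kappa = \Omega(\ln(\delta^{-1})\,2^n/d_\mathrm{sol})$; since each attempt costs at least one measurement, $T = \Omega(\ln(\delta^{-1})\,2^n/d_\mathrm{sol})$. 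For the $q=2$ case, the matching $\Omega(\ln(\delta^{-1})(2^n/d_\mathrm{sol})^{1/2})$ bound follows from the tightness of amplitude amplification (equivalently, the Grover/BBBV search lower bound): amplifying a component of amplitude $\sqrt{p_s}$ to constant probability requires $\Omega(p_s^{-1/2})$ reflections, and boosting the confidence to $1-\delta$ requires $\Omega(\ln(\delta^{-1}))$ independent repetitions.

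The main obstacle I anticipate is the lower bound rather than the upper bound, which is essentially a specialization of the already-established \cref{thrm:runtime of state preparation}. The delicate points are (i) getting the precise logarithmic dependence on $\delta$ in the $q=1$ case, which hinges on the elementary inequality $-\ln(1-x) \leq x/(1-x)$ together with the mild assumption that $p_s$ is bounded away from $1$ (true once $d_\mathrm{sol} \leq 2^{n-1}$), and (ii) justifying the $q=2$ lower bound, which cannot be obtained by a naive counting argument and instead must invoke the known optimality of amplitude amplification to rule out a better-than-quadratic dependence on $2^n/d_\mathrm{sol}$.
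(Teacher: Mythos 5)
Your proposal is correct and follows essentially the same route as the paper: commutativity at $\theta=\pi/2$ collapses state preparation to a single cycle with exact success probability $p_s = d_{\mathrm{sol}}/2^n$, the repetition count $\kappa = \Theta(\ln(\delta^{-1})/p_s)$ gives both bounds for $q=1$, and amplitude amplification after deferring measurements gives $q=2$. If anything, your handling of the lower bound is slightly more explicit than the paper's (which asserts sharpness of $\kappa$ and treats the amplified case as an immediate corollary), since you spell out the $-\ln(1-p_s)=\Theta(p_s)$ step and invoke the optimality of amplitude amplification for the $q=2$ direction.
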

\begin{proof}
    All check operations $C_i = C_i(\pi/2)$ commute, such that \cref{alg:state preparation} converges to the ground space in a single round of successful measurements since $\prod_{i=1}^m C_i = \calP_{\mathrm{GS}}$. A satisfying assignment can be read out in a single computational basis measurement. The runtime is therefore determined by the probability $p_s$ of performing $m$ successful measurements in sequence. As all projectors commute, measuring them one after the other is equivalent to performing the two-outcome measurement asking whether the state is in the ground space or not, i.e., the measurement $\{\prod_{i=1}^m C_i, \Id - \prod_{i=1}^m C_i \}$. As we start in the state vector $\ket{+}^{\otimes n}$, which is the equal superposition state over all basis states, this measurement succeeds with $p_s=d_{\mathrm{sol}}/2^n$. The failure probability of no successful trial after $\kappa$ tries is given as $(1-p_s)^\kappa$. To guarantee a failure probability of at most $\delta$, we require $(1-p_s)^\kappa \leq \delta$ such that $\kappa \geq \ln(\delta^{-1}) / (-\ln(1-p_s)) \geq \ln(\delta^{-1})/p_s$ trials are sufficient. Each trial performs at most $m$ clause checks, which concludes the proof of \cref{eqn:run time (unrotated) 1}. By definition, the bound $\kappa \geq \ln(\delta^{-1}) / (-\ln(1-p_s))$ is sharp, such that using $-\ln(1-p_s)=\Theta(p_s)$, we obtain $\kappa = \Theta(\ln(\delta^{-1}) /p_s)$. The number of trials $\kappa$ strictly lower bounds $T$, which proves \cref{eqn:run time (unrotated) 2}. The amplitude-amplified versions are a direct corollary which follows from deferring measurements to the end and using amplitude amplification~\cite{Brassard2002}.
\end{proof}
The worst-case is attained for Unique-SAT instances. By definition, these instances have at most one satisfying solution ($d_{\mathrm{sol}}=1$), leading to a runtime of $\mathcal{O}(2^n)$ for the non-amplitude-amplified version and $\mathcal{O}(2^{n/2})$ for the amplitude-amplified version. Note that this scaling behavior is reproduced by a randomized version of classical brute-force search where the next candidate solution is chosen uniformly at random over all length-$n$ bit strings.

\begin{definition}[Unate-SAT instances] A SAT formula is called unate if each variable $b_i$ either only appears as a positive literal $b_i$ or only as a negative literal $\bar{b}_i$ \cite{balogh2023nearlyksatfunctionsunate}. This is equivalent to all clause checks $C_i(\theta)$ of a given instance mutually commuting for all $\theta \in (0,\pi/2]$.
\end{definition}

Unate-SAT instances are classically trivially solvable in polynomial time. If a variable appears only as a positive literal, assign $\True$; if it only appears as a negative literal, assign $\False$. This gives rise to an $\mathcal{O}(n)$-time algorithm. However, Unate-SAT instances are an interesting testing ground for the measurement-driven quantum SAT solver. It is easy to construct Unate-SAT instances with a uniquely satisfying solution. Furthermore, all clause checks commute such that the algorithm converges in a single cycle, implying a constant Hamiltonian gap.

\begin{corollary}[Unate-SAT separation] \label{cor:unate-SAT separation}
    By setting the rotation angle $\theta$ such that $\cos(\theta) = 1-\frac{2}{n}$, \cref{alg:quantum SAT solver} finds a satisfying assignment for a Unate-SAT instance in time
    \begin{align}\label{eqn:unate-SAT separation 1}
        T(n) = \mathcal{O}\left(\ln^2(n) n \right).
    \end{align}
    This contrasts sharply with the fixed angle case $\theta=\pi/2$, where the worst-case runtime is lower bounded by 
    \begin{align}\label{eqn:unate-SAT separation 2}
        T(n) = \Omega(2^{n/q})
    \end{align}
    where $q=1$ per default and $q=2$ by using the amplitude-amplified algorithm.
\end{corollary}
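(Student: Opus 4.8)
The plan is to dispatch the two halves of the claim with the two runtime theorems already in hand: the polynomial upper bound \cref{eqn:unate-SAT separation 1} from the overall complexity bound \cref{thrm:overall runtime}, and the exponential lower bound \cref{eqn:unate-SAT separation 2} from the unrotated analysis \cref{thrm:run time (unrotated)}. The structural fact that makes everything go through is that unateness forces all clause checks to commute, so that $\prod_{i=1}^m C_i(\theta)=\calP_{\mathrm{GS}}(\theta)$ is itself the orthogonal ground-space projector. Hence a single successful cycle already prepares a state of unit ground-space fidelity ($r^*=1$), and the runtime becomes completely independent of the otherwise uncontrolled convergence rate $\mu$.

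For \cref{eqn:unate-SAT separation 1}, I substitute $\cos(\theta)=1-\tfrac{2}{n}$ into \cref{thrm:overall runtime}; two elementary identities carry the argument. First,
\[
\frac{1+\cos\theta}{2}=1-\frac1n,\qquad \left(\frac{2}{1+\cos\theta}\right)^{n/q}=\left(\frac{n}{n-1}\right)^{n/q}\to e^{1/q}=\mathcal{O}(1),
\]
so that by \cref{lem:success probability} the per-attempt success probability $p_s\ge(1-\tfrac1n)^n=\Theta(1)$ stays bounded away from zero and only $\mathcal{O}(\ln(\delta^{-1}))$ preparation attempts are needed. Second, $\sin^2\theta=(1-\cos\theta)(1+\cos\theta)=\Theta(1/n)$, so both readout variants (\cref{thrm:unique_solution_readout,thrm:multiple_solution_readout}) require $t(\theta,n,\delta)=\mathcal{O}(\ln(n/\delta)/\sin^2\theta)=\mathcal{O}(n\ln n)$ copies. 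Combining these: with $r^*=1$, the $m$ checks of each cycle parallelized into $\ell(n,3)=\mathcal{O}(\ln n)$ layers (\cref{thrm:number_of_layers}), the constant success factor, and the $\mathcal{O}(n\ln n)$ readout overhead, the product yields $T(n)=\mathcal{O}(\ln n)\cdot\mathcal{O}(n\ln n)=\mathcal{O}(n\ln^2 n)$. Note unateness (and hence commutativity and $r^*=1$) is preserved under the variable propagation of \cref{alg:clause-propagation}, so the argument covers instances with multiple solutions as well.

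For the lower bound \cref{eqn:unate-SAT separation 2}, I would exhibit an explicit unate family with a unique solution, e.g.\ the purely positive instance enforcing every variable to be $\True$, whose only satisfying assignment is $1^n$ (so $d_{\mathrm{sol}}=1$). At $\theta=\pi/2$ the commuting checks reduce state preparation to one grand projective measurement onto the ground space, so \cref{thrm:run time (unrotated)} applies with $d_{\mathrm{sol}}=1$ and gives $T(n)=\Omega(\ln(\delta^{-1})(2^n)^{1/q})=\Omega(2^{n/q})$ directly.

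The one place demanding care—rather than a routine plug-in—is the justification of $r^*=1$. The generic cycle bound \cref{lem:cycle_bound} is vacuous here, since commutativity sends $\mu\to0$ and hence $\ln(\mu^{-1})\to\infty$; instead I must argue directly that the commuting product $\prod_i(\Id-P_i(\theta))$ equals the projector onto $\bigcap_i\ker P_i(\theta)$, so that the first successful cycle lands exactly in the ground space and the tolerance $\epsilon$ is irrelevant. A secondary bookkeeping point is to verify that the chosen angle sits at the unique scale $1-\cos\theta=\Theta(1/n)$ that simultaneously keeps $p_s=\Theta(1)$ (which caps $1-\cos\theta$ from above) and $\sin^2\theta$ only polynomially small (which caps the readout from below): any faster approach $\theta\to0$ would inflate the readout beyond $\mathcal{O}(n)$, while any slower one would drive $p_s$ to zero exponentially.
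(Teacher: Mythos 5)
Your proposal is correct and follows essentially the same route as the paper's proof: commutativity of the clause checks for unate instances collapses state preparation to a single cycle, substituting $\cos(\theta)=1-\tfrac{2}{n}$ makes the success-probability factor $O(1)$ and the multiple-solution readout cost $O(n\ln n)$, and the lower bound comes from a one-positive-literal-per-variable instance with $d_{\mathrm{sol}}=1$ fed into \cref{thrm:run time (unrotated)}. Your extra care on justifying $r^*=1$ directly (since \cref{lem:cycle_bound} degenerates when $\mu\to 0$) and on unateness being preserved under \cref{alg:clause-propagation} fills in details the paper leaves implicit, but does not change the argument.
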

\begin{proof}
    For a Unate-SAT instance, all clause projectors commute. Therefore, the state preparation routine \cref{alg:state preparation} converges in a single measurement cycle. By this, the overall runtime expression in \cref{thrm:overall runtime} simplifies to 
    \begin{align}
        T(\theta,n,\delta) = \mathcal{O}\left(\ln(n) \ln(\delta^{-1}) \left(\frac{2}{1+\cos(\theta)} \right)^{n/q} t(\theta,n,\delta) \right).
    \end{align}
    The upper bound in \cref{eqn:unate-SAT separation 1} then follows by inserting $\cos(\theta)=1-\frac{2}{n}$ into the above expression. Unate-SAT instances can have multiple satisfying assignments, such that \cref{eqn:overall runtime 2} applies. Substituting $\theta$, we obtain $t(n,\delta)=\mathcal{O}(\ln(n) n)$. For the lower bound, we construct a Unate-SAT instance with a uniquely satisfying solution ($d_{\mathrm{sol}}=1$) by simply picking a one-literal clause for every variable. \cref{eqn:unate-SAT separation 2} then follows directly from \cref{thrm:run time (unrotated)}.
\end{proof}
For clarity, we omit the dependence on $\ln(\delta^{-1})$ in \cref{cor:unate-SAT separation}. By setting our rotation angle appropriately, we manage to match the optimal classical runtime up to $\log$-factors.

\section{Open questions, comments, and future work}

\subsection{Towards improving the lower bound on the Hamiltonian gap}
Proving a better lower bound on the Hamiltonian gap $\Delta$ directly translates to a better guarantee on the convergence rate $\mu$ and therefore an improved runtime guarantee for the overall algorithm. As established by the \emph{detectability lemma} (see \cref{eqn:detectability lemma} and \cref{sec:appendix:Detectability Lemma and Method of Alternating Projections}), the runtime scales with the gap as
\begin{align}
    \mathcal{O} \left( \frac{1}{\ln\left(\mu^{-1}\right)}\right) = \mathcal{O} \left( \frac{1}{\Delta}\right).
\end{align}
Looking at the overall runtime in \cref{thrm:overall runtime}, we can make strong arguments about the \emph{form} this gap must take. The Hamiltonian gap must vanish exponentially fast as $n$ increases. If this were not the case, we could tune $\theta$ such that the exponential dependence arising from the success probability in \cref{thrm:overall runtime} becomes arbitrarily small. As a consequence, this would lead to a sub-exponential algorithm for $k$-SAT ($k\geq 3$), which is widely believed to be false under standard complexity-theoretic assumptions (ETH). We formally prove this claim in \cref{prop:form of hamiltonian gap}.

This argument establishes that the general form of the bound we seek is known. The central challenge, therefore, is to find the tightest possible base $\beta$ for this exponential scaling. Doing so directly translates into determining the dominating exponential factor in the algorithm's worst-case runtime.

To formalize this, let $\calH = \calH(\theta, n,m,k)$ be the family of Hamiltonians $H(\theta)=\sum_{i=1}^m P_i(\theta)$ that correspond to a satisfiable $k$-SAT instance with $n$ variables and $m$ clauses. We define the \emph{worst-case spectral gap} $\Delta(\calH)$ for this class as 
\begin{align}
    \Delta(\calH) \coloneqq \min_{H \in \calH} \Delta(H),
\end{align}
where $\Delta(H)$ denotes the spectral gap of $H$.

\begin{theorem}[Form of the Hamiltonian gap]\label{prop:form of hamiltonian gap}
    Assuming the \emph{exponential time hypothesis} (ETH), there is a constant $\beta = \beta(\theta,k) > 1$ such that
    \begin{align}
        \Delta(\calH(\theta,n,m,k)) = \Omega(\beta^{-n}).
    \end{align}
\end{theorem}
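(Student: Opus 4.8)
The target is a lower bound on the worst-case gap $\Delta(\calH(\theta,n,m,k)) = \min_{H\in\calH}\Delta(H)$, i.e.\ a guarantee that no satisfiable $k$-SAT instance at fixed angle $\theta$ produces a frustration-free $H(\theta)=\sum_i P_i(\theta)$ whose smallest nonzero eigenvalue decays faster than some $\beta^{-n}$. I would first pin down the role of ETH: the companion tuning argument behind \cref{thrm:overall runtime} (schedule $\theta=\theta(n)$ so as to neutralize the success-probability factor $(2/(1+\cos\theta))^n$, as in the Unate-SAT construction) shows that a polynomially bounded gap would collapse the runtime to sub-exponential, contradicting ETH. Thus ETH rules out a polynomial lower bound and certifies that the best achievable lower bound is genuinely of exponential form, $\beta^{-n}$ with $\beta>1$; the substantive task is then to exhibit such an exponential lower envelope.

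My plan is to traverse the method-of-alternating-projections / detectability-lemma dictionary in the direction opposite to \cref{eqn:detectability lemma}. Whereas \cref{thrm:alternating_proj} and the detectability lemma bound the convergence rate $\mu$ from above in terms of the gap, I would invoke the \emph{converse} detectability lemma to bound $\Delta$ from below by $1-\mu$ (up to the commutation-degree factor $g$), since $\mu^2 \geq 1-cg\,\Delta$ yields $\Delta \geq (1-\mu)/(cg)$. This reduces the problem to showing that the Friedrichs angle of the clause arrangement stays bounded away from degeneracy, i.e.\ $1-\mu = \Omega(\beta^{-n})$. The structural input is that every $P_i(\theta)$ is assembled from the two fixed single-qubit vectors $\ket{\theta^\perp}$ and $\ket{\bar{\theta}^\perp}$, whose mutual overlaps are the $\theta$-only constants $\cos(\theta)$ and $\sin(\theta)$; consequently every pairwise Friedrichs angle between two clause subspaces $\operatorname{im}(P_i(\theta))$ is bounded below by an explicit constant $\gamma(\theta,k)>0$ independent of $n$. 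I would then aggregate these constant pairwise angles into a bound on the rate $\mu$ of the full product $\prod_{i=1}^m C_i(\theta)$ via the quantitative product estimates from the alternating-projections literature.

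The main obstacle is the exponent bookkeeping in this aggregation, and it is exactly what makes the tight statement hard. Standard product-of-projector bounds convert a constant pairwise angle into $1-\mu \gtrsim \gamma(\theta,k)^{\Theta(m)}$, giving only $\Delta = \Omega(\exp(-c\,m))$; for dense $3$-SAT with $m=\Theta(n^3)$ this is $\Omega(\exp(-c\,n^3))$, which is strictly weaker than the target $\beta^{-n}$ with its linear exponent. Reducing the exponent from $\operatorname{poly}(n)$ down to $O(n)$ — equivalently, identifying the tightest base $\beta(\theta,k)$ — cannot come from treating the $2^n$-dimensional matrix or the $m$-fold product generically; it must exploit the $k$-locality of the projectors and the sparse incidence structure of the clauses, for instance through a Nachtergaele-type martingale recursion along a variable ordering or a transfer-operator argument over the clause hypergraph. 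I expect this locality-aware accounting of how the constant pairwise overlaps compound to be the crux of the argument, and the difficulty of extracting a \emph{linear} exponent is precisely why the sharp base is flagged as the central open problem; ETH enters only to certify that the eventual bound is genuinely exponential, so that $\beta>1$ is both unavoidable and the best one can hope for.
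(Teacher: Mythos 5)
Your first paragraph already \emph{is} the paper's entire proof, and nothing more is required: one assumes for contradiction that $1/\Delta(n)$ grows sub-exponentially, substitutes into the runtime bound $T(n)\propto \Delta(n)^{-1}\left(2/(1+\cos\theta)\right)^{n/q}$ obtained from \cref{thrm:overall runtime} via \cref{eqn:detectability lemma}, tunes $\theta\to 0$ to push the exponential base arbitrarily close to $1$, and contradicts the ETH. Two caveats on your rendering of this step: your phrasing ``rules out a polynomial lower bound'' is weaker than the paper's contradiction hypothesis, which excludes \emph{all} sub-exponential decay rates (e.g.\ $\Delta\sim 2^{-\sqrt{n}}$), not just polynomial ones; and, as a matter of logic, ETH can only constrain the gap from \emph{above} through the runtime theorem (a large gap would make the algorithm fast, and hardness assumptions cannot certify that an algorithm is fast). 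So despite the $\Omega$ notation in the statement, the theorem is proved and used only as a claim about the \emph{form} of the gap---that any achievable lower bound must have exponential base $\beta>1$---which is exactly what your opening argument delivers.

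Your second and third paragraphs, which set out to ``exhibit the exponential lower envelope,'' therefore over-scope the theorem, and, as you concede yourself, your route stalls: compounding constant pairwise Friedrichs angles through the alternating-projections product estimates only yields $1-\mu\gtrsim\gamma^{\Theta(m)}$ and hence $\Delta=\Omega(e^{-cm})$, a cubic exponent for dense $3$-SAT (your reduction $\Delta\geq(1-\mu)/4$ via the quantum union bound is itself fine, cf.\ \cref{prop:QUB}). It is worth knowing that the crude linear-exponent lower bound you were chasing exists \emph{unconditionally} and appears in the paper's appendix (\cref{theorem:gap_scaling_benjamin}, following Ref.~\cite{benjamin2017}): a direct variational argument in the non-orthogonal basis $\{\ket{\Theta_{\xx}}\}$, using the Gram-matrix identity $\braket{\Theta_{\xx}|\Theta_{\yy}}=\braket{\xx|M^{\otimes n}|\yy}$ with $M$ having eigenvalues $1\pm\cos(\theta)$, gives $\Delta(H(\theta))\geq\sin^{2k}(\theta)\left((1-\cos(\theta))/(1+\cos(\theta))\right)^{n}$. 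The linear exponent thus comes from a tensor-product eigenvalue bound, not from clause-by-clause angle bookkeeping---your proposed Nachtergaele-type recursion is not needed for the weak statement. What remains open is only the \emph{tight} base $\beta(\theta,k)$, strong enough to offset the $(2/(1+\cos\theta))^{n}$ success-probability cost; there your diagnosis of where the real difficulty lies agrees with the paper's open problem.
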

\begin{proof}
    Assume for contradiction that the gap $\Delta(n)$ vanishes sub-exponentially in $n$, meaning $1/\Delta(n)$ grows sub-exponentially. From \cref{thrm:overall runtime} and the \emph{detectability lemma} (\cref{eqn:detectability lemma}), the algorithm's runtime $T(n)$ scales as
    \begin{align}
        T(n) \propto \frac{1}{\Delta(n)} \left(\frac{2}{1+\cos(\theta)}\right)^{n/q},
    \end{align}
    where $q=1$ per default and $q=2$ for the amplitude-amplified version. In the above, we omitted polynomial terms.
    If $1/\Delta(n)$ grows sub-exponentially, the runtime's exponential base is $2/(1+\cos(\theta))$, which can be brought arbitrarily close to 1 by choosing $\theta$ sufficiently close to 0. This would imply a sub-exponential runtime for $k$-SAT, violating the ETH. 
\end{proof}

We formulate the precise challenge of finding a non-trivial lower bound on the worst-case spectral gap as an open problem below.
\begin{oproblem}[Worst-case Hamiltonian gap scaling]
    Establish a non-trivial lower bound on the worst-case spectral gap $\Delta$. That is, find any constant $\beta=\beta(\theta,k)$ with $1<\beta<2$ such that $\Delta=\Omega(\beta^{-n})$ and characterize the $k$- and $\theta$-dependence of this scaling.
    A complete solution would determine the tightest possible bound by identifying the smallest value for $\beta$ that satisfies this relation, thereby characterizing the precise asymptotic scaling.
\end{oproblem}
\begin{remark}[Bounding worst-case Friedrichs angles]
    We remark that this open problem could equivalently be solved by bounding the worst-case Friedrichs angle between the respective images of the layers as a function of $(\theta,n,m,k)$ (see \cref{sec:alternating_projections} for the framework).
\end{remark}

\begin{remark}[Runtime for $2$-SAT]
    So far, we have not been able to prove that the algorithm’s runtime collapses from exponential to polynomial if restricting the input class to $2$-SAT. Proving such a statement would be of value in its on right. On the other hand, it is also possible that the runtime remains exponential. In this respect, it is worth noting that Grover’s algorithm for $2$-SAT also has an exponential runtime as it does not make use of any locality structure (see also \cref{subsec:local_structures}).
\end{remark}

In \cref{sec:worst-case_bounds_for_gaps}, we review known worst-case bounds on gap sizes and explain why these approaches are not helpful in our setting, but also point towards some perspectives.

\subsection{Towards exploiting local structures quantumly}
\label{subsec:local_structures}
The prospect of achieving a super-quadratic quantum advantage for $3$-SAT hinges on whether an algorithm can exploit the problem's local structure in a uniquely quantum way. Solving a general, unstructured SAT instance is believed to be computationally as hard as black-box search, a conjecture formalized by the SETH, which posits that SAT can only be solved in time $\calO(2^n)$. For black-box search, 
the quadratic quantum speedup achieved by Grover's algorithm is optimal \cite{Bennett_1997}, which gives rise to the QSETH, the conjecture that SAT cannot be solved faster than in time $\calO(2^{n/2})$, even on a quantum computer. However, the core assumption underpinning QSETH---the absence of exploitable structure--- does not hold for 
$3$-SAT. Its $3$-local nature provides a structural foothold that the best classical algorithms leverage to significantly outperform brute-force search. To date, the best proven worst-case bounds for $3$-SAT are achieved by simply applying Grover-style amplitude amplification to these advanced classical solvers. In this hybrid paradigm, the crucial task of exploiting local structure is offloaded to the classical component. The quantum contribution remains generic, which limits these approaches to a quadratic speedup compared to the best classical solvers. This inherent limitation frames a central challenge for quantum combinatorial optimization: Can a quantum computer natively leverage the locality structure that separates 3-SAT from unstructured search?

We end this section by phrasing a precise condition on when
the rotated algorithm beats brute-force search. As the algorithm is compatible with amplitude-amplification, this would immediately give a potential mechanism to boost an algorithmic primitive super-quadratically by combining amplitude-amplification with a rotated encoding.
\begin{remark}
    Assuming the form of the gap dictated by the ETH (see \cref{prop:form of hamiltonian gap}), there is a constant $\beta = \beta(\theta,k)>1$ such that 
    \begin{align}
        \Delta(\calH(\theta,n,m,k)) = \Omega(\beta^{-n}).
    \end{align}
    To beat Grover search, the dominating exponential factor of \cref{thrm:overall runtime} has to be smaller than $2^{n/2}$, i.e. there has to be a $\theta$ such that,
    \begin{align}
        \left(\frac{2\cdot \beta^2}{1+\cos(\theta)}\right)^{n/2} \leq 2^{n/2}\\
        \Rightarrow \beta^2\leq (1 +\cos(\theta)).
    \end{align}
\end{remark}

\subsection{Towards analyzing the average-case behavior of the algorithm}
As we have seen in the intermediate steps of the proof from \cref{theorem:gap_scaling_benjamin} that addresses the scaling of the spectral gap, the Hamming distance between strings is involved. In this section, we make the reader aware of the following fact: For large $n$, the Hamming distance $D_{\xx \yy}$ behaves like a binomial distribution, i.e., $D_{\xx \yy} \sim \operatorname{Binom}(n, 1/2)$. Therefore,
\begin{align}
    \mathbb{E}_{\xx, \yy} \left[\cos^{D_{\xx \yy}}(\theta) \right]& \approx \sum_{\ell=0}^n \binom{n}{\ell} \left(\frac{1}{2}\right)^n \cos^\ell(\theta)\\
    & = \left(\frac{ 1 + \cos(\theta)}{2}\right)^n.\nonumber
\end{align}
Note that this expression coincides with the state preparation's success probability (c.f.~\cref{lem:success probability}).

This observation suggests that the average-case behavior of the algorithm is closely tied to probabilistic properties of Hamming distances between random strings. A natural open direction is to investigate whether this approximation can be formalized into a rigorous average-case analysis, and more generally, to determine how the distributional structure of Hamming distances influences the algorithm's performance. This fact might also serve as a starting point for a possible explanation of why certain instances appear ``algorithmically easy'' for this algorithm while being ``algorithmically hard'' for Schöning's algorithm.

\subsection{Understanding and improving the algorithm}
Below, we outline several future directions for improving and extending the algorithm.
\subsubsection{Fixed- vs. evolving-angle algorithm}
\label{subsubsec:fixed_vs_evolving_discussion}
An important design choice in our algorithm is whether to use a fixed
measurement angle~$\theta$ (see \cref{alg:quantum SAT solver}) or to let the angle $\theta$ evolve over time (see \cref{subsubsec:fixed evolving}). 
In the fixed-angle setting, the algorithm faces a clear trade-off in performance: As we rigorously established in \cref{thrm:run time (unrotated)}, setting $\theta=\pi/2$ leads to worst-case scaling of $\mathcal{O}(2^n)$ for the non-amplitude-amplified version, matching classical brute-force search, and $\mathcal{O}(2^{n/2})$ for the amplitude-amplified version, matching  Grover search \cite{grover1996}. Conversely, a smaller angle $\theta$ boosts the success probability, but numerical evidence from Ref.~\cite{benjamin2017} suggests that as $\theta \to 0$, the convergence rate slows down significantly, thus increasing the overall runtime. This trade-off suggests the existence of a non-trivial ``sweet spot'' for $\theta$, corresponding to optimized performance. By contrast, an evolving-angle-schedule is naturally reminiscent
of adiabatic approaches, where the gradual adjustment of parameters can help
guide the system through the energy landscape. It is natural to ask how these two approaches compare.

From a complexity-theoretic perspective, we conjecture that the fixed-angle approach, when choosing $\theta$ to be the optimal value, matches the runtime of the evolving angle approach up to polynomial factors. This conjecture rests on two observations. First, our analysis of the readout routines (\cref{subsec:inferring_solution}) shows that, barring the state preparation sub-routine, they scale polynomially in both $n$ and $\theta$. Therefore, the readout overhead incurred in a fixed-angle setting does not affect the dominant exponential scaling. Second, while an evolving schedule might be able to improve convergence, we suspect that finding a desirable $\theta$-schedule is itself a computationally hard problem, analogous to finding optimal paths in adiabatic quantum computing. We suspect that this difficulty prevents improving the exponential scaling compared to an optimally chosen fixed $\theta$.
Clarifying this would also be of broader interest to adiabatic quantum computing.

\subsubsection{Local resampling strategies}
The runtime of the algorithm could likely be significantly improved by replacing the current \emph{global resampling strategy}---which discards the entire state and restarts from scratch whenever an unfavorable measurement occurs---with a more efficient \emph{local resampling strategy}. In this approach, only the $k$ qubits involved in the undesired outcome are resampled, while the rest of the system is left untouched. This partial resampling could, e.g., be done by resetting the local part to the state vector $\ket{+}^{\otimes k}$ or by resampling one of the ``forbidden'' variable assignments, reminiscent of the approach used in Schöning's algorithm~\cite{Schoening1999}. However, finding meaningful and helpful analytical expressions for this scenario seems challenging (as already pointed out in Ref.~\cite{cubitt2023dissipative}).

\subsubsection{Extension to MAX-SAT}
The way we presented the algorithm so far, it is only applicable to SAT. A natural question to ask is whether a slight modification of the algorithm can be made to extend it to solve the maximum satisfiability problem (MAX-SAT). For MAX-SAT, the problem-encoding Hamiltonian is no longer guaranteed to be frustration-free. As such, the key algorithmic challenge is that the projective measurements onto local constraint spaces that are currently used can easily drive the state out of the global optimum. As suggested in Ref.~\cite{cubitt2023dissipative}, this difficulty may be mitigated by replacing \emph{projective measurements} with \emph{weak measurements}. These are generalized quantum measurements in which one operator is close to the identity. Such measurements gently bias the state towards satisfying local terms without fully collapsing it, thereby avoiding large deviations from the global structure. Iterating this process could, in principle, nudge the state incrementally towards an approximate global solution, providing a potential pathway to extend the algorithm from SAT to MAX-SAT. After convergence, which is guaranteed due to Ref.~\cite{cubitt2023dissipative}, our proposed readout routine from \cref{subsec:inferring_solution} could then be used to extract a solution.

\subsubsection{Hardware implementations and noise-robustness}
We start by pointing out that two experimentally relevant challenges arise when considering smaller angles. First, smaller values of $\theta$ demand longer coherence times and more auxiliary qubits, as they require more successive measurement cycles for the state preparation routine (see \cref{alg:state preparation}). Second, the readout routine (see \cref{alg:multiple_solution_readout}) requires more precision as $\theta$ decreases.

However, as discussed in Refs.~\cite{cubitt2023dissipative, bombin2021}, the proposed non-amplitude-amplified, i.e., measurement-driven, quantum algorithm is still particularly well-suited for implementation on photonic quantum hardware. In particular, for photonic hardware, the proposed parallelization of measurements is straightforward. Notably, the noise and error resilience demonstrated in Ref.~\cite{cubitt2023dissipative} also applies directly to our framework. Specifically, the self-correcting nature of the algorithm ensures robustness against noise and errors (up to an error rate below a certain threshold) without incurring additional computational overhead. Moreover, we highlight a practical distinction of the measurement-driven algorithm from fixed-runtime solvers, such as Grover's algorithm or our amplitude-amplified version: this algorithm's runtime is stochastic, determined by the specific order and success of the measurements. Therefore, a favorable sequence of measurements can potentially find a solution much faster than indicated by the worst-case bound.

In contrast, the amplitude-amplified version yields a better asymptotic runtime. However, this comes at the expense of much higher coherence times and more auxiliary qubits. As such, it is less hardware-friendly on near-term devices.

\section{Conclusions}
In this work, we have provided a rigorous, worst-case runtime analysis for the measurement-driven quantum SAT solver introduced in Ref.~\cite{benjamin2017} and have pointed out that an amplitude-amplified version thereof further boosts the performance. Our analysis formally establishes the algorithm's runtime dependence on two key properties: the spectral gap of the associated Hamiltonian and the success probability of the measurements. We have demonstrated that these properties are linked by an exponential trade-off, which can be systematically controlled by the algorithm's rotation angle.

On the algorithmic side, we have significantly broadened the algorithm's practicality. We have introduced a new, rigorous readout routine capable of efficiently finding a solution even for general instances with multiple satisfying assignments. Furthermore, we have developed a measurement parallelization scheme based on perfect hash families, which groups the $m$ clause checks into $\mathcal{O}(\ln(n))$ commuting layers, each implementable as a single measurement.

We have then demonstrated the practical utility of our analytical framework. By appropriately tuning the angle $\theta$ according to our analysis, we have shown that the algorithm's runtime on Unate-SAT instances can be exponentially improved, collapsing from $\Omega(2^n)$ at $\theta=\pi/2$ to a polynomial runtime.

The algorithmic primitive examined in this work (see \cref{fig:algorithmic_primitive}) implements a brute-force search procedure. In the unrotated case, applying amplitude amplification yields a runtime of $\mathcal{O}(2^{n/2})$, matching the performance of Grover’s original algorithm~\cite{grover1996}. By introducing a rotation parameterized by an angle $0 < \theta < \pi/2$, we rigorously show that the effectiveness of this rotated-basis approach depends crucially on how the spectral gap scales with $\theta$ and $n$. This observation compared with promising numerics suggests that, for sufficiently small $k$, the method may offer a super-quadratic improvement over classical brute-force search.

This work highlights the spectral gap as the central quantity determining the algorithm's performance. The most critical open question remains establishing a non-trivial lower bound on this worst-case spectral gap for $k$-SAT. Resolving this would clarify the algorithm's ultimate potential. Moreover, our methods open avenues for future average-case analysis, extensions to MAX-SAT, and are 
likely beneficial for the analysis of other dissipation-driven algorithms.

\section*{Acknowledgements}
The authors would like to thank Paul K.~Faehrmann, Jonas Haferkamp, Carsten Schubert, Nathan Walk, Daniel Miller, and Philipp Schmoll for their fruitful discussions and valuable insights throughout various stages of the project.
The authors utilized Grammarly (Grammarly 
Inc., 2025) for grammar and spelling corrections, and the tools ChatGPT GPT-4 and GPT-5 (OpenAI, 2025) and Gemini 2.5 (Google DeepMind, 2025) were employed to refine sentence clarity. The tools were used only for language editing. All intellectual contributions, scientific reasoning, and conclusions are solely those of the authors. This work has been  supported by the BMFTR (DAQC, MUNIQC-Atoms, QuSol, 
HYBRID++, PasQuops), the Munich Quantum Valley (K-4 and K-8), the DFG (SPP 2514),
the Quantum Flagship (PasQuans2, Millenion), QuantERA (HQCC), the Clusters of Excellence MATH+ and ML4Q, the DFG (CRC183), the Einstein Foundation (Einstein Research Unit on Quantum Devices), Berlin Quantum, and the ERC (DebuQC). 


\begin{thebibliography}{112}%
\makeatletter
\providecommand \@ifxundefined [1]{%
 \@ifx{#1\undefined}
}%
\providecommand \@ifnum [1]{%
 \ifnum #1\expandafter \@firstoftwo
 \else \expandafter \@secondoftwo
 \fi
}%
\providecommand \@ifx [1]{%
 \ifx #1\expandafter \@firstoftwo
 \else \expandafter \@secondoftwo
 \fi
}%
\providecommand \natexlab [1]{#1}%
\providecommand \enquote  [1]{``#1''}%
\providecommand \bibnamefont  [1]{#1}%
\providecommand \bibfnamefont [1]{#1}%
\providecommand \citenamefont [1]{#1}%
\providecommand \href@noop [0]{\@secondoftwo}%
\providecommand \href [0]{\begingroup \@sanitize@url \@href}%
\providecommand \@href[1]{\@@startlink{#1}\@@href}%
\providecommand \@@href[1]{\endgroup#1\@@endlink}%
\providecommand \@sanitize@url [0]{\catcode `\\12\catcode `\$12\catcode
  `\&12\catcode `\#12\catcode `\^12\catcode `\_12\catcode `\%12\relax}%
\providecommand \@@startlink[1]{}%
\providecommand \@@endlink[0]{}%
\providecommand \url  [0]{\begingroup\@sanitize@url \@url }%
\providecommand \@url [1]{\endgroup\@href {#1}{\urlprefix }}%
\providecommand \urlprefix  [0]{URL }%
\providecommand \Eprint [0]{\href }%
\providecommand \doibase [0]{https://doi.org/}%
\providecommand \selectlanguage [0]{\@gobble}%
\providecommand \bibinfo  [0]{\@secondoftwo}%
\providecommand \bibfield  [0]{\@secondoftwo}%
\providecommand \translation [1]{[#1]}%
\providecommand \BibitemOpen [0]{}%
\providecommand \bibitemStop [0]{}%
\providecommand \bibitemNoStop [0]{.\EOS\space}%
\providecommand \EOS [0]{\spacefactor3000\relax}%
\providecommand \BibitemShut  [1]{\csname bibitem#1\endcsname}%
\let\auto@bib@innerbib\@empty
\bibitem [{\citenamefont {Cook}(1971)}]{Cook1971}%
  \BibitemOpen
  \bibfield  {author} {\bibinfo {author} {\bibfnamefont {S.~A.}\ \bibnamefont
  {Cook}},\ }\bibfield  {title} {\bibinfo {title} {The complexity of
  theorem-proving procedures},\ }in\ \href
  {https://doi.org/10.1145/800157.805047} {\emph {\bibinfo {booktitle}
  {Proceedings of the Third Annual ACM Symposium on Theory of Computing}}},\
  \bibinfo {series and number} {STOC '71}\ (\bibinfo  {publisher} {Association
  for Computing Machinery},\ \bibinfo {address} {New York, NY, USA},\ \bibinfo
  {year} {1971})\ p.\ \bibinfo {pages} {151–158}\BibitemShut {NoStop}%
\bibitem [{\citenamefont {Levin}(1973)}]{Levin1973}%
  \BibitemOpen
  \bibfield  {author} {\bibinfo {author} {\bibfnamefont {L.~A.}\ \bibnamefont
  {Levin}},\ }\bibfield  {title} {\bibinfo {title} {Universal sequential search
  problems},\ }\href
  {https://blog.computationalcomplexity.org/2024/08/the-levin-translation.html}
  {\bibfield  {journal} {\bibinfo  {journal} {Prob. Inf. Trans.}\ }\textbf
  {\bibinfo {volume} {9}} (\bibinfo {year} {1973})}\BibitemShut {NoStop}%
\bibitem [{\citenamefont {Impagliazzo}\ and\ \citenamefont
  {Paturi}(2001)}]{impagliazzo2001eth1}%
  \BibitemOpen
  \bibfield  {author} {\bibinfo {author} {\bibfnamefont {R.}~\bibnamefont
  {Impagliazzo}}\ and\ \bibinfo {author} {\bibfnamefont {R.}~\bibnamefont
  {Paturi}},\ }\bibfield  {title} {\bibinfo {title} {{On the complexity of
  k-SAT}},\ }\href {https://doi.org/https://doi.org/10.1006/jcss.2000.1727}
  {\bibfield  {journal} {\bibinfo  {journal} {J. Comp. Syst. Sc.}\ }\textbf
  {\bibinfo {volume} {62}},\ \bibinfo {pages} {367} (\bibinfo {year}
  {2001})}\BibitemShut {NoStop}%
\bibitem [{\citenamefont {Impagliazzo}\ \emph {et~al.}(2001)\citenamefont
  {Impagliazzo}, \citenamefont {Paturi},\ and\ \citenamefont
  {Zane}}]{impagliazzo2001eth2}%
  \BibitemOpen
  \bibfield  {author} {\bibinfo {author} {\bibfnamefont {R.}~\bibnamefont
  {Impagliazzo}}, \bibinfo {author} {\bibfnamefont {R.}~\bibnamefont
  {Paturi}},\ and\ \bibinfo {author} {\bibfnamefont {F.}~\bibnamefont {Zane}},\
  }\bibfield  {title} {\bibinfo {title} {Which problems have strongly
  exponential complexity?},\ }\href
  {https://doi.org/https://doi.org/10.1006/jcss.2001.1774} {\bibfield
  {journal} {\bibinfo  {journal} {J. Comp. Syst. Sc.}\ }\textbf {\bibinfo
  {volume} {63}},\ \bibinfo {pages} {512} (\bibinfo {year} {2001})}\BibitemShut
  {NoStop}%
\bibitem [{\citenamefont {Buhrman}\ \emph {et~al.}(2021)\citenamefont
  {Buhrman}, \citenamefont {Patro},\ and\ \citenamefont
  {Speelman}}]{buhrman2021_quantum_SETH}%
  \BibitemOpen
  \bibfield  {author} {\bibinfo {author} {\bibfnamefont {H.}~\bibnamefont
  {Buhrman}}, \bibinfo {author} {\bibfnamefont {S.}~\bibnamefont {Patro}},\
  and\ \bibinfo {author} {\bibfnamefont {F.}~\bibnamefont {Speelman}},\
  }\bibfield  {title} {\bibinfo {title} {{A framework of quantum strong
  exponential-time hypotheses}},\ }in\ \href
  {https://doi.org/10.4230/LIPIcs.STACS.2021.19} {\emph {\bibinfo {booktitle}
  {38th International Symposium on Theoretical Aspects of Computer Science
  (STACS 2021)}}},\ \bibinfo {series} {Leibniz International Proceedings in
  Informatics (LIPIcs)}, Vol.\ \bibinfo {volume} {187},\ \bibinfo {editor}
  {edited by\ \bibinfo {editor} {\bibfnamefont {M.}~\bibnamefont {Bl\"{a}ser}}\
  and\ \bibinfo {editor} {\bibfnamefont {B.}~\bibnamefont {Monmege}}}\
  (\bibinfo  {publisher} {Schloss Dagstuhl -- Leibniz-Zentrum f{\"u}r
  Informatik},\ \bibinfo {address} {Dagstuhl, Germany},\ \bibinfo {year}
  {2021})\ pp.\ \bibinfo {pages} {19:1--19:19}\BibitemShut {NoStop}%
\bibitem [{\citenamefont {Aaronson}\ \emph {et~al.}(2020)\citenamefont
  {Aaronson}, \citenamefont {Chia}, \citenamefont {Lin}, \citenamefont {Wang},\
  and\ \citenamefont {Zhang}}]{aaronson2020quantumcomplexityclosestpair}%
  \BibitemOpen
  \bibfield  {author} {\bibinfo {author} {\bibfnamefont {S.}~\bibnamefont
  {Aaronson}}, \bibinfo {author} {\bibfnamefont {N.-H.}\ \bibnamefont {Chia}},
  \bibinfo {author} {\bibfnamefont {H.-H.}\ \bibnamefont {Lin}}, \bibinfo
  {author} {\bibfnamefont {C.}~\bibnamefont {Wang}},\ and\ \bibinfo {author}
  {\bibfnamefont {R.}~\bibnamefont {Zhang}},\ }\bibfield  {title} {\bibinfo
  {title} {On the quantum complexity of closest pair and related problems},\
  }\href@noop {} {\bibfield  {journal} {\bibinfo  {journal} {arXiv e-prints}\ }
  (\bibinfo {year} {2020})},\ \Eprint {https://arxiv.org/abs/1911.01973}
  {arXiv:1911.01973} \BibitemShut {NoStop}%
\bibitem [{\citenamefont {Grover}(1996)}]{grover1996}%
  \BibitemOpen
  \bibfield  {author} {\bibinfo {author} {\bibfnamefont {L.~K.}\ \bibnamefont
  {Grover}},\ }\bibfield  {title} {\bibinfo {title} {A fast quantum mechanical
  algorithm for database search},\ }in\ \href
  {https://doi.org/10.1145/237814.237866} {\emph {\bibinfo {booktitle}
  {Proceedings of the Twenty-Eighth Annual ACM Symposium on Theory of
  Computing}}},\ \bibinfo {series and number} {STOC '96}\ (\bibinfo
  {publisher} {Association for Computing Machinery},\ \bibinfo {address} {New
  York, NY, USA},\ \bibinfo {year} {1996})\ p.\ \bibinfo {pages}
  {212–219}\BibitemShut {NoStop}%
\bibitem [{\citenamefont {Brassard}\ \emph {et~al.}(2002)\citenamefont
  {Brassard}, \citenamefont {Høyer}, \citenamefont {Mosca},\ and\
  \citenamefont {Tapp}}]{Brassard2002}%
  \BibitemOpen
  \bibfield  {author} {\bibinfo {author} {\bibfnamefont {G.}~\bibnamefont
  {Brassard}}, \bibinfo {author} {\bibfnamefont {P.}~\bibnamefont {Høyer}},
  \bibinfo {author} {\bibfnamefont {M.}~\bibnamefont {Mosca}},\ and\ \bibinfo
  {author} {\bibfnamefont {A.}~\bibnamefont {Tapp}},\ }\bibfield  {title}
  {\bibinfo {title} {Quantum amplitude amplification and estimation},\ }\href
  {https://doi.org/10.1090/conm/305/05215} {\bibfield  {journal} {\bibinfo
  {journal} {Quant. Inf. Comp.}\ ,\ \bibinfo {pages} {53–74}} (\bibinfo
  {year} {2002})}\BibitemShut {NoStop}%
\bibitem [{\citenamefont {Ambainis}(2004)}]{ambainis2004}%
  \BibitemOpen
  \bibfield  {author} {\bibinfo {author} {\bibfnamefont {A.}~\bibnamefont
  {Ambainis}},\ }\bibfield  {title} {\bibinfo {title} {Quantum walks and their
  algorithmic applications},\ }\href@noop {} {\bibfield  {journal} {\bibinfo
  {journal} {arXiv e-prints}\ } (\bibinfo {year} {2004})},\ \Eprint
  {https://arxiv.org/abs/quant-ph/0403120} {arXiv:quant-ph/0403120}
  \BibitemShut {NoStop}%
\bibitem [{\citenamefont {Dantsin}\ \emph {et~al.}(2005)\citenamefont
  {Dantsin}, \citenamefont {Kreinovich},\ and\ \citenamefont
  {Wolpert}}]{dantsin2005_groverization}%
  \BibitemOpen
  \bibfield  {author} {\bibinfo {author} {\bibfnamefont {E.}~\bibnamefont
  {Dantsin}}, \bibinfo {author} {\bibfnamefont {V.}~\bibnamefont
  {Kreinovich}},\ and\ \bibinfo {author} {\bibfnamefont {A.}~\bibnamefont
  {Wolpert}},\ }\bibfield  {title} {\bibinfo {title} {{On quantum versions of
  record-breaking algorithms for SAT}},\ }\href
  {https://doi.org/10.1145/1107523.1107524} {\bibfield  {journal} {\bibinfo
  {journal} {SIGACT News}\ }\textbf {\bibinfo {volume} {36}},\ \bibinfo {pages}
  {103–108} (\bibinfo {year} {2005})}\BibitemShut {NoStop}%
\bibitem [{\citenamefont {Montanaro}(2016)}]{montanaro2016backtracking}%
  \BibitemOpen
  \bibfield  {author} {\bibinfo {author} {\bibfnamefont {A.}~\bibnamefont
  {Montanaro}},\ }\bibfield  {title} {\bibinfo {title} {Quantum walk speedup of
  backtracking algorithms},\ }\href@noop {} {\bibfield  {journal} {\bibinfo
  {journal} {arXiv e-prints}\ } (\bibinfo {year} {2016})},\ \Eprint
  {https://arxiv.org/abs/1509.02374} {arXiv:1509.02374} \BibitemShut {NoStop}%
\bibitem [{\citenamefont {Ambainis}\ and\ \citenamefont
  {Kokainis}(2017)}]{Ambainis2017}%
  \BibitemOpen
  \bibfield  {author} {\bibinfo {author} {\bibfnamefont {A.}~\bibnamefont
  {Ambainis}}\ and\ \bibinfo {author} {\bibfnamefont {M.}~\bibnamefont
  {Kokainis}},\ }\bibfield  {title} {\bibinfo {title} {Quantum algorithm for
  tree size estimation, with applications to backtracking and 2-player games},\
  }in\ \href {https://doi.org/10.1145/3055399.3055444} {\emph {\bibinfo
  {booktitle} {Proceedings of the 49th Annual ACM SIGACT Symposium on Theory of
  Computing}}},\ \bibinfo {series and number} {STOC ’17}\ (\bibinfo
  {publisher} {ACM},\ \bibinfo {year} {2017})\BibitemShut {NoStop}%
\bibitem [{\citenamefont {Babbush}\ \emph {et~al.}(2021)\citenamefont
  {Babbush}, \citenamefont {McClean}, \citenamefont {Newman}, \citenamefont
  {Gidney}, \citenamefont {Boixo},\ and\ \citenamefont
  {Neven}}]{beyondQuadratic}%
  \BibitemOpen
  \bibfield  {author} {\bibinfo {author} {\bibfnamefont {R.}~\bibnamefont
  {Babbush}}, \bibinfo {author} {\bibfnamefont {J.~R.}\ \bibnamefont
  {McClean}}, \bibinfo {author} {\bibfnamefont {M.}~\bibnamefont {Newman}},
  \bibinfo {author} {\bibfnamefont {C.}~\bibnamefont {Gidney}}, \bibinfo
  {author} {\bibfnamefont {S.}~\bibnamefont {Boixo}},\ and\ \bibinfo {author}
  {\bibfnamefont {H.}~\bibnamefont {Neven}},\ }\bibfield  {title} {\bibinfo
  {title} {Focus beyond quadratic speedups for error-corrected quantum
  advantage},\ }\href {https://doi.org/10.1103/PRXQuantum.2.010103} {\bibfield
  {journal} {\bibinfo  {journal} {PRX Quantum}\ }\textbf {\bibinfo {volume}
  {2}},\ \bibinfo {pages} {010103} (\bibinfo {year} {2021})}\BibitemShut
  {NoStop}%
\bibitem [{\citenamefont {Benjamin}\ \emph {et~al.}(2017)\citenamefont
  {Benjamin}, \citenamefont {Zhao},\ and\ \citenamefont
  {Fitzsimons}}]{benjamin2017}%
  \BibitemOpen
  \bibfield  {author} {\bibinfo {author} {\bibfnamefont {S.~C.}\ \bibnamefont
  {Benjamin}}, \bibinfo {author} {\bibfnamefont {L.}~\bibnamefont {Zhao}},\
  and\ \bibinfo {author} {\bibfnamefont {J.~F.}\ \bibnamefont {Fitzsimons}},\
  }\bibfield  {title} {\bibinfo {title} {{Measurement-driven quantum computing:
  Performance of a 3-SAT solver}},\ }\href@noop {} {\bibfield  {journal}
  {\bibinfo  {journal} {arXiv e-prints}\ } (\bibinfo {year} {2017})},\ \Eprint
  {https://arxiv.org/abs/1711.02687} {arXiv:1711.02687} \BibitemShut {NoStop}%
\bibitem [{\citenamefont {Quek}(2024)}]{NISQPlus}%
  \BibitemOpen
  \bibfield  {author} {\bibinfo {author} {\bibfnamefont {Y.}~\bibnamefont
  {Quek}},\ }\href@noop {} {} (\bibinfo {year} {2024}),\ \bibinfo {note} {the
  term NISQ+ has presumably first been used at the ``Seeking Quantum
  Advantage'' Workshop 2024, likely by Yihui Quek (private communication), and
  used in writing in Ref. \cite{Chenfeng}.}\BibitemShut {Stop}%
\bibitem [{\citenamefont {Eisert}\ and\ \citenamefont
  {Preskill}(2025)}]{MindTheGaps}%
  \BibitemOpen
  \bibfield  {author} {\bibinfo {author} {\bibfnamefont {J.}~\bibnamefont
  {Eisert}}\ and\ \bibinfo {author} {\bibfnamefont {J.}~\bibnamefont
  {Preskill}},\ }\bibfield  {title} {\bibinfo {title} {Mind the gaps: The
  fraught road to quantum advantage},\ }\href@noop {} {\bibfield  {journal}
  {\bibinfo  {journal} {arXiv e-prints}\ } (\bibinfo {year} {2025})},\ \Eprint
  {https://arxiv.org/abs/2510.19928} {arXiv:2510.19928} \BibitemShut {NoStop}%
\bibitem [{\citenamefont {Bartolucci}\ \emph {et~al.}(2023)\citenamefont
  {Bartolucci}, \citenamefont {Birchall}, \citenamefont {Bombin}, \citenamefont
  {Cable}, \citenamefont {Dawson}, \citenamefont {Gimeno-Segovia},
  \citenamefont {Johnston}, \citenamefont {Kieling}, \citenamefont {Nickerson},
  \citenamefont {Pant}, \citenamefont {Pastawski}, \citenamefont {Rudolph},\
  and\ \citenamefont {Sparrow}}]{bartolucci2021fusionbasedquantumcomputation}%
  \BibitemOpen
  \bibfield  {author} {\bibinfo {author} {\bibfnamefont {S.}~\bibnamefont
  {Bartolucci}}, \bibinfo {author} {\bibfnamefont {P.}~\bibnamefont
  {Birchall}}, \bibinfo {author} {\bibfnamefont {H.}~\bibnamefont {Bombin}},
  \bibinfo {author} {\bibfnamefont {H.}~\bibnamefont {Cable}}, \bibinfo
  {author} {\bibfnamefont {C.}~\bibnamefont {Dawson}}, \bibinfo {author}
  {\bibfnamefont {M.}~\bibnamefont {Gimeno-Segovia}}, \bibinfo {author}
  {\bibfnamefont {E.}~\bibnamefont {Johnston}}, \bibinfo {author}
  {\bibfnamefont {K.}~\bibnamefont {Kieling}}, \bibinfo {author} {\bibfnamefont
  {N.}~\bibnamefont {Nickerson}}, \bibinfo {author} {\bibfnamefont
  {M.}~\bibnamefont {Pant}}, \bibinfo {author} {\bibfnamefont {F.}~\bibnamefont
  {Pastawski}}, \bibinfo {author} {\bibfnamefont {T.}~\bibnamefont {Rudolph}},\
  and\ \bibinfo {author} {\bibfnamefont {C.}~\bibnamefont {Sparrow}},\
  }\bibfield  {title} {\bibinfo {title} {Fusion-based quantum computation},\
  }\href {https://doi.org/10.1038/s41467-023-36493-1} {\bibfield  {journal}
  {\bibinfo  {journal} {Nature Comm.}\ }\textbf {\bibinfo {volume} {14}},\
  \bibinfo {pages} {912} (\bibinfo {year} {2023})}\BibitemShut {NoStop}%
\bibitem [{\citenamefont {Bombin}\ \emph {et~al.}(2021)\citenamefont {Bombin},
  \citenamefont {Kim}, \citenamefont {Litinski}, \citenamefont {Nickerson},
  \citenamefont {Pant}, \citenamefont {Pastawski}, \citenamefont {Roberts},\
  and\ \citenamefont {Rudolph}}]{bombin2021}%
  \BibitemOpen
  \bibfield  {author} {\bibinfo {author} {\bibfnamefont {H.}~\bibnamefont
  {Bombin}}, \bibinfo {author} {\bibfnamefont {I.~H.}\ \bibnamefont {Kim}},
  \bibinfo {author} {\bibfnamefont {D.}~\bibnamefont {Litinski}}, \bibinfo
  {author} {\bibfnamefont {N.}~\bibnamefont {Nickerson}}, \bibinfo {author}
  {\bibfnamefont {M.}~\bibnamefont {Pant}}, \bibinfo {author} {\bibfnamefont
  {F.}~\bibnamefont {Pastawski}}, \bibinfo {author} {\bibfnamefont
  {S.}~\bibnamefont {Roberts}},\ and\ \bibinfo {author} {\bibfnamefont
  {T.}~\bibnamefont {Rudolph}},\ }\bibfield  {title} {\bibinfo {title}
  {Interleaving: Modular architectures for fault-tolerant photonic quantum
  computing},\ }\href@noop {} {\bibfield  {journal} {\bibinfo  {journal} {arXiv
  e-prints}\ } (\bibinfo {year} {2021})},\ \Eprint
  {https://arxiv.org/abs/2103.08612} {arXiv:2103.08612} \BibitemShut {NoStop}%
\bibitem [{\citenamefont {Evered}\ \emph {et~al.}(2023)\citenamefont {Evered},
  \citenamefont {Bluvstein}, \citenamefont {Kalinowski}, \citenamefont {Ebadi},
  \citenamefont {Manovitz}, \citenamefont {Zhou}, \citenamefont {Li},
  \citenamefont {Geim}, \citenamefont {Wang}, \citenamefont {Maskara} \emph
  {et~al.}}]{evered2023high}%
  \BibitemOpen
  \bibfield  {author} {\bibinfo {author} {\bibfnamefont {S.~J.}\ \bibnamefont
  {Evered}}, \bibinfo {author} {\bibfnamefont {D.}~\bibnamefont {Bluvstein}},
  \bibinfo {author} {\bibfnamefont {M.}~\bibnamefont {Kalinowski}}, \bibinfo
  {author} {\bibfnamefont {S.}~\bibnamefont {Ebadi}}, \bibinfo {author}
  {\bibfnamefont {T.}~\bibnamefont {Manovitz}}, \bibinfo {author}
  {\bibfnamefont {H.}~\bibnamefont {Zhou}}, \bibinfo {author} {\bibfnamefont
  {S.~H.}\ \bibnamefont {Li}}, \bibinfo {author} {\bibfnamefont {A.~A.}\
  \bibnamefont {Geim}}, \bibinfo {author} {\bibfnamefont {T.~T.}\ \bibnamefont
  {Wang}}, \bibinfo {author} {\bibfnamefont {N.}~\bibnamefont {Maskara}}, \emph
  {et~al.},\ }\bibfield  {title} {\bibinfo {title} {High-fidelity parallel
  entangling gates on a neutral-atom quantum computer},\ }\href
  {https://doi.org/10.1038/s41586-023-06481-y} {\bibfield  {journal} {\bibinfo
  {journal} {Nature}\ }\textbf {\bibinfo {volume} {622}},\ \bibinfo {pages}
  {268} (\bibinfo {year} {2023})}\BibitemShut {NoStop}%
\bibitem [{\citenamefont {Bluvstein}\ \emph {et~al.}(2022)\citenamefont
  {Bluvstein}, \citenamefont {Levine}, \citenamefont {Semeghini}, \citenamefont
  {Wang}, \citenamefont {Ebadi}, \citenamefont {Kalinowski}, \citenamefont
  {Keesling}, \citenamefont {Maskara}, \citenamefont {Pichler}, \citenamefont
  {Greiner} \emph {et~al.}}]{bluvstein2022quantum}%
  \BibitemOpen
  \bibfield  {author} {\bibinfo {author} {\bibfnamefont {D.}~\bibnamefont
  {Bluvstein}}, \bibinfo {author} {\bibfnamefont {H.}~\bibnamefont {Levine}},
  \bibinfo {author} {\bibfnamefont {G.}~\bibnamefont {Semeghini}}, \bibinfo
  {author} {\bibfnamefont {T.~T.}\ \bibnamefont {Wang}}, \bibinfo {author}
  {\bibfnamefont {S.}~\bibnamefont {Ebadi}}, \bibinfo {author} {\bibfnamefont
  {M.}~\bibnamefont {Kalinowski}}, \bibinfo {author} {\bibfnamefont
  {A.}~\bibnamefont {Keesling}}, \bibinfo {author} {\bibfnamefont
  {N.}~\bibnamefont {Maskara}}, \bibinfo {author} {\bibfnamefont
  {H.}~\bibnamefont {Pichler}}, \bibinfo {author} {\bibfnamefont
  {M.}~\bibnamefont {Greiner}}, \emph {et~al.},\ }\bibfield  {title} {\bibinfo
  {title} {A quantum processor based on coherent transport of entangled atom
  arrays},\ }\href {https://doi.org/10.1038/s41586-022-04592-6} {\bibfield
  {journal} {\bibinfo  {journal} {Nature}\ }\textbf {\bibinfo {volume} {604}},\
  \bibinfo {pages} {451} (\bibinfo {year} {2022})}\BibitemShut {NoStop}%
\bibitem [{\citenamefont {Bluvstein}\ \emph {et~al.}(2024)\citenamefont
  {Bluvstein}, \citenamefont {Evered}, \citenamefont {Geim}, \citenamefont
  {Li}, \citenamefont {Zhou}, \citenamefont {Manovitz}, \citenamefont {Ebadi},
  \citenamefont {Cain}, \citenamefont {Kalinowski}, \citenamefont {Hangleiter}
  \emph {et~al.}}]{bluvstein2024logical}%
  \BibitemOpen
  \bibfield  {author} {\bibinfo {author} {\bibfnamefont {D.}~\bibnamefont
  {Bluvstein}}, \bibinfo {author} {\bibfnamefont {S.~J.}\ \bibnamefont
  {Evered}}, \bibinfo {author} {\bibfnamefont {A.~A.}\ \bibnamefont {Geim}},
  \bibinfo {author} {\bibfnamefont {S.~H.}\ \bibnamefont {Li}}, \bibinfo
  {author} {\bibfnamefont {H.}~\bibnamefont {Zhou}}, \bibinfo {author}
  {\bibfnamefont {T.}~\bibnamefont {Manovitz}}, \bibinfo {author}
  {\bibfnamefont {S.}~\bibnamefont {Ebadi}}, \bibinfo {author} {\bibfnamefont
  {M.}~\bibnamefont {Cain}}, \bibinfo {author} {\bibfnamefont {M.}~\bibnamefont
  {Kalinowski}}, \bibinfo {author} {\bibfnamefont {D.}~\bibnamefont
  {Hangleiter}}, \emph {et~al.},\ }\bibfield  {title} {\bibinfo {title}
  {Logical quantum processor based on reconfigurable atom arrays},\ }\href
  {https://doi.org/10.1038/s41586-023-06927-3} {\bibfield  {journal} {\bibinfo
  {journal} {Nature}\ }\textbf {\bibinfo {volume} {626}},\ \bibinfo {pages}
  {58} (\bibinfo {year} {2024})}\BibitemShut {NoStop}%
\bibitem [{\citenamefont {Moses}\ \emph {et~al.}(2023)\citenamefont {Moses}
  \emph {et~al.}}]{PhysRevX.13.041052}%
  \BibitemOpen
  \bibfield  {author} {\bibinfo {author} {\bibfnamefont {S.~A.}\ \bibnamefont
  {Moses}} \emph {et~al.},\ }\bibfield  {title} {\bibinfo {title} {A race-track
  trapped-ion quantum processor},\ }\href
  {https://doi.org/10.1103/PhysRevX.13.041052} {\bibfield  {journal} {\bibinfo
  {journal} {Phys. Rev. X}\ }\textbf {\bibinfo {volume} {13}},\ \bibinfo
  {pages} {041052} (\bibinfo {year} {2023})}\BibitemShut {NoStop}%
\bibitem [{\citenamefont {Kielpinski}\ \emph {et~al.}(2002)\citenamefont
  {Kielpinski}, \citenamefont {Monroe},\ and\ \citenamefont
  {Wineland}}]{kielpinski2002architecture}%
  \BibitemOpen
  \bibfield  {author} {\bibinfo {author} {\bibfnamefont {D.}~\bibnamefont
  {Kielpinski}}, \bibinfo {author} {\bibfnamefont {C.}~\bibnamefont {Monroe}},\
  and\ \bibinfo {author} {\bibfnamefont {D.~J.}\ \bibnamefont {Wineland}},\
  }\bibfield  {title} {\bibinfo {title} {Architecture for a large-scale
  ion-trap quantum computer},\ }\href {https://doi.org/10.1038/nature00784}
  {\bibfield  {journal} {\bibinfo  {journal} {Nature}\ }\textbf {\bibinfo
  {volume} {417}},\ \bibinfo {pages} {709} (\bibinfo {year}
  {2002})}\BibitemShut {NoStop}%
\bibitem [{\citenamefont {Escalante}\ and\ \citenamefont
  {Raydan}(2011)}]{escalante2011alternating}%
  \BibitemOpen
  \bibfield  {author} {\bibinfo {author} {\bibfnamefont {R.}~\bibnamefont
  {Escalante}}\ and\ \bibinfo {author} {\bibfnamefont {M.}~\bibnamefont
  {Raydan}},\ }\href@noop {} {\emph {\bibinfo {title} {Alternating projection
  methods}}}\ (\bibinfo  {publisher} {SIAM},\ \bibinfo {year}
  {2011})\BibitemShut {NoStop}%
\bibitem [{\citenamefont {Aharonov}\ \emph {et~al.}(2011)\citenamefont
  {Aharonov}, \citenamefont {Arad}, \citenamefont {Landau},\ and\ \citenamefont
  {Vazirani}}]{aharonov2011DL}%
  \BibitemOpen
  \bibfield  {author} {\bibinfo {author} {\bibfnamefont {D.}~\bibnamefont
  {Aharonov}}, \bibinfo {author} {\bibfnamefont {I.}~\bibnamefont {Arad}},
  \bibinfo {author} {\bibfnamefont {Z.}~\bibnamefont {Landau}},\ and\ \bibinfo
  {author} {\bibfnamefont {U.}~\bibnamefont {Vazirani}},\ }\bibfield  {title}
  {\bibinfo {title} {{Quantum Hamiltonian complexity and the detectability
  lemma}},\ }\href@noop {} {\bibfield  {journal} {\bibinfo  {journal} {arXiv
  e-prints}\ } (\bibinfo {year} {2011})},\ \Eprint
  {https://arxiv.org/abs/1011.3445} {arXiv:1011.3445} \BibitemShut {NoStop}%
\bibitem [{\citenamefont {Anshu}\ \emph {et~al.}(2016)\citenamefont {Anshu},
  \citenamefont {Arad},\ and\ \citenamefont {Vidick}}]{Anshu_2016}%
  \BibitemOpen
  \bibfield  {author} {\bibinfo {author} {\bibfnamefont {A.}~\bibnamefont
  {Anshu}}, \bibinfo {author} {\bibfnamefont {I.}~\bibnamefont {Arad}},\ and\
  \bibinfo {author} {\bibfnamefont {T.}~\bibnamefont {Vidick}},\ }\bibfield
  {title} {\bibinfo {title} {Simple proof of the detectability lemma and
  spectral gap amplification},\ }\href
  {https://doi.org/10.1103/physrevb.93.205142} {\bibfield  {journal} {\bibinfo
  {journal} {Phys. Rev. B}\ }\textbf {\bibinfo {volume} {93}},\ \bibinfo
  {pages} {205142} (\bibinfo {year} {2016})}\BibitemShut {NoStop}%
\bibitem [{\citenamefont {Cubitt}\ \emph {et~al.}(2022)\citenamefont {Cubitt},
  \citenamefont {Perez-Garcia},\ and\ \citenamefont {Wolf}}]{cubitt2022}%
  \BibitemOpen
  \bibfield  {author} {\bibinfo {author} {\bibfnamefont {T.}~\bibnamefont
  {Cubitt}}, \bibinfo {author} {\bibfnamefont {D.}~\bibnamefont
  {Perez-Garcia}},\ and\ \bibinfo {author} {\bibfnamefont {M.~M.}\ \bibnamefont
  {Wolf}},\ }\bibfield  {title} {\bibinfo {title} {Undecidability of the
  spectral gap},\ }\href {https://doi.org/10.1017/fmp.2021.15} {\bibfield
  {journal} {\bibinfo  {journal} {Forum Math., Pi}\ }\textbf {\bibinfo {volume}
  {10}},\ \bibinfo {pages} {"e14"} (\bibinfo {year} {2022})}\BibitemShut
  {NoStop}%
\bibitem [{\citenamefont {Schaefer}(1978)}]{schaefer1978}%
  \BibitemOpen
  \bibfield  {author} {\bibinfo {author} {\bibfnamefont {T.~J.}\ \bibnamefont
  {Schaefer}},\ }\bibfield  {title} {\bibinfo {title} {The complexity of
  satisfiability problems},\ }in\ \href {https://doi.org/10.1145/800133.804350}
  {\emph {\bibinfo {booktitle} {Proceedings of the Tenth Annual ACM Symposium
  on Theory of Computing}}},\ \bibinfo {series and number} {STOC '78}\
  (\bibinfo  {publisher} {Association for Computing Machinery},\ \bibinfo
  {address} {New York, NY, USA},\ \bibinfo {year} {1978})\ p.\ \bibinfo {pages}
  {216–226}\BibitemShut {NoStop}%
\bibitem [{\citenamefont {Schöning}(1999)}]{Schoening1999}%
  \BibitemOpen
  \bibfield  {author} {\bibinfo {author} {\bibfnamefont {U.}~\bibnamefont
  {Schöning}},\ }\bibfield  {title} {\bibinfo {title} {{A probabilistic
  algorithm for k-SAT and constraint satisfaction problems}},\ }in\ \href
  {https://doi.org/10.1109/SFFCS.1999.814612} {\emph {\bibinfo {booktitle}
  {40th Annual Symposium on Foundations of Computer Science (Cat.
  No.99CB37039)}}}\ (\bibinfo {year} {1999})\ pp.\ \bibinfo {pages}
  {410--414}\BibitemShut {NoStop}%
\bibitem [{\citenamefont {Paturi}\ \emph {et~al.}(2005)\citenamefont {Paturi},
  \citenamefont {Pudl\'{a}k}, \citenamefont {Saks},\ and\ \citenamefont
  {Zane}}]{PPSZ_2005}%
  \BibitemOpen
  \bibfield  {author} {\bibinfo {author} {\bibfnamefont {R.}~\bibnamefont
  {Paturi}}, \bibinfo {author} {\bibfnamefont {P.}~\bibnamefont {Pudl\'{a}k}},
  \bibinfo {author} {\bibfnamefont {M.~E.}\ \bibnamefont {Saks}},\ and\
  \bibinfo {author} {\bibfnamefont {F.}~\bibnamefont {Zane}},\ }\bibfield
  {title} {\bibinfo {title} {{An improved exponential-time algorithm for
  k-SAT}},\ }\href {https://doi.org/10.1145/1066100.1066101} {\bibfield
  {journal} {\bibinfo  {journal} {J. ACM}\ }\textbf {\bibinfo {volume} {52}},\
  \bibinfo {pages} {337–364} (\bibinfo {year} {2005})}\BibitemShut {NoStop}%
\bibitem [{\citenamefont {Hertli}(2011)}]{Hertli_PPSZ}%
  \BibitemOpen
  \bibfield  {author} {\bibinfo {author} {\bibfnamefont {T.}~\bibnamefont
  {Hertli}},\ }\bibfield  {title} {\bibinfo {title} {{3-SAT faster and simpler
  - Unique-SAT bounds for PPSZ hold in general}},\ }in\ \href
  {https://doi.org/10.1109/FOCS.2011.22} {\emph {\bibinfo {booktitle} {2011
  IEEE 52nd Annual Symposium on Foundations of Computer Science}}}\ (\bibinfo
  {year} {2011})\ pp.\ \bibinfo {pages} {277--284}\BibitemShut {NoStop}%
\bibitem [{\citenamefont {Hansen}\ \emph {et~al.}(2019)\citenamefont {Hansen},
  \citenamefont {Kaplan}, \citenamefont {Zamir},\ and\ \citenamefont
  {Zwick}}]{Hansen_biased_PPSZ}%
  \BibitemOpen
  \bibfield  {author} {\bibinfo {author} {\bibfnamefont {T.~D.}\ \bibnamefont
  {Hansen}}, \bibinfo {author} {\bibfnamefont {H.}~\bibnamefont {Kaplan}},
  \bibinfo {author} {\bibfnamefont {O.}~\bibnamefont {Zamir}},\ and\ \bibinfo
  {author} {\bibfnamefont {U.}~\bibnamefont {Zwick}},\ }\bibfield  {title}
  {\bibinfo {title} {{Faster k-SAT algorithms using biased-PPSZ}},\ }in\ \href
  {https://doi.org/10.1145/3313276.3316359} {\emph {\bibinfo {booktitle}
  {Proceedings of the 51st Annual ACM SIGACT Symposium on Theory of
  Computing}}},\ \bibinfo {series and number} {STOC 2019}\ (\bibinfo
  {publisher} {Association for Computing Machinery},\ \bibinfo {address} {New
  York, NY, USA},\ \bibinfo {year} {2019})\ p.\ \bibinfo {pages}
  {578–589}\BibitemShut {NoStop}%
\bibitem [{\citenamefont {Sch{\"o}ning}\ and\ \citenamefont
  {Tor{\'a}n}(2013)}]{schöning2013satisfiability}%
  \BibitemOpen
  \bibfield  {author} {\bibinfo {author} {\bibfnamefont {U.}~\bibnamefont
  {Sch{\"o}ning}}\ and\ \bibinfo {author} {\bibfnamefont {J.}~\bibnamefont
  {Tor{\'a}n}},\ }\href
  {https://api.pageplace.de/preview/DT0400.9783865417244_A24444428/preview-9783865417244_A24444428.pdf}
  {\emph {\bibinfo {title} {Das Erfüllbarkeitsproblem SAT: Algorithmen und
  Analysen}}},\ Mathematik f{\"u}r Anwendungen\ (\bibinfo  {publisher}
  {Lehmanns Media},\ \bibinfo {year} {2013})\BibitemShut {NoStop}%
\bibitem [{\citenamefont {Biere}\ and\ \citenamefont
  {Maaren}(2021)}]{Biere2021}%
  \BibitemOpen
  \bibfield  {author} {\bibinfo {author} {\bibfnamefont {A.}~\bibnamefont
  {Biere}}\ and\ \bibinfo {author} {\bibfnamefont {H.~v.}\ \bibnamefont
  {Maaren}},\ }\href
  {https://www.iospress.com/catalog/books/handbook-of-satisfiability-2} {\emph
  {\bibinfo {title} {Handbook of satisfiability}}},\ \bibinfo {edition} {2nd}\
  ed.\ (\bibinfo  {publisher} {IOS Press},\ \bibinfo {address} {München},\
  \bibinfo {year} {2021})\BibitemShut {NoStop}%
\bibitem [{\citenamefont {Davis}\ \emph
  {et~al.}(1962{\natexlab{a}})\citenamefont {Davis}, \citenamefont {Logemann},\
  and\ \citenamefont {Loveland}}]{Davis1960}%
  \BibitemOpen
  \bibfield  {author} {\bibinfo {author} {\bibfnamefont {M.}~\bibnamefont
  {Davis}}, \bibinfo {author} {\bibfnamefont {G.}~\bibnamefont {Logemann}},\
  and\ \bibinfo {author} {\bibfnamefont {D.}~\bibnamefont {Loveland}},\
  }\bibfield  {title} {\bibinfo {title} {A machine program for
  theorem-proving},\ }\href {https://doi.org/10.1145/368273.368557} {\bibfield
  {journal} {\bibinfo  {journal} {Commun. ACM}\ }\textbf {\bibinfo {volume}
  {5}},\ \bibinfo {pages} {394–397} (\bibinfo {year}
  {1962}{\natexlab{a}})}\BibitemShut {NoStop}%
\bibitem [{\citenamefont {Davis}\ \emph
  {et~al.}(1962{\natexlab{b}})\citenamefont {Davis}, \citenamefont {Logemann},\
  and\ \citenamefont {Loveland}}]{Davis1962}%
  \BibitemOpen
  \bibfield  {author} {\bibinfo {author} {\bibfnamefont {M.}~\bibnamefont
  {Davis}}, \bibinfo {author} {\bibfnamefont {G.}~\bibnamefont {Logemann}},\
  and\ \bibinfo {author} {\bibfnamefont {D.}~\bibnamefont {Loveland}},\
  }\bibfield  {title} {\bibinfo {title} {A machine program for
  theorem-proving},\ }\href {https://doi.org/10.1145/368273.368557} {\bibfield
  {journal} {\bibinfo  {journal} {Commun. ACM}\ }\textbf {\bibinfo {volume}
  {5}},\ \bibinfo {pages} {394–397} (\bibinfo {year}
  {1962}{\natexlab{b}})}\BibitemShut {NoStop}%
\bibitem [{\citenamefont {Marques-Silva}\ and\ \citenamefont
  {Sakallah}(1999)}]{Grasp_CDCL}%
  \BibitemOpen
  \bibfield  {author} {\bibinfo {author} {\bibfnamefont {J.}~\bibnamefont
  {Marques-Silva}}\ and\ \bibinfo {author} {\bibfnamefont {K.}~\bibnamefont
  {Sakallah}},\ }\bibfield  {title} {\bibinfo {title} {Grasp: a search
  algorithm for propositional satisfiability},\ }\href
  {https://doi.org/10.1109/12.769433} {\bibfield  {journal} {\bibinfo
  {journal} {IEEE Trans. Comp.}\ }\textbf {\bibinfo {volume} {48}},\ \bibinfo
  {pages} {506} (\bibinfo {year} {1999})}\BibitemShut {NoStop}%
\bibitem [{\citenamefont {Moskewicz}\ \emph {et~al.}(2001)\citenamefont
  {Moskewicz}, \citenamefont {Madigan}, \citenamefont {Zhao}, \citenamefont
  {Zhang},\ and\ \citenamefont {Malik}}]{Chaff_CDCL}%
  \BibitemOpen
  \bibfield  {author} {\bibinfo {author} {\bibfnamefont {M.~W.}\ \bibnamefont
  {Moskewicz}}, \bibinfo {author} {\bibfnamefont {C.~F.}\ \bibnamefont
  {Madigan}}, \bibinfo {author} {\bibfnamefont {Y.}~\bibnamefont {Zhao}},
  \bibinfo {author} {\bibfnamefont {L.}~\bibnamefont {Zhang}},\ and\ \bibinfo
  {author} {\bibfnamefont {S.}~\bibnamefont {Malik}},\ }\bibfield  {title}
  {\bibinfo {title} {{Chaff: engineering an efficient SAT solver}},\ }in\ \href
  {https://doi.org/10.1145/378239.379017} {\emph {\bibinfo {booktitle}
  {Proceedings of the 38th Annual Design Automation Conference}}},\ \bibinfo
  {series and number} {DAC '01}\ (\bibinfo  {publisher} {Association for
  Computing Machinery},\ \bibinfo {address} {New York, NY, USA},\ \bibinfo
  {year} {2001})\ p.\ \bibinfo {pages} {530–535}\BibitemShut {NoStop}%
\bibitem [{\citenamefont {Ganesh}\ and\ \citenamefont
  {Vardi}(2021)}]{Ganesh2020_unreasonable_effectiveness_SAT}%
  \BibitemOpen
  \bibfield  {author} {\bibinfo {author} {\bibfnamefont {V.}~\bibnamefont
  {Ganesh}}\ and\ \bibinfo {author} {\bibfnamefont {M.~Y.}\ \bibnamefont
  {Vardi}},\ }\bibinfo {title} {On the unreasonable effectiveness of sat
  solvers},\ in\ \href@noop {} {\emph {\bibinfo {booktitle} {Beyond the
  Worst-Case Analysis of Algorithms}}},\ \bibinfo {editor} {edited by\ \bibinfo
  {editor} {\bibfnamefont {T.}~\bibnamefont {Roughgarden}}}\ (\bibinfo
  {publisher} {Cambridge University Press},\ \bibinfo {year} {2021})\ p.\
  \bibinfo {pages} {547–566}\BibitemShut {NoStop}%
\bibitem [{\citenamefont {Dunjko}\ \emph {et~al.}(2018)\citenamefont {Dunjko},
  \citenamefont {Ge},\ and\ \citenamefont {Cirac}}]{Dunjko2018}%
  \BibitemOpen
  \bibfield  {author} {\bibinfo {author} {\bibfnamefont {V.}~\bibnamefont
  {Dunjko}}, \bibinfo {author} {\bibfnamefont {Y.}~\bibnamefont {Ge}},\ and\
  \bibinfo {author} {\bibfnamefont {J.~I.}\ \bibnamefont {Cirac}},\ }\bibfield
  {title} {\bibinfo {title} {Computational speedups using small quantum
  devices},\ }\href {https://doi.org/10.1103/physrevlett.121.250501} {\bibfield
   {journal} {\bibinfo  {journal} {Phys. Rev. Lett.}\ }\textbf {\bibinfo
  {volume} {121}},\ \bibinfo {pages} {250501} (\bibinfo {year}
  {2018})}\BibitemShut {NoStop}%
\bibitem [{\citenamefont {Eshaghian}\ \emph {et~al.}(2024)\citenamefont
  {Eshaghian}, \citenamefont {Wilkening}, \citenamefont {Åberg},\ and\
  \citenamefont {Gross}}]{eshaghian2024_hybrid_schoening}%
  \BibitemOpen
  \bibfield  {author} {\bibinfo {author} {\bibfnamefont {V.}~\bibnamefont
  {Eshaghian}}, \bibinfo {author} {\bibfnamefont {S.}~\bibnamefont
  {Wilkening}}, \bibinfo {author} {\bibfnamefont {J.}~\bibnamefont {Åberg}},\
  and\ \bibinfo {author} {\bibfnamefont {D.}~\bibnamefont {Gross}},\ }\bibfield
   {title} {\bibinfo {title} {{Runtime-coherence trade-offs for hybrid
  SAT-solvers}},\ }\href@noop {} {\bibfield  {journal} {\bibinfo  {journal}
  {arXiv e-prints}\ } (\bibinfo {year} {2024})},\ \Eprint
  {https://arxiv.org/abs/2404.15235} {arXiv:2404.15235} \BibitemShut {NoStop}%
\bibitem [{\citenamefont {Campbell}\ \emph {et~al.}(2019)\citenamefont
  {Campbell}, \citenamefont {Khurana},\ and\ \citenamefont
  {Montanaro}}]{Campbell2019}%
  \BibitemOpen
  \bibfield  {author} {\bibinfo {author} {\bibfnamefont {E.}~\bibnamefont
  {Campbell}}, \bibinfo {author} {\bibfnamefont {A.}~\bibnamefont {Khurana}},\
  and\ \bibinfo {author} {\bibfnamefont {A.}~\bibnamefont {Montanaro}},\
  }\bibfield  {title} {\bibinfo {title} {Applying quantum algorithms to
  constraint satisfaction problems},\ }\href
  {https://doi.org/10.22331/q-2019-07-18-167} {\bibfield  {journal} {\bibinfo
  {journal} {Quantum}\ }\textbf {\bibinfo {volume} {3}},\ \bibinfo {pages}
  {167} (\bibinfo {year} {2019})}\BibitemShut {NoStop}%
\bibitem [{\citenamefont {Brehm}\ and\ \citenamefont
  {Weggemans}(2024)}]{brehm2024}%
  \BibitemOpen
  \bibfield  {author} {\bibinfo {author} {\bibfnamefont {M.}~\bibnamefont
  {Brehm}}\ and\ \bibinfo {author} {\bibfnamefont {J.}~\bibnamefont
  {Weggemans}},\ }\bibfield  {title} {\bibinfo {title} {{Assessing
  fault-tolerant quantum advantage for $k$-SAT with structure}},\ }\href@noop
  {} {\bibfield  {journal} {\bibinfo  {journal} {arXiv e-prints}\ } (\bibinfo
  {year} {2024})},\ \Eprint {https://arxiv.org/abs/2412.13274}
  {arXiv:2412.13274} \BibitemShut {NoStop}%
\bibitem [{\citenamefont {Kadowaki}\ and\ \citenamefont
  {Nishimori}(1998)}]{kadowaki1998}%
  \BibitemOpen
  \bibfield  {author} {\bibinfo {author} {\bibfnamefont {T.}~\bibnamefont
  {Kadowaki}}\ and\ \bibinfo {author} {\bibfnamefont {H.}~\bibnamefont
  {Nishimori}},\ }\bibfield  {title} {\bibinfo {title} {{Quantum annealing in
  the transverse Ising model}},\ }\href
  {https://doi.org/10.1103/PhysRevE.58.5355} {\bibfield  {journal} {\bibinfo
  {journal} {Phys. Rev. E}\ }\textbf {\bibinfo {volume} {58}},\ \bibinfo
  {pages} {5355} (\bibinfo {year} {1998})}\BibitemShut {NoStop}%
\bibitem [{\citenamefont {Farhi}\ \emph {et~al.}(2000)\citenamefont {Farhi},
  \citenamefont {Goldstone}, \citenamefont {Gutmann},\ and\ \citenamefont
  {Sipser}}]{farhi2000}%
  \BibitemOpen
  \bibfield  {author} {\bibinfo {author} {\bibfnamefont {E.}~\bibnamefont
  {Farhi}}, \bibinfo {author} {\bibfnamefont {J.}~\bibnamefont {Goldstone}},
  \bibinfo {author} {\bibfnamefont {S.}~\bibnamefont {Gutmann}},\ and\ \bibinfo
  {author} {\bibfnamefont {M.}~\bibnamefont {Sipser}},\ }\bibfield  {title}
  {\bibinfo {title} {Quantum computation by adiabatic evolution},\ }\href@noop
  {} {\bibfield  {journal} {\bibinfo  {journal} {arXiv e-prints}\ } (\bibinfo
  {year} {2000})},\ \Eprint {https://arxiv.org/abs/quant-ph/0001106}
  {arXiv:quant-ph/0001106} \BibitemShut {NoStop}%
\bibitem [{\citenamefont {Kadowaki}(2002)}]{kadowaki2002}%
  \BibitemOpen
  \bibfield  {author} {\bibinfo {author} {\bibfnamefont {T.}~\bibnamefont
  {Kadowaki}},\ }\bibfield  {title} {\bibinfo {title} {Study of optimization
  problems by quantum annealing},\ }\href@noop {} {\bibfield  {journal}
  {\bibinfo  {journal} {arXiv e-prints}\ } (\bibinfo {year} {2002})},\ \Eprint
  {https://arxiv.org/abs/quant-ph/0205020} {arXiv:quant-ph/0205020}
  \BibitemShut {NoStop}%
\bibitem [{\citenamefont {van Dam}\ \emph {et~al.}(2001)\citenamefont {van
  Dam}, \citenamefont {Mosca},\ and\ \citenamefont {Vazirani}}]{vanDam2001}%
  \BibitemOpen
  \bibfield  {author} {\bibinfo {author} {\bibfnamefont {W.}~\bibnamefont {van
  Dam}}, \bibinfo {author} {\bibfnamefont {M.}~\bibnamefont {Mosca}},\ and\
  \bibinfo {author} {\bibfnamefont {U.}~\bibnamefont {Vazirani}},\ }\bibfield
  {title} {\bibinfo {title} {How powerful is adiabatic quantum computation?},\
  }in\ \href {https://doi.org/10.1109/SFCS.2001.959902} {\emph {\bibinfo
  {booktitle} {Proc. 42nd IEEE Symp. Found Comp. Sc.}}}\ (\bibinfo {year}
  {2001})\ pp.\ \bibinfo {pages} {279--287}\BibitemShut {NoStop}%
\bibitem [{\citenamefont {Farhi}\ \emph {et~al.}(2002)\citenamefont {Farhi},
  \citenamefont {Goldstone},\ and\ \citenamefont {Gutmann}}]{farhi2002}%
  \BibitemOpen
  \bibfield  {author} {\bibinfo {author} {\bibfnamefont {E.}~\bibnamefont
  {Farhi}}, \bibinfo {author} {\bibfnamefont {J.}~\bibnamefont {Goldstone}},\
  and\ \bibinfo {author} {\bibfnamefont {S.}~\bibnamefont {Gutmann}},\
  }\bibfield  {title} {\bibinfo {title} {Quantum adiabatic evolution algorithms
  with different paths},\ }\href@noop {} {\bibfield  {journal} {\bibinfo
  {journal} {arXiv e-prints}\ } (\bibinfo {year} {2002})},\ \Eprint
  {https://arxiv.org/abs/quant-ph/0208135} {arXiv:quant-ph/0208135}
  \BibitemShut {NoStop}%
\bibitem [{\citenamefont {Reichardt}(2004)}]{Reichardt2004}%
  \BibitemOpen
  \bibfield  {author} {\bibinfo {author} {\bibfnamefont {B.~W.}\ \bibnamefont
  {Reichardt}},\ }\bibfield  {title} {\bibinfo {title} {The quantum adiabatic
  optimization algorithm and local minima},\ }in\ \href
  {https://doi.org/10.1145/1007352.1007428} {\emph {\bibinfo {booktitle} {Proc.
  36th Ann. ACM Symp. Th. Comp.}}},\ \bibinfo {series and number} {STOC '04}\
  (\bibinfo  {publisher} {Association for Computing Machinery},\ \bibinfo
  {address} {New York, NY, USA},\ \bibinfo {year} {2004})\ p.\ \bibinfo {pages}
  {502–510}\BibitemShut {NoStop}%
\bibitem [{\citenamefont {Farhi}\ \emph {et~al.}(2005)\citenamefont {Farhi},
  \citenamefont {Goldstone}, \citenamefont {Gutmann},\ and\ \citenamefont
  {Nagaj}}]{farhi2005}%
  \BibitemOpen
  \bibfield  {author} {\bibinfo {author} {\bibfnamefont {E.}~\bibnamefont
  {Farhi}}, \bibinfo {author} {\bibfnamefont {J.}~\bibnamefont {Goldstone}},
  \bibinfo {author} {\bibfnamefont {S.}~\bibnamefont {Gutmann}},\ and\ \bibinfo
  {author} {\bibfnamefont {D.}~\bibnamefont {Nagaj}},\ }\bibfield  {title}
  {\bibinfo {title} {How to make the quantum adiabatic algorithm fail},\
  }\href@noop {} {\bibfield  {journal} {\bibinfo  {journal} {arXiv e-prints}\ }
  (\bibinfo {year} {2005})},\ \Eprint {https://arxiv.org/abs/quant-ph/0512159}
  {arXiv:quant-ph/0512159} \BibitemShut {NoStop}%
\bibitem [{\citenamefont {\ifmmode \check{Z}\else
  \v{Z}\fi{}nidari\ifmmode~\check{c}\else \v{c}\fi{}}\ and\ \citenamefont
  {Horvat}(2006)}]{Znidaric2006}%
  \BibitemOpen
  \bibfield  {author} {\bibinfo {author} {\bibfnamefont {M.}~\bibnamefont
  {\ifmmode \check{Z}\else \v{Z}\fi{}nidari\ifmmode~\check{c}\else
  \v{c}\fi{}}}\ and\ \bibinfo {author} {\bibfnamefont {M.}~\bibnamefont
  {Horvat}},\ }\bibfield  {title} {\bibinfo {title} {{Exponential complexity of
  an adiabatic algorithm for an NP-complete problem}},\ }\href
  {https://doi.org/10.1103/PhysRevA.73.022329} {\bibfield  {journal} {\bibinfo
  {journal} {Phys. Rev. A}\ }\textbf {\bibinfo {volume} {73}},\ \bibinfo
  {pages} {022329} (\bibinfo {year} {2006})}\BibitemShut {NoStop}%
\bibitem [{\citenamefont {Farhi}\ \emph {et~al.}(2009)\citenamefont {Farhi},
  \citenamefont {Goldstone}, \citenamefont {Gosset}, \citenamefont {Gutmann},
  \citenamefont {Meyer},\ and\ \citenamefont {Shor}}]{farhi2010}%
  \BibitemOpen
  \bibfield  {author} {\bibinfo {author} {\bibfnamefont {E.}~\bibnamefont
  {Farhi}}, \bibinfo {author} {\bibfnamefont {J.}~\bibnamefont {Goldstone}},
  \bibinfo {author} {\bibfnamefont {D.}~\bibnamefont {Gosset}}, \bibinfo
  {author} {\bibfnamefont {S.}~\bibnamefont {Gutmann}}, \bibinfo {author}
  {\bibfnamefont {H.~B.}\ \bibnamefont {Meyer}},\ and\ \bibinfo {author}
  {\bibfnamefont {P.}~\bibnamefont {Shor}},\ }\bibfield  {title} {\bibinfo
  {title} {Quantum adiabatic algorithms, small gaps, and different paths},\
  }\href@noop {} {\bibfield  {journal} {\bibinfo  {journal} {arXiv e-prints}\ }
  (\bibinfo {year} {2009})},\ \Eprint {https://arxiv.org/abs/0909.4766}
  {arXiv:0909.4766} \BibitemShut {NoStop}%
\bibitem [{\citenamefont {Altshuler}\ \emph {et~al.}(2010)\citenamefont
  {Altshuler}, \citenamefont {Krovi},\ and\ \citenamefont
  {Roland}}]{Altshuler2010}%
  \BibitemOpen
  \bibfield  {author} {\bibinfo {author} {\bibfnamefont {B.}~\bibnamefont
  {Altshuler}}, \bibinfo {author} {\bibfnamefont {H.}~\bibnamefont {Krovi}},\
  and\ \bibinfo {author} {\bibfnamefont {J.}~\bibnamefont {Roland}},\
  }\bibfield  {title} {\bibinfo {title} {Anderson localization makes adiabatic
  quantum optimization fail},\ }\href {https://doi.org/10.1073/pnas.1002116107}
  {\bibfield  {journal} {\bibinfo  {journal} {Proc. Natl. Ac. Sc.}\ }\textbf
  {\bibinfo {volume} {107}},\ \bibinfo {pages} {12446} (\bibinfo {year}
  {2010})}\BibitemShut {NoStop}%
\bibitem [{\citenamefont {Hen}(2014)}]{hen2014}%
  \BibitemOpen
  \bibfield  {author} {\bibinfo {author} {\bibfnamefont {I.}~\bibnamefont
  {Hen}},\ }\bibfield  {title} {\bibinfo {title} {Continuous-time quantum
  algorithms for unstructured problems},\ }\href
  {https://doi.org/10.1088/1751-8113/47/4/045305} {\bibfield  {journal}
  {\bibinfo  {journal} {J. Phys. A}\ }\textbf {\bibinfo {volume} {47}},\
  \bibinfo {pages} {045305} (\bibinfo {year} {2014})}\BibitemShut {NoStop}%
\bibitem [{\citenamefont {Werner}\ \emph {et~al.}(2023)\citenamefont {Werner},
  \citenamefont {Garc\'{\i}a-S\'aez},\ and\ \citenamefont
  {Estarellas}}]{werner2023}%
  \BibitemOpen
  \bibfield  {author} {\bibinfo {author} {\bibfnamefont {M.}~\bibnamefont
  {Werner}}, \bibinfo {author} {\bibfnamefont {A.}~\bibnamefont
  {Garc\'{\i}a-S\'aez}},\ and\ \bibinfo {author} {\bibfnamefont {M.~P.}\
  \bibnamefont {Estarellas}},\ }\bibfield  {title} {\bibinfo {title} {Bounding
  first-order quantum phase transitions in adiabatic quantum computing},\
  }\href {https://doi.org/10.1103/PhysRevResearch.5.043236} {\bibfield
  {journal} {\bibinfo  {journal} {Phys. Rev. Res.}\ }\textbf {\bibinfo {volume}
  {5}},\ \bibinfo {pages} {043236} (\bibinfo {year} {2023})}\BibitemShut
  {NoStop}%
\bibitem [{\citenamefont {Braida}\ \emph {et~al.}(2024)\citenamefont {Braida},
  \citenamefont {Chakraborty}, \citenamefont {Chaudhuri}, \citenamefont
  {Cunningham}, \citenamefont {Menavlikar}, \citenamefont {Novo},\ and\
  \citenamefont {Roland}}]{braida2025}%
  \BibitemOpen
  \bibfield  {author} {\bibinfo {author} {\bibfnamefont {A.}~\bibnamefont
  {Braida}}, \bibinfo {author} {\bibfnamefont {S.}~\bibnamefont {Chakraborty}},
  \bibinfo {author} {\bibfnamefont {A.}~\bibnamefont {Chaudhuri}}, \bibinfo
  {author} {\bibfnamefont {J.}~\bibnamefont {Cunningham}}, \bibinfo {author}
  {\bibfnamefont {R.}~\bibnamefont {Menavlikar}}, \bibinfo {author}
  {\bibfnamefont {L.}~\bibnamefont {Novo}},\ and\ \bibinfo {author}
  {\bibfnamefont {J.}~\bibnamefont {Roland}},\ }\bibfield  {title} {\bibinfo
  {title} {Unstructured adiabatic quantum optimization: Optimality with
  limitations},\ }\href@noop {} {\bibfield  {journal} {\bibinfo  {journal}
  {arXiv e-prints}\ } (\bibinfo {year} {2024})},\ \Eprint
  {https://arxiv.org/abs/2411.05736} {arXiv:2411.05736} \BibitemShut {NoStop}%
\bibitem [{\citenamefont {Boulebnane}\ and\ \citenamefont
  {Montanaro}(2024)}]{Montanaro_SAT_with_QAQO}%
  \BibitemOpen
  \bibfield  {author} {\bibinfo {author} {\bibfnamefont {S.}~\bibnamefont
  {Boulebnane}}\ and\ \bibinfo {author} {\bibfnamefont {A.}~\bibnamefont
  {Montanaro}},\ }\bibfield  {title} {\bibinfo {title} {{Solving Boolean
  satisfiability problems with the quantum approximate optimization
  algorithm}},\ }\href {https://doi.org/10.1103/PRXQuantum.5.030348} {\bibfield
   {journal} {\bibinfo  {journal} {PRX Quantum}\ }\textbf {\bibinfo {volume}
  {5}},\ \bibinfo {pages} {030348} (\bibinfo {year} {2024})}\BibitemShut
  {NoStop}%
\bibitem [{\citenamefont {Montanaro}\ and\ \citenamefont
  {Zhou}(2024)}]{montanaro2024QAOA}%
  \BibitemOpen
  \bibfield  {author} {\bibinfo {author} {\bibfnamefont {A.}~\bibnamefont
  {Montanaro}}\ and\ \bibinfo {author} {\bibfnamefont {L.}~\bibnamefont
  {Zhou}},\ }\bibfield  {title} {\bibinfo {title} {{Quantum speedups in solving
  near-symmetric optimization problems by low-depth QAOA}},\ }\href@noop {}
  {\bibfield  {journal} {\bibinfo  {journal} {arXiv e-prints}\ } (\bibinfo
  {year} {2024})},\ \Eprint {https://arxiv.org/abs/2411.04979}
  {arXiv:2411.04979} \BibitemShut {NoStop}%
\bibitem [{\citenamefont {Zhang}\ \emph
  {et~al.}(2024{\natexlab{a}})\citenamefont {Zhang}, \citenamefont {Paredes},
  \citenamefont {Sundar}, \citenamefont {Quiroga}, \citenamefont {Kyrillidis},
  \citenamefont {Duenas-Osorio}, \citenamefont {Pagano},\ and\ \citenamefont
  {Hazzard}}]{zhang2024}%
  \BibitemOpen
  \bibfield  {author} {\bibinfo {author} {\bibfnamefont {Z.}~\bibnamefont
  {Zhang}}, \bibinfo {author} {\bibfnamefont {R.}~\bibnamefont {Paredes}},
  \bibinfo {author} {\bibfnamefont {B.}~\bibnamefont {Sundar}}, \bibinfo
  {author} {\bibfnamefont {D.}~\bibnamefont {Quiroga}}, \bibinfo {author}
  {\bibfnamefont {A.}~\bibnamefont {Kyrillidis}}, \bibinfo {author}
  {\bibfnamefont {L.}~\bibnamefont {Duenas-Osorio}}, \bibinfo {author}
  {\bibfnamefont {G.}~\bibnamefont {Pagano}},\ and\ \bibinfo {author}
  {\bibfnamefont {K.~R.~A.}\ \bibnamefont {Hazzard}},\ }\bibfield  {title}
  {\bibinfo {title} {{Grover-QAOA for 3-SAT: Quadratic speedup, fair-sampling,
  and parameter clustering}},\ }\href@noop {} {\bibfield  {journal} {\bibinfo
  {journal} {arXiv e-prints}\ } (\bibinfo {year} {2024}{\natexlab{a}})},\
  \Eprint {https://arxiv.org/abs/2402.02585} {arXiv:2402.02585} \BibitemShut
  {NoStop}%
\bibitem [{\citenamefont {Mandl}\ \emph {et~al.}(2024)\citenamefont {Mandl},
  \citenamefont {Barzen}, \citenamefont {Bechtold}, \citenamefont {Leymann},\
  and\ \citenamefont {Wild}}]{Mandl_2024}%
  \BibitemOpen
  \bibfield  {author} {\bibinfo {author} {\bibfnamefont {A.}~\bibnamefont
  {Mandl}}, \bibinfo {author} {\bibfnamefont {J.}~\bibnamefont {Barzen}},
  \bibinfo {author} {\bibfnamefont {M.}~\bibnamefont {Bechtold}}, \bibinfo
  {author} {\bibfnamefont {F.}~\bibnamefont {Leymann}},\ and\ \bibinfo {author}
  {\bibfnamefont {K.}~\bibnamefont {Wild}},\ }\bibfield  {title} {\bibinfo
  {title} {{Amplitude amplification-inspired QAOA: improving the success
  probability for solving 3SAT}},\ }\href
  {https://doi.org/10.1088/2058-9565/ad141d} {\bibfield  {journal} {\bibinfo
  {journal} {Quant. Sc. Tech.}\ }\textbf {\bibinfo {volume} {9}},\ \bibinfo
  {pages} {015028} (\bibinfo {year} {2024})}\BibitemShut {NoStop}%
\bibitem [{\citenamefont {Childs}\ \emph {et~al.}(2002)\citenamefont {Childs},
  \citenamefont {Deotto}, \citenamefont {Farhi}, \citenamefont {Goldstone},
  \citenamefont {Gutmann},\ and\ \citenamefont {Landahl}}]{Childs2002}%
  \BibitemOpen
  \bibfield  {author} {\bibinfo {author} {\bibfnamefont {A.~M.}\ \bibnamefont
  {Childs}}, \bibinfo {author} {\bibfnamefont {E.}~\bibnamefont {Deotto}},
  \bibinfo {author} {\bibfnamefont {E.}~\bibnamefont {Farhi}}, \bibinfo
  {author} {\bibfnamefont {J.}~\bibnamefont {Goldstone}}, \bibinfo {author}
  {\bibfnamefont {S.}~\bibnamefont {Gutmann}},\ and\ \bibinfo {author}
  {\bibfnamefont {A.~J.}\ \bibnamefont {Landahl}},\ }\bibfield  {title}
  {\bibinfo {title} {Quantum search by measurement},\ }\href
  {https://doi.org/10.1103/PhysRevA.66.032314} {\bibfield  {journal} {\bibinfo
  {journal} {Phys. Rev. A}\ }\textbf {\bibinfo {volume} {66}},\ \bibinfo
  {pages} {032314} (\bibinfo {year} {2002})}\BibitemShut {NoStop}%
\bibitem [{\citenamefont {Verstraete}\ \emph {et~al.}(2009)\citenamefont
  {Verstraete}, \citenamefont {Wolf},\ and\ \citenamefont
  {I.~Cirac}}]{Verstraete2009}%
  \BibitemOpen
  \bibfield  {author} {\bibinfo {author} {\bibfnamefont {F.}~\bibnamefont
  {Verstraete}}, \bibinfo {author} {\bibfnamefont {M.~M.}\ \bibnamefont
  {Wolf}},\ and\ \bibinfo {author} {\bibfnamefont {J.}~\bibnamefont
  {I.~Cirac}},\ }\bibfield  {title} {\bibinfo {title} {Quantum computation and
  quantum-state engineering driven by dissipation},\ }\href
  {https://doi.org/10.1038/nphys1342} {\bibfield  {journal} {\bibinfo
  {journal} {Nature Phys.}\ }\textbf {\bibinfo {volume} {5}},\ \bibinfo {pages}
  {633} (\bibinfo {year} {2009})}\BibitemShut {NoStop}%
\bibitem [{\citenamefont {Kraus}\ \emph {et~al.}(2008)\citenamefont {Kraus},
  \citenamefont {Büchler}, \citenamefont {Diehl}, \citenamefont {Kantian},
  \citenamefont {Micheli},\ and\ \citenamefont {Zoller}}]{Kraus2008}%
  \BibitemOpen
  \bibfield  {author} {\bibinfo {author} {\bibfnamefont {B.}~\bibnamefont
  {Kraus}}, \bibinfo {author} {\bibfnamefont {H.~P.}\ \bibnamefont {Büchler}},
  \bibinfo {author} {\bibfnamefont {S.}~\bibnamefont {Diehl}}, \bibinfo
  {author} {\bibfnamefont {A.}~\bibnamefont {Kantian}}, \bibinfo {author}
  {\bibfnamefont {A.}~\bibnamefont {Micheli}},\ and\ \bibinfo {author}
  {\bibfnamefont {P.}~\bibnamefont {Zoller}},\ }\bibfield  {title} {\bibinfo
  {title} {{Preparation of entangled states by quantum Markov processes}},\
  }\href {https://doi.org/10.1103/physreva.78.042307} {\bibfield  {journal}
  {\bibinfo  {journal} {Phys. Rev. A}\ }\textbf {\bibinfo {volume} {78}},\
  \bibinfo {pages} {042307} (\bibinfo {year} {2008})}\BibitemShut {NoStop}%
\bibitem [{\citenamefont {Gilyen}\ and\ \citenamefont
  {Sattath}(2017)}]{Gilyen2017}%
  \BibitemOpen
  \bibfield  {author} {\bibinfo {author} {\bibfnamefont {A.~P.}\ \bibnamefont
  {Gilyen}}\ and\ \bibinfo {author} {\bibfnamefont {O.}~\bibnamefont
  {Sattath}},\ }\bibfield  {title} {\bibinfo {title} {{On preparing ground
  states of gapped Hamiltonians: An efficient quantum Lovász local lemma}},\
  }in\ \href {https://doi.org/10.1109/focs.2017.47} {\emph {\bibinfo
  {booktitle} {2017 IEEEAnn. Symp. Found. Comp. Sc. (FOCS)}}}\ (\bibinfo
  {publisher} {IEEE},\ \bibinfo {year} {2017})\ p.\ \bibinfo {pages}
  {439–450}\BibitemShut {NoStop}%
\bibitem [{\citenamefont {Cubitt}(2023)}]{cubitt2023dissipative}%
  \BibitemOpen
  \bibfield  {author} {\bibinfo {author} {\bibfnamefont {T.~S.}\ \bibnamefont
  {Cubitt}},\ }\bibfield  {title} {\bibinfo {title} {Dissipative ground state
  preparation and the dissipative quantum eigensolver},\ }\href@noop {}
  {\bibfield  {journal} {\bibinfo  {journal} {arXiv e-prints}\ } (\bibinfo
  {year} {2023})},\ \Eprint {https://arxiv.org/abs/2303.11962}
  {arXiv:2303.11962} \BibitemShut {NoStop}%
\bibitem [{\citenamefont {Chen}\ \emph
  {et~al.}(2023{\natexlab{a}})\citenamefont {Chen}, \citenamefont {Huang},
  \citenamefont {Preskill},\ and\ \citenamefont {Zhou}}]{chen2023}%
  \BibitemOpen
  \bibfield  {author} {\bibinfo {author} {\bibfnamefont {C.-F.}\ \bibnamefont
  {Chen}}, \bibinfo {author} {\bibfnamefont {H.-Y.}\ \bibnamefont {Huang}},
  \bibinfo {author} {\bibfnamefont {J.}~\bibnamefont {Preskill}},\ and\
  \bibinfo {author} {\bibfnamefont {L.}~\bibnamefont {Zhou}},\ }\bibfield
  {title} {\bibinfo {title} {Local minima in quantum systems},\ }\href@noop {}
  {\bibfield  {journal} {\bibinfo  {journal} {arXiv e-prints}\ } (\bibinfo
  {year} {2023}{\natexlab{a}})},\ \Eprint {https://arxiv.org/abs/2309.16596}
  {arXiv:2309.16596} \BibitemShut {NoStop}%
\bibitem [{\citenamefont {Ding}\ \emph {et~al.}(2023)\citenamefont {Ding},
  \citenamefont {Chen},\ and\ \citenamefont
  {Lin}}]{ding2024singleancillagroundstatepreparation}%
  \BibitemOpen
  \bibfield  {author} {\bibinfo {author} {\bibfnamefont {Z.}~\bibnamefont
  {Ding}}, \bibinfo {author} {\bibfnamefont {C.-F.}\ \bibnamefont {Chen}},\
  and\ \bibinfo {author} {\bibfnamefont {L.}~\bibnamefont {Lin}},\ }\bibfield
  {title} {\bibinfo {title} {{Single-ancilla ground state preparation via
  Lindbladians}},\ }\href@noop {} {\bibfield  {journal} {\bibinfo  {journal}
  {arXiv e-prints}\ } (\bibinfo {year} {2023})},\ \Eprint
  {https://arxiv.org/abs/2308.15676} {arXiv:2308.15676} \BibitemShut {NoStop}%
\bibitem [{\citenamefont {Lambert}\ \emph {et~al.}(2024)\citenamefont
  {Lambert}, \citenamefont {Cirio}, \citenamefont {Lin}, \citenamefont
  {Menczel}, \citenamefont {Liang},\ and\ \citenamefont {Nori}}]{Lambert2024}%
  \BibitemOpen
  \bibfield  {author} {\bibinfo {author} {\bibfnamefont {N.}~\bibnamefont
  {Lambert}}, \bibinfo {author} {\bibfnamefont {M.}~\bibnamefont {Cirio}},
  \bibinfo {author} {\bibfnamefont {J.-D.}\ \bibnamefont {Lin}}, \bibinfo
  {author} {\bibfnamefont {P.}~\bibnamefont {Menczel}}, \bibinfo {author}
  {\bibfnamefont {P.}~\bibnamefont {Liang}},\ and\ \bibinfo {author}
  {\bibfnamefont {F.}~\bibnamefont {Nori}},\ }\bibfield  {title} {\bibinfo
  {title} {Fixing detailed balance in ancilla-based dissipative state
  engineering},\ }\href {https://doi.org/10.1103/PhysRevResearch.6.043229}
  {\bibfield  {journal} {\bibinfo  {journal} {Phys. Rev. Res.}\ }\textbf
  {\bibinfo {volume} {6}},\ \bibinfo {pages} {043229} (\bibinfo {year}
  {2024})}\BibitemShut {NoStop}%
\bibitem [{\citenamefont {Li}\ \emph {et~al.}(2024)\citenamefont {Li},
  \citenamefont {Zhan},\ and\ \citenamefont {Lin}}]{li2024}%
  \BibitemOpen
  \bibfield  {author} {\bibinfo {author} {\bibfnamefont {H.-E.}\ \bibnamefont
  {Li}}, \bibinfo {author} {\bibfnamefont {Y.}~\bibnamefont {Zhan}},\ and\
  \bibinfo {author} {\bibfnamefont {L.}~\bibnamefont {Lin}},\ }\bibfield
  {title} {\bibinfo {title} {Dissipative ground state preparation in ab initio
  electronic structure theory},\ }\href@noop {} {\bibfield  {journal} {\bibinfo
   {journal} {arXiv e-prints}\ } (\bibinfo {year} {2024})},\ \Eprint
  {https://arxiv.org/abs/2411.01470} {arXiv:2411.01470} \BibitemShut {NoStop}%
\bibitem [{\citenamefont {Mi}\ \emph {et~al.}(2024)\citenamefont {Mi} \emph
  {et~al.}}]{Mi2024}%
  \BibitemOpen
  \bibfield  {author} {\bibinfo {author} {\bibfnamefont {X.}~\bibnamefont {Mi}}
  \emph {et~al.},\ }\bibfield  {title} {\bibinfo {title} {Stable
  quantum-correlated many-body states through engineered dissipation},\ }\href
  {https://doi.org/10.1126/science.adh9932} {\bibfield  {journal} {\bibinfo
  {journal} {Science}\ }\textbf {\bibinfo {volume} {383}},\ \bibinfo {pages}
  {1332–1337} (\bibinfo {year} {2024})}\BibitemShut {NoStop}%
\bibitem [{\citenamefont {Motlagh}\ \emph {et~al.}(2024)\citenamefont
  {Motlagh}, \citenamefont {Zini}, \citenamefont {Arrazola},\ and\
  \citenamefont {Wiebe}}]{motlagh2024}%
  \BibitemOpen
  \bibfield  {author} {\bibinfo {author} {\bibfnamefont {D.}~\bibnamefont
  {Motlagh}}, \bibinfo {author} {\bibfnamefont {M.~S.}\ \bibnamefont {Zini}},
  \bibinfo {author} {\bibfnamefont {J.~M.}\ \bibnamefont {Arrazola}},\ and\
  \bibinfo {author} {\bibfnamefont {N.}~\bibnamefont {Wiebe}},\ }\bibfield
  {title} {\bibinfo {title} {Ground state preparation via dynamical cooling},\
  }\href@noop {} {\bibfield  {journal} {\bibinfo  {journal} {arXiv e-prints}\ }
  (\bibinfo {year} {2024})},\ \Eprint {https://arxiv.org/abs/2404.05810}
  {arXiv:2404.05810} \BibitemShut {NoStop}%
\bibitem [{\citenamefont {Eder}\ \emph {et~al.}(2025)\citenamefont {Eder},
  \citenamefont {Fin\ifmmode~\check{z}\else \v{z}\fi{}gar}, \citenamefont
  {Braun},\ and\ \citenamefont {Mendl}}]{Eder2025}%
  \BibitemOpen
  \bibfield  {author} {\bibinfo {author} {\bibfnamefont {P.~J.}\ \bibnamefont
  {Eder}}, \bibinfo {author} {\bibfnamefont {J.~R.}\ \bibnamefont
  {Fin\ifmmode~\check{z}\else \v{z}\fi{}gar}}, \bibinfo {author} {\bibfnamefont
  {S.}~\bibnamefont {Braun}},\ and\ \bibinfo {author} {\bibfnamefont {C.~B.}\
  \bibnamefont {Mendl}},\ }\bibfield  {title} {\bibinfo {title} {{Quantum
  dissipative search via Lindbladians}},\ }\href
  {https://doi.org/10.1103/PhysRevA.111.042430} {\bibfield  {journal} {\bibinfo
   {journal} {Phys. Rev. A}\ }\textbf {\bibinfo {volume} {111}},\ \bibinfo
  {pages} {042430} (\bibinfo {year} {2025})}\BibitemShut {NoStop}%
\bibitem [{\citenamefont {Zhan}\ \emph {et~al.}(2025)\citenamefont {Zhan},
  \citenamefont {Ding}, \citenamefont {Huhn}, \citenamefont {Gray},
  \citenamefont {Preskill}, \citenamefont {Chan},\ and\ \citenamefont
  {Lin}}]{zhan2025}%
  \BibitemOpen
  \bibfield  {author} {\bibinfo {author} {\bibfnamefont {Y.}~\bibnamefont
  {Zhan}}, \bibinfo {author} {\bibfnamefont {Z.}~\bibnamefont {Ding}}, \bibinfo
  {author} {\bibfnamefont {J.}~\bibnamefont {Huhn}}, \bibinfo {author}
  {\bibfnamefont {J.}~\bibnamefont {Gray}}, \bibinfo {author} {\bibfnamefont
  {J.}~\bibnamefont {Preskill}}, \bibinfo {author} {\bibfnamefont {G.~K.-L.}\
  \bibnamefont {Chan}},\ and\ \bibinfo {author} {\bibfnamefont
  {L.}~\bibnamefont {Lin}},\ }\bibfield  {title} {\bibinfo {title} {Rapid
  quantum ground state preparation via dissipative dynamics},\ }\href@noop {}
  {\bibfield  {journal} {\bibinfo  {journal} {arXiv e-prints}\ } (\bibinfo
  {year} {2025})},\ \Eprint {https://arxiv.org/abs/2503.15827}
  {arXiv:2503.15827} \BibitemShut {NoStop}%
\bibitem [{\citenamefont {Lloyd}\ \emph {et~al.}(2025)\citenamefont {Lloyd},
  \citenamefont {Michailidis}, \citenamefont {Mi}, \citenamefont
  {Smelyanskiy},\ and\ \citenamefont {Abanin}}]{Lloyd2025}%
  \BibitemOpen
  \bibfield  {author} {\bibinfo {author} {\bibfnamefont {J.}~\bibnamefont
  {Lloyd}}, \bibinfo {author} {\bibfnamefont {A.~A.}\ \bibnamefont
  {Michailidis}}, \bibinfo {author} {\bibfnamefont {X.}~\bibnamefont {Mi}},
  \bibinfo {author} {\bibfnamefont {V.}~\bibnamefont {Smelyanskiy}},\ and\
  \bibinfo {author} {\bibfnamefont {D.~A.}\ \bibnamefont {Abanin}},\ }\bibfield
   {title} {\bibinfo {title} {Quasiparticle cooling algorithms for quantum
  many-body state preparation},\ }\href
  {https://doi.org/10.1103/PRXQuantum.6.010361} {\bibfield  {journal} {\bibinfo
   {journal} {PRX Quantum}\ }\textbf {\bibinfo {volume} {6}},\ \bibinfo {pages}
  {010361} (\bibinfo {year} {2025})}\BibitemShut {NoStop}%
\bibitem [{\citenamefont {Purcell}\ \emph {et~al.}(2025)\citenamefont
  {Purcell}, \citenamefont {Rajput},\ and\ \citenamefont
  {Cubitt}}]{purcell2025}%
  \BibitemOpen
  \bibfield  {author} {\bibinfo {author} {\bibfnamefont {J.}~\bibnamefont
  {Purcell}}, \bibinfo {author} {\bibfnamefont {A.}~\bibnamefont {Rajput}},\
  and\ \bibinfo {author} {\bibfnamefont {T.}~\bibnamefont {Cubitt}},\
  }\bibfield  {title} {\bibinfo {title} {Fault-resilience of dissipative
  processes for quantum computing},\ }\href@noop {} {\bibfield  {journal}
  {\bibinfo  {journal} {arXiv e-prints}\ } (\bibinfo {year} {2025})},\ \Eprint
  {https://arxiv.org/abs/2502.20374} {arXiv:2502.20374} \BibitemShut {NoStop}%
\bibitem [{\citenamefont {Chen}\ \emph
  {et~al.}(2023{\natexlab{b}})\citenamefont {Chen}, \citenamefont
  {Kastoryano},\ and\ \citenamefont
  {Gilyén}}]{chen2023efficientexactnoncommutativequantum}%
  \BibitemOpen
  \bibfield  {author} {\bibinfo {author} {\bibfnamefont {C.-F.}\ \bibnamefont
  {Chen}}, \bibinfo {author} {\bibfnamefont {M.~J.}\ \bibnamefont
  {Kastoryano}},\ and\ \bibinfo {author} {\bibfnamefont {A.}~\bibnamefont
  {Gilyén}},\ }\bibfield  {title} {\bibinfo {title} {{An efficient and exact
  noncommutative quantum Gibbs sampler}},\ }\href@noop {} {\bibfield  {journal}
  {\bibinfo  {journal} {arXiv e-prints}\ } (\bibinfo {year}
  {2023}{\natexlab{b}})},\ \Eprint {https://arxiv.org/abs/2311.09207}
  {arXiv:2311.09207} \BibitemShut {NoStop}%
\bibitem [{\citenamefont {Chen}\ and\ \citenamefont
  {Brandão}(2023)}]{chen2023fastthermalizationeigenstatethermalization}%
  \BibitemOpen
  \bibfield  {author} {\bibinfo {author} {\bibfnamefont {C.-F.}\ \bibnamefont
  {Chen}}\ and\ \bibinfo {author} {\bibfnamefont {F.~G. S.~L.}\ \bibnamefont
  {Brandão}},\ }\bibfield  {title} {\bibinfo {title} {Fast thermalization from
  the eigenstate thermalization hypothesis},\ }\href@noop {} {\bibfield
  {journal} {\bibinfo  {journal} {arXiv e-prints}\ } (\bibinfo {year}
  {2023})},\ \Eprint {https://arxiv.org/abs/2112.07646} {arXiv:2112.07646}
  \BibitemShut {NoStop}%
\bibitem [{\citenamefont {Chen}\ \emph
  {et~al.}(2023{\natexlab{c}})\citenamefont {Chen}, \citenamefont {Kastoryano},
  \citenamefont {Brandão},\ and\ \citenamefont
  {Gilyén}}]{chen2023quantumthermalstatepreparation}%
  \BibitemOpen
  \bibfield  {author} {\bibinfo {author} {\bibfnamefont {C.-F.}\ \bibnamefont
  {Chen}}, \bibinfo {author} {\bibfnamefont {M.~J.}\ \bibnamefont
  {Kastoryano}}, \bibinfo {author} {\bibfnamefont {F.~G. S.~L.}\ \bibnamefont
  {Brandão}},\ and\ \bibinfo {author} {\bibfnamefont {A.}~\bibnamefont
  {Gilyén}},\ }\bibfield  {title} {\bibinfo {title} {Quantum thermal state
  preparation},\ }\href@noop {} {\bibfield  {journal} {\bibinfo  {journal}
  {arXiv e-prints}\ } (\bibinfo {year} {2023}{\natexlab{c}})},\ \Eprint
  {https://arxiv.org/abs/2303.18224} {arXiv:2303.18224} \BibitemShut {NoStop}%
\bibitem [{\citenamefont {Zhang}\ \emph {et~al.}(2023)\citenamefont {Zhang},
  \citenamefont {Bosse},\ and\ \citenamefont
  {Cubitt}}]{zhang2023dissipativequantumgibbssampling}%
  \BibitemOpen
  \bibfield  {author} {\bibinfo {author} {\bibfnamefont {D.}~\bibnamefont
  {Zhang}}, \bibinfo {author} {\bibfnamefont {J.~L.}\ \bibnamefont {Bosse}},\
  and\ \bibinfo {author} {\bibfnamefont {T.}~\bibnamefont {Cubitt}},\
  }\bibfield  {title} {\bibinfo {title} {Dissipative quantum gibbs sampling},\
  }\href@noop {} {\bibfield  {journal} {\bibinfo  {journal} {arXiv e-prints}\ }
  (\bibinfo {year} {2023})},\ \Eprint {https://arxiv.org/abs/2304.04526}
  {arXiv:2304.04526} \BibitemShut {NoStop}%
\bibitem [{\citenamefont {Ding}\ \emph {et~al.}(2025)\citenamefont {Ding},
  \citenamefont {Li},\ and\ \citenamefont {Lin}}]{Ding2025}%
  \BibitemOpen
  \bibfield  {author} {\bibinfo {author} {\bibfnamefont {Z.}~\bibnamefont
  {Ding}}, \bibinfo {author} {\bibfnamefont {B.}~\bibnamefont {Li}},\ and\
  \bibinfo {author} {\bibfnamefont {L.}~\bibnamefont {Lin}},\ }\bibfield
  {title} {\bibinfo {title} {{Efficient quantum Gibbs samplers with
  Kubo–Martin–Schwinger detailed balance condition}},\ }\href
  {https://doi.org/10.1007/s00220-025-05235-3} {\bibfield  {journal} {\bibinfo
  {journal} {Commun. Math. Phys.}\ }\textbf {\bibinfo {volume} {406}},\
  \bibinfo {pages} {67} (\bibinfo {year} {2025})}\BibitemShut {NoStop}%
\bibitem [{\citenamefont {Gilyén}\ \emph {et~al.}(2024)\citenamefont
  {Gilyén}, \citenamefont {Chen}, \citenamefont {Doriguello},\ and\
  \citenamefont
  {Kastoryano}}]{gilyen2024quantumgeneralizationsglaubermetropolis}%
  \BibitemOpen
  \bibfield  {author} {\bibinfo {author} {\bibfnamefont {A.}~\bibnamefont
  {Gilyén}}, \bibinfo {author} {\bibfnamefont {C.-F.}\ \bibnamefont {Chen}},
  \bibinfo {author} {\bibfnamefont {J.~F.}\ \bibnamefont {Doriguello}},\ and\
  \bibinfo {author} {\bibfnamefont {M.~J.}\ \bibnamefont {Kastoryano}},\
  }\bibfield  {title} {\bibinfo {title} {Quantum generalizations of glauber and
  metropolis dynamics},\ }\href@noop {} {\bibfield  {journal} {\bibinfo
  {journal} {arXiv e-prints}\ } (\bibinfo {year} {2024})},\ \Eprint
  {https://arxiv.org/abs/2405.20322} {arXiv:2405.20322} \BibitemShut {NoStop}%
\bibitem [{\citenamefont {Jiang}\ and\ \citenamefont
  {Irani}(2024)}]{jiang2024quantummetropolissamplingweak}%
  \BibitemOpen
  \bibfield  {author} {\bibinfo {author} {\bibfnamefont {J.}~\bibnamefont
  {Jiang}}\ and\ \bibinfo {author} {\bibfnamefont {S.}~\bibnamefont {Irani}},\
  }\bibfield  {title} {\bibinfo {title} {{Quantum Metropolis sampling via weak
  measurement}},\ }\href@noop {} {\bibfield  {journal} {\bibinfo  {journal}
  {arXiv e-prints}\ } (\bibinfo {year} {2024})},\ \Eprint
  {https://arxiv.org/abs/2406.16023} {arXiv:2406.16023} \BibitemShut {NoStop}%
\bibitem [{\citenamefont {Mozgunov}\ and\ \citenamefont
  {Lidar}(2020)}]{Mozgunov2020}%
  \BibitemOpen
  \bibfield  {author} {\bibinfo {author} {\bibfnamefont {E.}~\bibnamefont
  {Mozgunov}}\ and\ \bibinfo {author} {\bibfnamefont {D.}~\bibnamefont
  {Lidar}},\ }\bibfield  {title} {\bibinfo {title} {Completely positive master
  equation for arbitrary driving and small level spacing},\ }\href
  {https://doi.org/10.22331/q-2020-02-06-227} {\bibfield  {journal} {\bibinfo
  {journal} {Quantum}\ }\textbf {\bibinfo {volume} {4}},\ \bibinfo {pages}
  {227} (\bibinfo {year} {2020})}\BibitemShut {NoStop}%
\bibitem [{\citenamefont {Rall}\ \emph {et~al.}(2023)\citenamefont {Rall},
  \citenamefont {Wang},\ and\ \citenamefont {Wocjan}}]{Rall2023}%
  \BibitemOpen
  \bibfield  {author} {\bibinfo {author} {\bibfnamefont {P.}~\bibnamefont
  {Rall}}, \bibinfo {author} {\bibfnamefont {C.}~\bibnamefont {Wang}},\ and\
  \bibinfo {author} {\bibfnamefont {P.}~\bibnamefont {Wocjan}},\ }\bibfield
  {title} {\bibinfo {title} {Thermal state preparation via rounding promises},\
  }\href {https://doi.org/10.22331/q-2023-10-10-1132} {\bibfield  {journal}
  {\bibinfo  {journal} {Quantum}\ }\textbf {\bibinfo {volume} {7}},\ \bibinfo
  {pages} {1132} (\bibinfo {year} {2023})}\BibitemShut {NoStop}%
\bibitem [{\citenamefont {Shtanko}\ and\ \citenamefont
  {Movassagh}(2021)}]{shtanko2023}%
  \BibitemOpen
  \bibfield  {author} {\bibinfo {author} {\bibfnamefont {O.}~\bibnamefont
  {Shtanko}}\ and\ \bibinfo {author} {\bibfnamefont {R.}~\bibnamefont
  {Movassagh}},\ }\bibfield  {title} {\bibinfo {title} {Preparing thermal
  states on noiseless and noisy programmable quantum processors},\ }\href@noop
  {} {\bibfield  {journal} {\bibinfo  {journal} {arXiv e-prints}\ } (\bibinfo
  {year} {2021})},\ \Eprint {https://arxiv.org/abs/2112.14688}
  {arXiv:2112.14688} \BibitemShut {NoStop}%
\bibitem [{\citenamefont {Temme}\ \emph {et~al.}(2011)\citenamefont {Temme},
  \citenamefont {Osborne}, \citenamefont {Vollbrecht}, \citenamefont {Poulin},\
  and\ \citenamefont {Verstraete}}]{Temme2011}%
  \BibitemOpen
  \bibfield  {author} {\bibinfo {author} {\bibfnamefont {K.}~\bibnamefont
  {Temme}}, \bibinfo {author} {\bibfnamefont {T.~J.}\ \bibnamefont {Osborne}},
  \bibinfo {author} {\bibfnamefont {K.~G.}\ \bibnamefont {Vollbrecht}},
  \bibinfo {author} {\bibfnamefont {D.}~\bibnamefont {Poulin}},\ and\ \bibinfo
  {author} {\bibfnamefont {F.}~\bibnamefont {Verstraete}},\ }\bibfield  {title}
  {\bibinfo {title} {{Quantum Metropolis sampling}},\ }\href
  {https://doi.org/10.1038/nature09770} {\bibfield  {journal} {\bibinfo
  {journal} {Nature}\ }\textbf {\bibinfo {volume} {471}},\ \bibinfo {pages}
  {87–90} (\bibinfo {year} {2011})}\BibitemShut {NoStop}%
\bibitem [{\citenamefont {Pirnay}\ \emph {et~al.}(2024)\citenamefont {Pirnay},
  \citenamefont {Ulitzsch}, \citenamefont {Wilde}, \citenamefont {Eisert},\
  and\ \citenamefont {Seifert}}]{Pirnay2024}%
  \BibitemOpen
  \bibfield  {author} {\bibinfo {author} {\bibfnamefont {N.}~\bibnamefont
  {Pirnay}}, \bibinfo {author} {\bibfnamefont {V.}~\bibnamefont {Ulitzsch}},
  \bibinfo {author} {\bibfnamefont {F.}~\bibnamefont {Wilde}}, \bibinfo
  {author} {\bibfnamefont {J.}~\bibnamefont {Eisert}},\ and\ \bibinfo {author}
  {\bibfnamefont {J.-P.}\ \bibnamefont {Seifert}},\ }\bibfield  {title}
  {\bibinfo {title} {An in-principle super-polynomial quantum advantage for
  approximating combinatorial optimization problems via computational learning
  theory},\ }\href {https://doi.org/10.1126/sciadv.adj5170} {\bibfield
  {journal} {\bibinfo  {journal} {Science Adv.}\ }\textbf {\bibinfo {volume}
  {10}},\ \bibinfo {pages} {eadj5170} (\bibinfo {year} {2024})}\BibitemShut
  {NoStop}%
\bibitem [{\citenamefont {Szegedy}(2022)}]{szegedy2022}%
  \BibitemOpen
  \bibfield  {author} {\bibinfo {author} {\bibfnamefont {M.}~\bibnamefont
  {Szegedy}},\ }\bibfield  {title} {\bibinfo {title} {Quantum advantage for
  combinatorial optimization problems, simplified},\ }\href@noop {} {\bibfield
  {journal} {\bibinfo  {journal} {arXiv e-prints}\ } (\bibinfo {year}
  {2022})},\ \Eprint {https://arxiv.org/abs/2212.12572} {arXiv:2212.12572}
  \BibitemShut {NoStop}%
\bibitem [{\citenamefont {Yamakawa}\ and\ \citenamefont
  {Zhandry}(2024)}]{Yamakawa2024}%
  \BibitemOpen
  \bibfield  {author} {\bibinfo {author} {\bibfnamefont {T.}~\bibnamefont
  {Yamakawa}}\ and\ \bibinfo {author} {\bibfnamefont {M.}~\bibnamefont
  {Zhandry}},\ }\bibfield  {title} {\bibinfo {title} {Verifiable quantum
  advantage without structure},\ }\href {https://doi.org/10.1145/3658665}
  {\bibfield  {journal} {\bibinfo  {journal} {J. ACM}\ }\textbf {\bibinfo
  {volume} {71}},\ \bibinfo {pages} {1–50} (\bibinfo {year}
  {2024})}\BibitemShut {NoStop}%
\bibitem [{\citenamefont {Jordan}\ \emph {et~al.}(2024)\citenamefont {Jordan},
  \citenamefont {Shutty}, \citenamefont {Wootters}, \citenamefont {Zalcman},
  \citenamefont {Schmidhuber}, \citenamefont {King}, \citenamefont {Isakov},
  \citenamefont {Khattar},\ and\ \citenamefont {Babbush}}]{jordan2024DQI}%
  \BibitemOpen
  \bibfield  {author} {\bibinfo {author} {\bibfnamefont {S.~P.}\ \bibnamefont
  {Jordan}}, \bibinfo {author} {\bibfnamefont {N.}~\bibnamefont {Shutty}},
  \bibinfo {author} {\bibfnamefont {M.}~\bibnamefont {Wootters}}, \bibinfo
  {author} {\bibfnamefont {A.}~\bibnamefont {Zalcman}}, \bibinfo {author}
  {\bibfnamefont {A.}~\bibnamefont {Schmidhuber}}, \bibinfo {author}
  {\bibfnamefont {R.}~\bibnamefont {King}}, \bibinfo {author} {\bibfnamefont
  {S.~V.}\ \bibnamefont {Isakov}}, \bibinfo {author} {\bibfnamefont
  {T.}~\bibnamefont {Khattar}},\ and\ \bibinfo {author} {\bibfnamefont
  {R.}~\bibnamefont {Babbush}},\ }\bibfield  {title} {\bibinfo {title}
  {Optimization by decoded quantum interferometry},\ }\href@noop {} {\bibfield
  {journal} {\bibinfo  {journal} {arXiv e-prints}\ } (\bibinfo {year}
  {2024})},\ \Eprint {https://arxiv.org/abs/2408.08292} {arXiv:2408.08292}
  \BibitemShut {NoStop}%
\bibitem [{\citenamefont {Buhrman}\ \emph {et~al.}(2025)\citenamefont
  {Buhrman}, \citenamefont {Galke},\ and\ \citenamefont
  {Meichanetzidis}}]{buhrman2025formalframeworkquantumadvantage}%
  \BibitemOpen
  \bibfield  {author} {\bibinfo {author} {\bibfnamefont {H.}~\bibnamefont
  {Buhrman}}, \bibinfo {author} {\bibfnamefont {N.}~\bibnamefont {Galke}},\
  and\ \bibinfo {author} {\bibfnamefont {K.}~\bibnamefont {Meichanetzidis}},\
  }\bibfield  {title} {\bibinfo {title} {Formal framework for quantum
  advantage},\ }\href@noop {} {\bibfield  {journal} {\bibinfo  {journal} {arXiv
  e-prints}\ } (\bibinfo {year} {2025})},\ \Eprint
  {https://arxiv.org/abs/2510.01953} {arXiv:2510.01953} \BibitemShut {NoStop}%
\bibitem [{\citenamefont {Zhao}\ \emph {et~al.}(2019)\citenamefont {Zhao},
  \citenamefont {Pérez-Delgado}, \citenamefont {Benjamin},\ and\ \citenamefont
  {Fitzsimons}}]{Zhao2019measurement_driven}%
  \BibitemOpen
  \bibfield  {author} {\bibinfo {author} {\bibfnamefont {L.}~\bibnamefont
  {Zhao}}, \bibinfo {author} {\bibfnamefont {C.~A.}\ \bibnamefont
  {Pérez-Delgado}}, \bibinfo {author} {\bibfnamefont {S.~C.}\ \bibnamefont
  {Benjamin}},\ and\ \bibinfo {author} {\bibfnamefont {J.~F.}\ \bibnamefont
  {Fitzsimons}},\ }\bibfield  {title} {\bibinfo {title} {{Measurement-driven
  analog of adiabatic quantum computation for frustration-free Hamiltonians}},\
  }\href {https://doi.org/10.1103/physreva.100.032331} {\bibfield  {journal}
  {\bibinfo  {journal} {Phys. Rev. A}\ }\textbf {\bibinfo {volume} {100}},\
  \bibinfo {pages} {032331} (\bibinfo {year} {2019})}\BibitemShut {NoStop}%
\bibitem [{\citenamefont {Zhang}\ \emph
  {et~al.}(2024{\natexlab{b}})\citenamefont {Zhang}, \citenamefont {Lewalle},\
  and\ \citenamefont {Whaley}}]{zhang2024ksatproblems}%
  \BibitemOpen
  \bibfield  {author} {\bibinfo {author} {\bibfnamefont {Y.}~\bibnamefont
  {Zhang}}, \bibinfo {author} {\bibfnamefont {P.}~\bibnamefont {Lewalle}},\
  and\ \bibinfo {author} {\bibfnamefont {K.~B.}\ \bibnamefont {Whaley}},\
  }\bibfield  {title} {\bibinfo {title} {{Solving $k$-SAT problems with
  generalized quantum measurement}},\ }\href@noop {} {\bibfield  {journal}
  {\bibinfo  {journal} {arXiv e-prints}\ } (\bibinfo {year}
  {2024}{\natexlab{b}})},\ \Eprint {https://arxiv.org/abs/2406.13611}
  {arXiv:2406.13611} \BibitemShut {NoStop}%
\bibitem [{\citenamefont {Zhang}\ \emph {et~al.}(2025)\citenamefont {Zhang},
  \citenamefont {Sarlette}, \citenamefont {Lewalle}, \citenamefont {Karmakar},\
  and\ \citenamefont {Whaley}}]{zhang2025}%
  \BibitemOpen
  \bibfield  {author} {\bibinfo {author} {\bibfnamefont {Y.}~\bibnamefont
  {Zhang}}, \bibinfo {author} {\bibfnamefont {A.}~\bibnamefont {Sarlette}},
  \bibinfo {author} {\bibfnamefont {P.}~\bibnamefont {Lewalle}}, \bibinfo
  {author} {\bibfnamefont {T.}~\bibnamefont {Karmakar}},\ and\ \bibinfo
  {author} {\bibfnamefont {K.~B.}\ \bibnamefont {Whaley}},\ }\bibfield  {title}
  {\bibinfo {title} {Optimal schedule of multi-channel quantum zeno dragging
  with application to solving the k-sat problem},\ }\href@noop {} {\bibfield
  {journal} {\bibinfo  {journal} {arXiv e-prints}\ } (\bibinfo {year}
  {2025})},\ \Eprint {https://arxiv.org/abs/2507.16128} {arXiv:2507.16128}
  \BibitemShut {NoStop}%
\bibitem [{\citenamefont {Benjamin}(2015)}]{benjamin2015}%
  \BibitemOpen
  \bibfield  {author} {\bibinfo {author} {\bibfnamefont {S.}~\bibnamefont
  {Benjamin}},\ }\bibfield  {title} {\bibinfo {title} {{Performance of a
  measurement-driven 'adiabatic-like' quantum 3-SAT solver}},\ }\href@noop {}
  {\bibfield  {journal} {\bibinfo  {journal} {arXiv e-prints}\ } (\bibinfo
  {year} {2015})},\ \Eprint {https://arxiv.org/abs/1509.00667}
  {arXiv:1509.00667} \BibitemShut {NoStop}%
\bibitem [{\citenamefont {Wilde}(2013)}]{Wilde2013}%
  \BibitemOpen
  \bibfield  {author} {\bibinfo {author} {\bibfnamefont {M.~M.}\ \bibnamefont
  {Wilde}},\ }\href {https://doi.org//10.1017/9781316809976} {\emph {\bibinfo
  {title} {Quantum information theory}}}\ (\bibinfo  {publisher} {Cambridge
  University Press},\ \bibinfo {year} {2013})\BibitemShut {NoStop}%
\bibitem [{\citenamefont {Deutsch}(2001)}]{alternating_projections}%
  \BibitemOpen
  \bibfield  {author} {\bibinfo {author} {\bibfnamefont {F.~R.}\ \bibnamefont
  {Deutsch}},\ }\href {https://doi.org/10.1007/978-1-4684-9298-9} {\emph
  {\bibinfo {title} {Best approximation in inner product spaces}}},\ \bibinfo
  {edition} {1st}\ ed.,\ CMS Books in Mathematics, Ouvrages de mathématiques
  de la SMC\ (\bibinfo  {publisher} {Springer New York},\ \bibinfo {address}
  {New York, NY},\ \bibinfo {year} {2001})\BibitemShut {NoStop}%
\bibitem [{\citenamefont {Badea}\ \emph
  {et~al.}(2010{\natexlab{a}})\citenamefont {Badea}, \citenamefont {Grivaux},\
  and\ \citenamefont {Müller}}]{badea_announcement}%
  \BibitemOpen
  \bibfield  {author} {\bibinfo {author} {\bibfnamefont {C.}~\bibnamefont
  {Badea}}, \bibinfo {author} {\bibfnamefont {S.}~\bibnamefont {Grivaux}},\
  and\ \bibinfo {author} {\bibfnamefont {V.}~\bibnamefont {Müller}},\
  }\bibfield  {title} {\bibinfo {title} {A generalization of the friedrichs
  angle and the method of alternating projections},\ }\href
  {https://doi.org/https://doi.org/10.1016/j.crma.2009.11.018} {\bibfield
  {journal} {\bibinfo  {journal} {Comptes Rendus Math.}\ }\textbf {\bibinfo
  {volume} {348}},\ \bibinfo {pages} {53} (\bibinfo {year}
  {2010}{\natexlab{a}})}\BibitemShut {NoStop}%
\bibitem [{\citenamefont {Badea}\ \emph
  {et~al.}(2010{\natexlab{b}})\citenamefont {Badea}, \citenamefont {Grivaux},\
  and\ \citenamefont {Muller}}]{badea_proofs}%
  \BibitemOpen
  \bibfield  {author} {\bibinfo {author} {\bibfnamefont {C.}~\bibnamefont
  {Badea}}, \bibinfo {author} {\bibfnamefont {S.}~\bibnamefont {Grivaux}},\
  and\ \bibinfo {author} {\bibfnamefont {V.}~\bibnamefont {Muller}},\
  }\bibfield  {title} {\bibinfo {title} {The rate of convergence in the method
  of alternating projections},\ }\href@noop {} {\bibfield  {journal} {\bibinfo
  {journal} {arXiv e-prints}\ } (\bibinfo {year} {2010}{\natexlab{b}})},\
  \Eprint {https://arxiv.org/abs/1006.2047} {arXiv:1006.2047} \BibitemShut
  {NoStop}%
\bibitem [{\citenamefont {Mehlhorn}(2013)}]{mehlhorn2013data}%
  \BibitemOpen
  \bibfield  {author} {\bibinfo {author} {\bibfnamefont {K.}~\bibnamefont
  {Mehlhorn}},\ }\href {https://doi.org/10.1007/3-540-55488-2} {\emph {\bibinfo
  {title} {Data structures and algorithms 1: Sorting and searching}}},\
  Vol.~\bibinfo {volume} {1}\ (\bibinfo  {publisher} {Springer Science \&
  Business Media},\ \bibinfo {year} {2013})\BibitemShut {NoStop}%
\bibitem [{\citenamefont
  {Colbourn}(2008)}]{colbourn2008perfect_hash_families_deterministically}%
  \BibitemOpen
  \bibfield  {author} {\bibinfo {author} {\bibfnamefont {C.~J.}\ \bibnamefont
  {Colbourn}},\ }\bibfield  {title} {\bibinfo {title} {Constructing perfect
  hash families using a greedy algorithm},\ }in\ \href
  {https://doi.org/0.1142/9789812832245_0008} {\emph {\bibinfo {booktitle}
  {Coding and cryptology}}}\ (\bibinfo  {publisher} {World Scientific},\
  \bibinfo {year} {2008})\ pp.\ \bibinfo {pages} {109--118}\BibitemShut
  {NoStop}%
\bibitem [{\citenamefont {He}\ \emph {et~al.}(2017)\citenamefont {He},
  \citenamefont {Luo}, \citenamefont {Zhang}, \citenamefont {Wang},\ and\
  \citenamefont {Wang}}]{he2017decompositions}%
  \BibitemOpen
  \bibfield  {author} {\bibinfo {author} {\bibfnamefont {Y.}~\bibnamefont
  {He}}, \bibinfo {author} {\bibfnamefont {M.-X.}\ \bibnamefont {Luo}},
  \bibinfo {author} {\bibfnamefont {E.}~\bibnamefont {Zhang}}, \bibinfo
  {author} {\bibfnamefont {H.-K.}\ \bibnamefont {Wang}},\ and\ \bibinfo
  {author} {\bibfnamefont {X.-F.}\ \bibnamefont {Wang}},\ }\bibfield  {title}
  {\bibinfo {title} {{Decompositions of n-qubit Toffoli gates with linear
  circuit complexity}},\ }\href {https://doi.org/10.1007/s10773-017-3389-4}
  {\bibfield  {journal} {\bibinfo  {journal} {Int. J. Th. Phys.}\ }\textbf
  {\bibinfo {volume} {56}},\ \bibinfo {pages} {2350} (\bibinfo {year}
  {2017})}\BibitemShut {NoStop}%
\bibitem [{\citenamefont {Balogh}\ \emph {et~al.}(2022)\citenamefont {Balogh},
  \citenamefont {Dong}, \citenamefont {Lidický}, \citenamefont {Mani},\ and\
  \citenamefont {Zhao}}]{balogh2023nearlyksatfunctionsunate}%
  \BibitemOpen
  \bibfield  {author} {\bibinfo {author} {\bibfnamefont {J.}~\bibnamefont
  {Balogh}}, \bibinfo {author} {\bibfnamefont {D.}~\bibnamefont {Dong}},
  \bibinfo {author} {\bibfnamefont {B.}~\bibnamefont {Lidický}}, \bibinfo
  {author} {\bibfnamefont {N.}~\bibnamefont {Mani}},\ and\ \bibinfo {author}
  {\bibfnamefont {Y.}~\bibnamefont {Zhao}},\ }\bibfield  {title} {\bibinfo
  {title} {{Nearly all $k$-SAT functions are unate}},\ }\href@noop {}
  {\bibfield  {journal} {\bibinfo  {journal} {arXiv e-prints}\ } (\bibinfo
  {year} {2022})},\ \Eprint {https://arxiv.org/abs/2209.04894}
  {arXiv:2209.04894} \BibitemShut {NoStop}%
\bibitem [{\citenamefont {Bennett}\ \emph {et~al.}(1997)\citenamefont
  {Bennett}, \citenamefont {Bernstein}, \citenamefont {Brassard},\ and\
  \citenamefont {Vazirani}}]{Bennett_1997}%
  \BibitemOpen
  \bibfield  {author} {\bibinfo {author} {\bibfnamefont {C.~H.}\ \bibnamefont
  {Bennett}}, \bibinfo {author} {\bibfnamefont {E.}~\bibnamefont {Bernstein}},
  \bibinfo {author} {\bibfnamefont {G.}~\bibnamefont {Brassard}},\ and\
  \bibinfo {author} {\bibfnamefont {U.}~\bibnamefont {Vazirani}},\ }\bibfield
  {title} {\bibinfo {title} {Strengths and weaknesses of quantum computing},\
  }\href {https://doi.org/10.1137/s0097539796300933} {\bibfield  {journal}
  {\bibinfo  {journal} {SIAM J. Comp.}\ }\textbf {\bibinfo {volume} {26}},\
  \bibinfo {pages} {1510–1523} (\bibinfo {year} {1997})}\BibitemShut
  {NoStop}%
\bibitem [{\citenamefont {Cao}\ and\ \citenamefont {Eisert}(2025)}]{Chenfeng}%
  \BibitemOpen
  \bibfield  {author} {\bibinfo {author} {\bibfnamefont {C.}~\bibnamefont
  {Cao}}\ and\ \bibinfo {author} {\bibfnamefont {J.}~\bibnamefont {Eisert}},\
  }\bibfield  {title} {\bibinfo {title} {Measurement-driven quantum advantages
  in shallow circuits},\ }\href@noop {} {\bibfield  {journal} {\bibinfo
  {journal} {arXiv e-prints}\ } (\bibinfo {year} {2025})},\ \Eprint
  {https://arxiv.org/abs/2505.04705} {arXiv:2505.04705} \BibitemShut {NoStop}%
\bibitem [{\citenamefont {Gao}(2015)}]{Gao_2015}%
  \BibitemOpen
  \bibfield  {author} {\bibinfo {author} {\bibfnamefont {J.}~\bibnamefont
  {Gao}},\ }\bibfield  {title} {\bibinfo {title} {Quantum union bounds for
  sequential projective measurements},\ }\href
  {https://doi.org/10.1103/physreva.92.052331} {\bibfield  {journal} {\bibinfo
  {journal} {Phys. Rev. A}\ }\textbf {\bibinfo {volume} {92}},\ \bibinfo
  {pages} {052331} (\bibinfo {year} {2015})}\BibitemShut {NoStop}%
\bibitem [{\citenamefont {Friedrichs}(1937)}]{friedrichs1937certain}%
  \BibitemOpen
  \bibfield  {author} {\bibinfo {author} {\bibfnamefont {K.}~\bibnamefont
  {Friedrichs}},\ }\bibfield  {title} {\bibinfo {title} {On certain
  inequalities and characteristic value problems for analytic functions and for
  functions of two variables},\ }\href
  {https://doi.org/10.1090/S0002-9947-1937-1501907-0} {\bibfield  {journal}
  {\bibinfo  {journal} {Trans. Am. Math. Soc.}\ }\textbf {\bibinfo {volume}
  {41}},\ \bibinfo {pages} {321} (\bibinfo {year} {1937})}\BibitemShut
  {NoStop}%
\bibitem [{\citenamefont {Deutsch}(1995)}]{deutsch1995angle}%
  \BibitemOpen
  \bibfield  {author} {\bibinfo {author} {\bibfnamefont {F.}~\bibnamefont
  {Deutsch}},\ }\bibfield  {title} {\bibinfo {title} {The angle between
  subspaces of a hilbert space},\ }in\ \href@noop {} {\emph {\bibinfo
  {booktitle} {Approximation theory, wavelets and applications}}}\ (\bibinfo
  {publisher} {Springer},\ \bibinfo {year} {1995})\ pp.\ \bibinfo {pages}
  {107--130}\BibitemShut {NoStop}%
\bibitem [{\citenamefont {Lemm}(2019)}]{LemmKnabe}%
  \BibitemOpen
  \bibfield  {author} {\bibinfo {author} {\bibfnamefont {M.}~\bibnamefont
  {Lemm}},\ }\bibfield  {title} {\bibinfo {title} {Finite-size criteria for
  spectral gaps in d-dimensional quantum spin systems},\ }\href@noop {}
  {\bibfield  {journal} {\bibinfo  {journal} {arXiv e-prints}\ } (\bibinfo
  {year} {2019})},\ \Eprint {https://arxiv.org/abs/1902.07141}
  {arXiv:1902.07141} \BibitemShut {NoStop}%
\bibitem [{\citenamefont {Lemm}\ and\ \citenamefont
  {Mozgunov}(2019)}]{LemmKnabe2}%
  \BibitemOpen
  \bibfield  {author} {\bibinfo {author} {\bibfnamefont {M.}~\bibnamefont
  {Lemm}}\ and\ \bibinfo {author} {\bibfnamefont {E.}~\bibnamefont
  {Mozgunov}},\ }\bibfield  {title} {\bibinfo {title} {Spectral gaps of
  frustration-free spin systems with boundary},\ }\href
  {https://doi.org/0.1063/1.5089773} {\bibfield  {journal} {\bibinfo  {journal}
  {J. Math. Phys.}\ }\textbf {\bibinfo {volume} {60}},\ \bibinfo {pages}
  {051901} (\bibinfo {year} {2019})}\BibitemShut {NoStop}%
\bibitem [{Nac(1996)}]{NachtergaeleBound}%
  \BibitemOpen
  \bibfield  {title} {\bibinfo {title} {The spectral gap for some spin chains
  with discrete symmetry breaking},\ }\href
  {https://doi.org/10.1007/BF02099509} {\bibfield  {journal} {\bibinfo
  {journal} {Commun. Math. Phys.}\ }\textbf {\bibinfo {volume} {175}},\
  \bibinfo {pages} {565–606} (\bibinfo {year} {1996})}\BibitemShut {NoStop}%
\bibitem [{\citenamefont {Spitzer}\ and\ \citenamefont
  {Starr}(2003)}]{ImprovedNachtergaele}%
  \BibitemOpen
  \bibfield  {author} {\bibinfo {author} {\bibfnamefont {W.~L.}\ \bibnamefont
  {Spitzer}}\ and\ \bibinfo {author} {\bibfnamefont {S.}~\bibnamefont
  {Starr}},\ }\bibfield  {title} {\bibinfo {title} {Improved bounds on the
  spectral gap above frustration free ground states of quantum spin chains},\
  }\href {https://doi.org/10.1023/A:1023059518455} {\bibfield  {journal}
  {\bibinfo  {journal} {Lett. Math. Phys.}\ }\textbf {\bibinfo {volume} {63}},\
  \bibinfo {pages} {165} (\bibinfo {year} {2003})}\BibitemShut {NoStop}%
\end{thebibliography}
%

\clearpage

\appendix
\onecolumngrid
\savegeometry{main_layout} 
\newgeometry{verbose,tmargin=2cm,bmargin=2cm,lmargin=3cm,rmargin=3cm} 

\section{Method of alternating projections and detectability lemma} \label{sec:appendix:Detectability Lemma and Method of Alternating Projections}

Clearly, the quantity for the speed of convergence, i.e., $\mu$, is something that we cannot access in the lab. However, the same holds for a Hamiltonian gap. Therefore, we need to bound these quantities. As pointed out above, by the ETH~\cite{impagliazzo2001eth1,impagliazzo2001eth2}, it is 
unlikely to solve SAT efficiently on 
a quantum computer in the worst case. Therefore, we expect 
\begin{align}
    1/\ln(\mu^{-1}) \leq \alpha(\theta)^{-n} \textrm{ for $0<\alpha(\theta)<1$},
\end{align}
c.f. \cref{thrm:overall runtime}. 
Ref.~\cite{alternating_projections} gives a bound for the speed of convergence $\mu$ and a recipe for explicit calculation. Moreover, in the language of physics, this quantity can be related to the gap via the \emph{detectability lemma} (see Ref.~\cite{Anshu_2016}) and the quantum union bound (see Ref.~\cite{Gao_2015}). The details are pointed out below.
\subsection{Method of alternating projections approach}
\label{sec:alternating_projections}
A substantial subfield in linear algebra, known as the \emph{method of alternating projections}, has been devoted to analyzing the speed of convergence of iterative projections onto subspaces. A cornerstone of this theory is the \emph{Friedrichs angle}, which defines the angle between two subspaces and plays a central role in characterizing convergence behavior \cite{friedrichs1937certain}. A comprehensive survey of this subfield and its key results can be found in Ref.~\cite{deutsch1995angle}.

In this work, we focus specifically on the contributions from Refs.~\cite{badea_announcement, badea_proofs}, which extend the notion of the \emph{Friedrichs angle} to the setting of \emph{multiple subspaces}. They define the angle between several subspaces in the following way:
\begin{proposition}[Generalization of the angle of subspaces to multiple subspaces \cite{badea_announcement, badea_proofs}]
\label{thrm:angle_of_subspaces}
For $M_{1},\dots, M_\ell$ being closed subspaces (with $\ell\geq 2$) let $M\coloneqq \cap_{i=1}^\ell M_i$ be the intersection of all $M_i$. The \emph{Friedrichs angle}, which can be associated with these subspaces, is defined as
\begin{align} 
c(M_1, \dots, M_\ell) &= \sup \left\{\frac{2}{\ell-1} \frac{\sum_{j<k}\operatorname{Re} \braket{x_j\vert x_k}}{\sum_{i=1}^\ell \lvert \lvert x_i \rvert \rvert^2}: x_j \in M_j \cap M^\perp,  \sum_{i} \lvert \lvert x_i \rvert \rvert^2 \neq 0 \right\}\\
&= \sup \left\{\frac{1}{\ell-1} \frac{\sum_{j\neq k} \braket{x_j\vert x_k}}{\sum_{i=1}^\ell\braket{x_i\vert x_i}}: x_j \in M_j \cap M^\perp,  \sum_{i} \lvert \lvert x_i \rvert \rvert^2 \neq 0 \right\}.
\nonumber
\end{align}
\end{proposition}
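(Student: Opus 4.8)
The final statement is, in essence, a definition imported from Refs.~\cite{badea_announcement,badea_proofs}, so the only substantive content requiring justification is that the two displayed suprema define the same number $c(M_1,\dots,M_\ell)$. The plan is to show that the two families of ratios---indexed by the \emph{same} feasible set $\{(x_1,\dots,x_\ell) : x_j \in M_j \cap M^\perp,\ \sum_i \norm{x_i}^2 \neq 0\}$---agree term by term, so that equality of the suprema follows immediately, with no optimization argument needed.

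The key step is a rewriting of the numerator, since the denominators are literally identical once one writes $\norm{x_i}^2 = \braket{x_i \vert x_i}$. First I would partition the ordered sum $\sum_{j\neq k}$ into unordered pairs: each pair $\{j,k\}$ with $j<k$ contributes both $\braket{x_j \vert x_k}$ and $\braket{x_k \vert x_j}$. Conjugate symmetry of the inner product gives $\braket{x_k \vert x_j} = \overline{\braket{x_j \vert x_k}}$, so that $\braket{x_j \vert x_k} + \braket{x_k \vert x_j} = 2\operatorname{Re}\braket{x_j \vert x_k}$. Summing over $j<k$ yields $\sum_{j\neq k}\braket{x_j \vert x_k} = 2\sum_{j<k}\operatorname{Re}\braket{x_j \vert x_k}$, which as a byproduct shows that the ordered sum is automatically real. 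Absorbing the prefactors $\tfrac{1}{\ell-1}$ and $\tfrac{2}{\ell-1}$ then makes the two ratios equal pointwise on the feasible set, and the two suprema coincide.

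To confirm that this is a faithful generalization of the classical Friedrichs angle, I would specialize to $\ell=2$, where the expression reduces to $\sup\{2\operatorname{Re}\braket{x_1 \vert x_2}/(\norm{x_1}^2 + \norm{x_2}^2)\}$ over $x_j \in M_j \cap M^\perp$. Applying $\norm{x_1}^2 + \norm{x_2}^2 \geq 2\norm{x_1}\norm{x_2}$, with equality exactly when $\norm{x_1}=\norm{x_2}$, and using the scale invariance of the admissible set to rescale to equal norms, recovers the usual cosine $\sup\{\operatorname{Re}\braket{x_1 \vert x_2}/(\norm{x_1}\norm{x_2})\}$, as a sanity check on the definition.

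There is no real obstacle here; the computation is elementary. The only points demanding care are bookkeeping: the constraint $\sum_i \norm{x_i}^2 \neq 0$ guarantees every ratio is well-defined, and the degenerate case in which each $M_j \cap M^\perp = \{0\}$---equivalently, using the orthogonal decomposition $M_j = M \oplus (M_j \cap M^\perp)$, the case where every $M_j$ coincides with the intersection $M$---leaves the feasible set empty, so that $c$ is set to $0$ by convention. I emphasize that the genuinely nontrivial fact about this quantity, namely that it controls the convergence rate $\mu$ of the product of projectors and thereby feeds into \cref{lem:cycle_bound}, is a separate matter established through the results of Refs.~\cite{escalante2011alternating,badea_proofs} rather than by the present proposition.
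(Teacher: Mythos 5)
Your proposal is correct, and there is nothing in the paper to compare it against: the paper states this proposition purely as a definition imported from Refs.~\cite{badea_announcement,badea_proofs} and offers no proof. The only verifiable content is the equality of the two displayed suprema, which you establish correctly and pointwise via $\sum_{j\neq k}\braket{x_j\vert x_k}=2\sum_{j<k}\operatorname{Re}\braket{x_j\vert x_k}$ (conjugate symmetry), which also explains why the second line needs no explicit real part; your $\ell=2$ sanity check and the remark that the convergence-rate statement is a separate matter (\cref{thrm:rate of convergence bound}) are both apt.
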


\begin{proposition}[Speed of convergence \cite{badea_announcement, badea_proofs}]
\label{thrm:rate of convergence bound}
For $M_{1},\dots, M_\ell$ being closed subspaces (with $\ell \geq 2$) let $M\coloneqq \cap_{i=1}^\ell M_i$ be the intersection of all $M_i$ and $\mathcal{T} = P_{M_\ell} \dots P_{M_1}$. Suppose that $c\coloneqq c(M_1, \dots M_\ell)<1$. Then, there is quick uniform convergence of the powers $\mathcal{T}^r$ to $P_M$. More precisely,
\begin{align}
\norm{\mathcal{T}^r - P_M}_2 \leq \left(1 - \left( \frac{1-c}{4\ell}\right)^2 \right)^{r/2}.
\end{align}
\end{proposition}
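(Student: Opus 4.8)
The plan is to first reduce to the case of a trivial intersection and then prove a single-cycle contraction, from which the claimed rate follows by submultiplicativity. Since $M=\bigcap_i M_i\subseteq M_j$ for every $j$, each projector splits with respect to $\mathcal{H}=M\oplus M^\perp$ as $P_{M_j}=\Id_M\oplus P_{M_j\cap M^\perp}$, so $\mathcal{T}=\Id_M\oplus\mathcal{T}'$ with $\mathcal{T}'=P_{M_\ell'}\cdots P_{M_1'}$ and $M_j':=M_j\cap M^\perp$, while $P_M=\Id_M\oplus 0$. Hence $\mathcal{T}^r-P_M=0\oplus(\mathcal{T}')^r$ and $\norm{\mathcal{T}^r-P_M}_2=\norm{(\mathcal{T}')^r}_2$. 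On the reduced space $M^\perp$ the subspaces $M_j'$ satisfy $\bigcap_j M_j'=\{0\}$, and the Friedrichs angle of \cref{thrm:angle_of_subspaces} is unchanged, $c(M_1',\dots,M_\ell')=c$. It therefore suffices to prove the single-step bound $\norm{\mathcal{T}'x}\le(1-\delta)^{1/2}\norm{x}$ for all $x\in M^\perp$ with $\delta=(\tfrac{1-c}{4\ell})^2$, since then $\norm{(\mathcal{T}')^r}_2\le\norm{\mathcal{T}'}_2^{\,r}\le(1-\delta)^{r/2}$.

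For the single-step estimate I would track the orbit $x_0=x$, $x_j=P_{M_j'}x_{j-1}$, so $\mathcal{T}'x=x_\ell$. Writing $Q_j=\Id-P_{M_j'}$ and using that each $P_{M_j'}$ is an orthogonal projection gives the exact energy-decrement identity $\norm{x}^2-\norm{\mathcal{T}'x}^2=\sum_{j=1}^\ell\norm{Q_j x_{j-1}}^2=:S^2$. The first observation is that every iterate stays close to $x$: from $x-x_j=\sum_{i\le j}Q_i x_{i-1}$ and Cauchy--Schwarz one gets $\norm{x-x_j}\le\sqrt{\ell}\,S$, and since $x_j\in M_j'$ this controls the distance of $x$ from each subspace, $\norm{Q_j x}=\operatorname{dist}(x,M_j')\le\sqrt{\ell}\,S=:\eta$. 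The decisive observation is that being simultaneously $\eta$-close to all $M_j'$ forces $x$ itself to be small, precisely because the subspaces meet only at $0$ — and this is exactly what the Friedrichs angle quantifies.

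To make this quantitative I would set $p_j=P_{M_j'}x\in M_j'$ and expand, using $\braket{x|Q_kx}=\norm{Q_kx}^2\le\eta^2$, to obtain $\operatorname{Re}\braket{p_j|p_k}\ge\norm{x}^2-3\eta^2$. Summing over the $\ell(\ell-1)$ ordered pairs and comparing with the angle inequality supplied by \cref{thrm:angle_of_subspaces}, namely $\sum_{j\ne k}\operatorname{Re}\braket{p_j|p_k}\le(\ell-1)\,c\sum_j\norm{p_j}^2\le(\ell-1)\,c\,\ell\,\norm{x}^2$, yields $(1-c)\norm{x}^2\le 3\eta^2=3\ell\,S^2$. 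Rearranging gives $S^2\ge\tfrac{1-c}{3\ell}\norm{x}^2$, hence $\norm{\mathcal{T}'x}^2\le(1-\tfrac{1-c}{3\ell})\norm{x}^2$. Because $\tfrac{1-c}{3\ell}\ge(\tfrac{1-c}{4\ell})^2$ for all $\ell\ge 2$ and $c\in[0,1)$, this is at least as strong as the per-cycle contraction with $\delta=(\tfrac{1-c}{4\ell})^2$, and the proposition follows.

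I expect the main obstacle to be the middle step: turning the geometric assertion ``$x$ is nearly in every $M_j'$, yet the $M_j'$ intersect only at $0$, so $x$ is nearly $0$'' into the sharp inequality $(1-c)\norm{x}^2\lesssim\eta^2$. The delicate points are the bookkeeping that guarantees all intermediate iterates $x_j$ and the auxiliary vectors $p_j$ genuinely lie in $M_j\cap M^\perp$, so that the supremum defining $c$ in \cref{thrm:angle_of_subspaces} actually applies to them, together with the reduction argument that lets us replace $P_M$ by $0$. Matching the exact prefactor $4\ell$ is then only routine constant-chasing; as indicated, the estimate obtained above is in fact slightly stronger than the stated bound.
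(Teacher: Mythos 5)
The paper does not actually prove this proposition: it is imported verbatim from Refs.~\cite{badea_announcement,badea_proofs} and used as a black box, so there is no in-paper proof to compare against. Your blind proof is, as far as I can check, correct and self-contained, and it even establishes the slightly stronger per-cycle contraction $\norm{\mathcal{T}'x}^2\le\bigl(1-\tfrac{1-c}{3\ell}\bigr)\norm{x}^2$, which dominates the stated $1-\bigl(\tfrac{1-c}{4\ell}\bigr)^2$ since $16\ell\ge 3(1-c)$. The three ingredients all hold up: (i) the reduction $P_{M_j}=\Id_M\oplus P_{M_j\cap M^\perp}$ is valid because $M\subseteq M_j$ gives the orthogonal splitting $M_j=M\oplus(M_j\cap M^\perp)$, so $\norm{\mathcal{T}^r-P_M}=\norm{(\mathcal{T}')^r}\le\norm{\mathcal{T}'}^r$; (ii) the telescoping identity $\norm{x}^2-\norm{\mathcal{T}'x}^2=\sum_j\norm{Q_jx_{j-1}}^2$ and the Cauchy--Schwarz step $\norm{Q_jx}\le\norm{x-x_j}\le\sqrt{\ell}\,S$ are exact; (iii) the expansion $\operatorname{Re}\braket{p_j|p_k}\ge\norm{x}^2-3\eta^2$ uses only $\braket{x|Q_kx}=\norm{Q_kx}^2$, and the vectors $p_j=P_{M_j\cap M^\perp}x$ do lie in $M_j\cap M^\perp$, so the supremum defining $c$ in \cref{thrm:angle_of_subspaces} legitimately applies to them (note $\sum_{j\ne k}\operatorname{Re}\braket{p_j|p_k}=2\sum_{j<k}\operatorname{Re}\braket{p_j|p_k}$, matching the paper's normalization). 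Two pedantic points you should make explicit in a polished write-up: the angle inequality requires $\sum_j\norm{p_j}^2\neq 0$, so the degenerate case where all $p_j=0$ must be dispatched separately (there $\norm{Q_jx}=\norm{x}\le\eta$ already forces $S^2\ge\norm{x}^2/\ell$, which is stronger than needed); and the step $\sum_j\norm{p_j}^2\le\ell\norm{x}^2$ is only useful because $c\ge0$, which follows from the supremum including tuples with a single nonzero entry. Your argument is the classical energy-decrement route familiar from alternating-projection theory, whereas Badea--Grivaux--M\"uller reach the $4\ell$ constant through their ``configuration constant'' formalism; your more elementary derivation would make a reasonable self-contained appendix replacing the bare citation.
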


We have $\norm{ \mathcal{T}(\theta)^r - P_M}_2 \leq \mu^r$. Thus, 
\begin{align}
    \mu \leq \left(1 - \left( \frac{1-c}{4\ell}\right)^2 \right)^{1/2}.
\end{align}
For our runtime estimate, we find $\ln(\mu^{-1})$, which in turn can be bounded as
\begin{align}
\ln(\mu^{-1}) \geq \ln\left(\frac{1}{ \sqrt{1 - \left(\frac{1-c}{4\ell} \right)^2}}\right) \geq  \frac{1}{2} \left(\frac{1-c}{4\ell} \right)^2.
\end{align}
Here we have used that $\ln(1-z) \leq -z$ for $z \in [0,1)$.

Deriving the desired bound from \cref{thrm:rate of convergence bound} involves two steps. In the first step, we show that all clause checks on $n$ qubits can be performed using $\mathcal{O}(\ln(n))$ projective measurements. For this purpose, the perfect hash family construction in Section~\ref{subsec:parallelized_measurements} is used. In the second step, the angle of subspaces from \cref{thrm:angle_of_subspaces} has to be determined for a given set of parameters $(\theta,n,m,k)$.

\begin{remark}[Explicit worst-case bounds]
    We are convinced that the \emph{method of alternating projections} and the associated \emph{Friedrichs angle} are precisely the right language to talk about the problem at hand. However, calculating explicit non-trivial worst-case bounds in this framework seems challenging.
\end{remark}

\subsection{Detectability lemma approach}
\label{app:DL_details}
By translating Refs.~\cite{Anshu_2016,Gao_2015} to the alternating-projections language, we find the following relation of the gap and the speed of convergence:
\begin{proposition}[Detectability lemma]
\label{prop:DL}
    Assume that $H(\theta) = \sum_{i=1}^m P_i(\theta)$ where each $P_i(\theta)$ is a projector that does not commute with at most $g$ other terms, i.e., $P_j(\theta)$ with $j\neq i$.
    The \emph{detectability lemma} from Ref.~\cite{Anshu_2016} implies the following relation between the spectral gap $\Delta(\theta,n,m,k)$ and $\mu$ from \cref{thrm:alternating_proj}:
    \begin{align}
        \mu \leq  \frac{1}{\sqrt{\Delta(\theta,n,m,k)/g^2 + 1}} \leq \left(1 - \frac{\Delta(\theta,n,m,k)}{4g^2}\right).
    \end{align}
\end{proposition}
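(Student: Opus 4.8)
The plan is to treat this as a dictionary translation between the \emph{method of alternating projections} and the Hamiltonian-complexity language of the \emph{detectability lemma}, rather than re-deriving either from scratch. The three ingredients are: (i) the product $\calD \coloneqq \prod_{i=1}^m C_i(\theta)$ is precisely the \emph{detectability operator} associated with $H(\theta)$; (ii) the convergence constant $\mu$ from \cref{thrm:alternating_proj} equals the operator norm of $\calD$ restricted to the orthogonal complement of the ground space; and (iii) the detectability lemma of Ref.~\cite{Anshu_2016} bounds exactly this norm in terms of the gap $\Delta$ and the degree $g$. Granting these, the first inequality is immediate and the second is an elementary scalar estimate.

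Concretely, I would proceed as follows. First, record frustration-freeness: for a satisfiable instance the text already guarantees a solution vector $\ket{\Theta_{\xx}}$ annihilated by every $P_i(\theta)$, so $H(\theta)\succeq 0$ has ground energy $0$ and $\calP_{\mathrm{GS}}(\theta)$ projects onto $\bigcap_{i}\ker P_i(\theta)=\bigcap_i\operatorname{im}C_i(\theta)$. Second, organize the product: using the commuting layers furnished by the perfect-hash-family construction of \cref{thrm:number_of_layers} (equivalently, a proper coloring of the degree-$g$ commutation graph), reorder $\calD$ so that the $C_i(\theta)$ within each layer---being mutually commuting projectors---collapse to a single layer projector. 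In this ordering $\calD$ is literally the layered detectability operator of Ref.~\cite{Anshu_2016}. Third, identify $\mu$: by \cref{lem:commutation_T_P} we have $[\calD,\calP_{\mathrm{GS}}(\theta)]=0$, and since $C_i(\theta)\calP_{\mathrm{GS}}(\theta)=\calP_{\mathrm{GS}}(\theta)$ for every $i$ it follows that $\calD\calP_{\mathrm{GS}}(\theta)=\calP_{\mathrm{GS}}(\theta)=\calP_{\mathrm{GS}}(\theta)\calD$. Hence $(\calD-\calP_{\mathrm{GS}}(\theta))^r=\calD^r-\calP_{\mathrm{GS}}(\theta)$ and $\calD-\calP_{\mathrm{GS}}(\theta)=\calD(\Id-\calP_{\mathrm{GS}}(\theta))$, so we may take the convergence constant to be $\mu=\norm{\calD(\Id-\calP_{\mathrm{GS}}(\theta))}_2$, the norm of $\calD$ on the excited subspace.

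With these identifications, the detectability lemma \cite[Lemma~1]{Anshu_2016} applied to the frustration-free $H(\theta)$ of degree $g$ yields $\norm{\calD(\Id-\calP_{\mathrm{GS}}(\theta))}_2\leq 1/\sqrt{\Delta/g^2+1}$, which is the first claimed inequality. Where the physical algorithm applies the checks in plain sequential rather than strictly layered order, I would bridge the two via the quantum union bound of Ref.~\cite{Gao_2015}, which controls the norm loss of sequential projective measurements and lets the layered bound carry over. The second inequality is purely analytic: setting $x\coloneqq\Delta/g^2\geq 0$, squaring $1/\sqrt{1+x}\leq 1-x/4$ (legitimate once $x<4$) reduces it to $x\,(x^2-7x+8)\geq 0$, which holds for $0\leq x\leq(7-\sqrt{17})/2\approx 1.44$; this comfortably covers our regime since \cref{prop:form of hamiltonian gap} forces $\Delta$ to be exponentially small, hence $x\ll 1$.

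The main obstacle is step (iii) together with the reordering issue: getting the degree dependence to appear exactly as $g^2$ and rigorously justifying that the sequential product the algorithm actually runs can be replaced---without loss in the bound---by the layered detectability operator to which Ref.~\cite{Anshu_2016} applies. This is precisely the point at which the quantum union bound enters, and making that translation airtight (rather than the two inequalities themselves, which are routine once the objects are matched) is where the real work lies.
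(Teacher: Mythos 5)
Your proposal is correct in substance and takes the same route as the paper, which states \cref{prop:DL} without a separate proof as a direct translation of \cite[Lemma~1]{Anshu_2016}: identify $\calD=\prod_{i=1}^m C_i(\theta)$ as the detectability operator, use $\calD\calP_{\mathrm{GS}}(\theta)=\calP_{\mathrm{GS}}(\theta)\calD=\calP_{\mathrm{GS}}(\theta)$ to take $\mu=\norm{\calD(\Id-\calP_{\mathrm{GS}}(\theta))}_2$, note that $\calD$ maps the excited subspace into itself so the output state has energy density at least $\Delta$, and finish with the elementary scalar bound $1/\sqrt{1+x}\leq 1-x/4$ (your reduction to $x(x^2-7x+8)\geq 0$ and the range $0\leq x\leq(7-\sqrt{17})/2$ check out; the cleaner justification that this range is never left is the paper's remark that $\Delta/g^2<1$, rather than the ETH-based \cref{prop:form of hamiltonian gap}, which only constrains the worst case over the instance family). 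The one point you misdiagnose is the ordering issue, which you single out as ``where the real work lies'': the detectability lemma of Ref.~\cite{Anshu_2016} is stated for the product $\prod_i(\Id-Q_i)$ taken in an \emph{arbitrary} order --- removing the layered structure of the original Aharonov--Arad--Landau--Vazirani formulation is precisely the content of that paper --- so no reordering into commuting layers and no bridge between the sequential and layered products is needed. Moreover, the bridge you propose would not work: Gao's quantum union bound \cite{Gao_2015} \emph{lower}-bounds the norm retained by a sequential product (which is why the paper invokes it only for the converse direction in \cref{prop:QUB}), so it cannot be used to transfer an upper bound from a layered product to the sequential one. Dropping that detour leaves a complete and correct argument.
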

On the other hand, the quantum union bound due to Ref.~\cite{Gao_2015} implies
the following.

\begin{proposition}[Quantum union bound/converse detectability lemma]
\label{prop:QUB}
    The \emph{quantum union bound} from Ref.~\cite{Gao_2015} implies the following relation between the spectral gap $\Delta(\theta,n,m,k)$ and $\mu$ from \cref{thrm:alternating_proj}:
    \begin{align}
        1 - 4\Delta(\theta,n,m,k) \leq \mu.
    \end{align}
\end{proposition}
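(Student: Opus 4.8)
The plan is to turn the desired lower bound on $\mu$ into a statement about how little the product of complementary projectors $\calA(\theta) \coloneqq \prod_{i=1}^m C_i(\theta)$ contracts a single, well-chosen vector, and then read off $\mu$ from its definition as a uniform contraction rate. The natural candidate is a normalized eigenvector $\ket{\psi}$ of $H(\theta) = \sum_{i=1}^m P_i(\theta)$ belonging to its smallest \emph{nonzero} eigenvalue, which is exactly the spectral gap $\Delta = \Delta(\theta,n,m,k)$. Since $H(\theta)$ is Hermitian and the ground space sits at eigenvalue $0$, orthogonality of eigenspaces gives $\calP_\mathrm{GS}(\theta)\ket{\psi} = 0$ for free.

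First I would record the two elementary facts this state supplies. Because each $P_i(\theta)$ is a projector, $\bra{\psi} P_i(\theta)\ket{\psi} = \norm{P_i(\theta)\ket{\psi}}_2^2$, so the eigen-equation yields the energy identity
\begin{align}
    \sum_{i=1}^m \norm{P_i(\theta)\ket{\psi}}_2^2 = \bra{\psi} H(\theta)\ket{\psi} = \Delta .
\end{align}
At the same time, specializing the alternating-projections bound of \cref{thrm:alternating_proj} to $r=1$ gives $\norm{(\calA(\theta) - \calP_\mathrm{GS}(\theta))\ket{\psi}}_2 \leq \mu$, and using $\calP_\mathrm{GS}(\theta)\ket{\psi}=0$ this collapses to $\norm{\calA(\theta)\ket{\psi}}_2 \leq \mu$. (The commutation relation of \cref{lem:commutation_T_P} is what makes $\mu$ behave as a clean contraction rate on the orthogonal complement of the ground space, into which $\calA(\theta)\ket{\psi}$ also falls.) It therefore suffices to prove $\norm{\calA(\theta)\ket{\psi}}_2 \geq 1 - 4\Delta$.

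This is where the quantum union bound of Ref.~\cite{Gao_2015} enters. Interpreting $\calA(\theta)\ket{\psi}$ as the amplitude of \emph{passing} the ordered sequence of projective clause-check measurements $C_1(\theta),\dots,C_m(\theta)$ on input $\ket{\psi}$, the union bound controls the total failure probability by the sum of the single-check failure probabilities,
\begin{align}
    1 - \norm{\calA(\theta)\ket{\psi}}_2^2 \leq 4\sum_{i=1}^m \norm{(\Id - C_i(\theta))\ket{\psi}}_2^2 = 4\sum_{i=1}^m \norm{P_i(\theta)\ket{\psi}}_2^2 = 4\Delta ,
\end{align}
the last step being the energy identity. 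Hence $\norm{\calA(\theta)\ket{\psi}}_2 \geq \sqrt{1-4\Delta} \geq 1-4\Delta$, where the final inequality is $\sqrt{y}\geq y$ on $[0,1]$ with $y = 1-4\Delta$ (the claim being vacuous once $4\Delta>1$). Combining with $\mu \geq \norm{\calA(\theta)\ket{\psi}}_2$ closes the argument.

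The main obstacle is invoking the union bound in its \emph{sharp} form. A bare triangle-inequality estimate only delivers $\norm{\ket{\psi} - \calA(\theta)\ket{\psi}}_2 \leq 2\bigl(\sum_i \norm{P_i(\theta)\ket{\psi}}_2^2\bigr)^{1/2} = 2\sqrt{\Delta}$, which would weaken the result to the parametrically worse $\mu \geq 1 - 2\sqrt{\Delta}$; it is precisely the constant-$4$, \emph{linear}-in-$\Delta$ bound that must be cited to obtain the stated inequality. A secondary point is that the union bound is sensitive to the ordering of the measurements, so it should be applied to the same left-to-right ordering that defines $\calA(\theta)$; the frustration-free structure ensures every ordering shares the same ground space, so this is mere bookkeeping. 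Finally, it is worth emphasizing for context that this bound is the natural converse to the detectability-lemma estimate of \cref{prop:DL}, the two together sandwiching the convergence rate $\mu$ in terms of the gap.
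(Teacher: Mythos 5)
Your argument is correct and is precisely the derivation the paper leaves implicit (the paper states \cref{prop:QUB} without proof, attributing it to the quantum union bound of Ref.~\cite{Gao_2015}): applying Gao's bound with constant $4$ to a normalized first-excited eigenvector $\ket{\psi}$ of the frustration-free $H(\theta)$ gives $\norm{\calA(\theta)\ket{\psi}}_2^2 \geq 1-4\Delta$, and the $r=1$ case of \cref{thrm:alternating_proj} together with $\calP_\mathrm{GS}(\theta)\ket{\psi}=0$ forces $\mu \geq \norm{\calA(\theta)\ket{\psi}}_2 \geq 1-4\Delta$. Your side remarks (the necessity of the sharp linear-in-$\Delta$ form of the union bound rather than a gentle-measurement estimate, and the vacuity of the claim when $4\Delta>1$) are also apt.
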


Obviously, for this approach to work, the gap $\Delta(\theta,n,m,k)$ and the uniform gap $\Delta_{\mathrm{uni}}(\theta,n,m,k)$, depending on which Hamiltonian is used to run the algorithm, have to be bounded sufficiently well. It is a simple corollary of \cref{thrm:summary_runtimes} to let the worst-case runtime depend on the spectral gap of the rotated Hamiltonian that is used for running the algorithm.
\begin{theorem}[Overall time complexity as a function of the uniform gap]
\cref{alg:quantum SAT solver} finds a satisfying solution for a $3$-SAT instance in time
\label{thrm:summary_runtimes_gap}
    \begin{align}
         T(\theta, n, \delta) &= \mathcal{O} \left( \ln(n) \frac{\left[\ln \left(\frac{1}{1-\cos^2(\theta)}\right) + n \ln \left( \frac{1}{\cos \left(\frac{\theta}{2} \right)} \right)\right]}{\frac{\Delta_{\mathrm{uni}(\theta,n,m,k)}}{4g^2}} \ln\left(\delta^{-1}\right) \left( \frac{2}{1+\cos(\theta)}\right)^{n/q} t(\theta, n,\delta) \right),
    \end{align}
    where $q=1$ per default and $q=2$ for the amplitude-amplified version. In the above expression, $\delta$ denotes the joint failure probability of state preparation and readout algorithm, $g$ is an upper bound such that each $P_i(\theta)$ does not commute at most with $g$ others. We denote by $\Delta_{\mathrm{uni}}(\theta,n,m,k)$ the uniform spectral gap. The term $t(\theta,n,\delta)$ is determined by the readout algorithm used.
    For general $3$-SAT instances with possibly multiple satisfying assignments, \cref{alg:multiple_solution_readout} applies and we have 
    \begin{align}
        t_{\mathrm{multiple}}(\theta, n, \delta) &= \mathcal{O} \left( \frac{\ln\left( \frac{n}{\delta}\right)}{\sin^2(\theta)}\right).
    \end{align}
    If promised a unique solution, \cref{alg:unique_solution_readout} applies, and the readout cost reduces to
    \begin{align}
        t_{\mathrm{unique}}(\theta, n, \delta) &= \mathcal{O} \left( \frac{\ln \left(\frac{n}{\delta}\right)}{\ln \left(\frac{2}{1+\cos^2(\theta)}\right)} \right).
    \end{align}
\end{theorem}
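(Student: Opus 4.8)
The plan is to treat this statement as a direct corollary of \cref{thrm:summary_runtimes}, combined with the detectability-lemma bound of \cref{prop:DL}. The only substantive content is the conversion of the abstract convergence rate $\mu$ into the spectral gap, together with a justification that the \emph{uniform} gap is the correct quantity to carry through the readout.

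First, I would recall that \cref{thrm:summary_runtimes} already expresses the overall runtime with a factor $1/\ln(\mu^{-1})$ carrying all the dependence on the convergence rate. The goal is therefore simply to upper bound $1/\ln(\mu^{-1})$ by $4g^2/\Delta_{\mathrm{uni}}$ and substitute. To this end, I would apply \cref{prop:DL}, which gives $\mu \leq 1 - \Delta/(4g^2)$. Using the elementary inequality $-\ln(1-z) \geq z$, valid for $z \in [0,1)$, this yields
\begin{align}
    \ln(\mu^{-1}) = -\ln\!\left(1 - \frac{\Delta}{4g^2}\right) \geq \frac{\Delta}{4g^2},
\end{align}
and hence $1/\ln(\mu^{-1}) \leq 4g^2/\Delta$. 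Substituting this bound into the runtime of \cref{thrm:summary_runtimes} replaces the opaque factor $1/\ln(\mu^{-1})$ by $1/(\Delta/(4g^2))$, producing the stated expression up to the distinction between $\Delta$ and $\Delta_{\mathrm{uni}}$.

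The one point requiring genuine care rather than routine algebra is the passage from $\Delta$ to $\Delta_{\mathrm{uni}}$. In the multiple-solution readout (\cref{alg:multiple_solution_readout}), the encoding Hamiltonian is updated at each of its $n$ steps via clause propagation (\cref{alg:clause-propagation}). Each updated Hamiltonian $H_j(\theta)$ has its own gap, and therefore its own convergence rate, so a single runtime bound valid across all steps requires a gap that lower-bounds every $\Delta(H_j(\theta))$ simultaneously. This is supplied precisely by \cref{theorem:uniform_gap_is_a_lower_bound}, which guarantees $\Delta(H_j(\theta)) \geq \Delta_{\mathrm{uni}}(H(\theta,n,m,k))$ for all $j$. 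Replacing $\Delta$ by $\Delta_{\mathrm{uni}}$ in the previous step thus yields a bound valid uniformly over the entire readout, while the readout-overhead terms $t_{\mathrm{multiple}}$ and $t_{\mathrm{unique}}$ carry over verbatim from \cref{thrm:summary_runtimes}, since they are independent of the gap.

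I expect the main (mild) obstacle to be not a calculation but this conceptual step: recognizing that the per-step gaps of the evolving Hamiltonian must be controlled uniformly, which is exactly what motivates introducing $\Delta_{\mathrm{uni}}$ rather than a single instance gap. For the promised-unique-solution branch, where the Hamiltonian is fixed, one may use $\Delta$ directly, but invoking $\Delta_{\mathrm{uni}} \leq \Delta$ keeps the statement uniform across both readout variants.
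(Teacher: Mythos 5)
Your proposal is correct and follows essentially the same route as the paper's proof: it treats the statement as a corollary of \cref{thrm:summary_runtimes}, uses \cref{prop:DL} together with $-\ln(1-z)\geq z$ to bound $1/\ln(\mu^{-1})$ by $4g^2/\Delta$, and invokes \cref{theorem:uniform_gap_is_a_lower_bound} to justify replacing $\Delta$ by $\Delta_{\mathrm{uni}}$ for the evolving Hamiltonian in the multiple-solution readout. Your explicit remark that the unique-solution branch could use $\Delta$ directly but that $\Delta_{\mathrm{uni}}\leq\Delta$ unifies both cases is a small clarification beyond what the paper states, but the argument is the same.
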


\begin{proof}
    Starting from the runtime bounds derived in \cref{thrm:summary_runtimes} and using the \emph{detectability lemma} from \cref{prop:DL} to relate $\mu$ with the spectral gap, i.e.,
    \begin{align}
        \mu \leq 1 - \frac{\Delta_{\mathrm{uni}}(\theta,n,m,k)}{4g^2}.
    \end{align}
    We find that
    \begin{align}
        \ln(\mu^{-1}) \geq 
        -\ln\left(1 - \frac{\Delta_{\mathrm{uni}}(\theta,n,m,k)} {4g^2}\right) \geq \frac{\Delta_{\mathrm{uni}}(\theta,n,m,k)} {4g^2}.
    \end{align}
    Here we have used that $\ln(1-z) \leq -z$ for $z \in [0,1)$ which applies in our case since $\Delta_{\mathrm{uni}}(\theta,n,m,k)/g^2<1$ as noted in Ref.~\cite{Anshu_2016}. Plugging this value into the bounds from \cref{thrm:summary_runtimes}, we find the desired result. Note that in the general case, the uniform spectral gap is needed because while running the algorithm, the Hamiltonian changes (see \cref{theorem:uniform_gap_is_a_lower_bound} for the argument why the uniform gap can be used as a lower bound).
\end{proof}

While Ref.~\cite[Appendix G]{benjamin2017} has already proven a lower bound for the spectral gap, this turned out to be not tight enough to yield a speed-up over brute force search, but for completeness we state it here (in a slightly generalized form for general $k$):
\begin{theorem}[Spectral gap lower bound generalized to $k$-SAT (from Ref.~\cite{benjamin2017})]
\label{theorem:gap_scaling_benjamin}
    Let $\phi$ be a satisfiable $k$-SAT instance with $n$ variables and $m$ clauses, and let us denote by $\theta \in (0,\pi/2)$ a rotation angle. At such an angle, we associate the Hamiltonian $H(\theta)$ as defined in \cref{subsec:encoding_hamiltonian}. The spectral gap of this Hamiltonian is lower bounded by the following value
    \begin{align}
        \Delta(H(\theta)) \geq \sin^{2k}(\theta)\cdot \left(\frac{1 - \cos(\theta)}{1 + \cos(\theta)}\right)^{n}.
    \end{align}
\end{theorem}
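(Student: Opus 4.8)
The plan is to reduce the spectral-gap bound to a statement in the non-orthogonal ``solution basis'' $\{\ket{\Theta_{\xx}}\}_{\xx\in\{0,1\}^n}$ and then compare $H(\theta)$ against a purely classical clause-violation operator. Since $H(\theta)=\sum_i P_i(\theta)$ is a sum of orthogonal projectors it is frustration-free, so its kernel is exactly the ground space $\operatorname{im}(\calP_\mathrm{GS}(\theta))$ (nonempty because $\phi$ is satisfiable), and the gap is the Rayleigh-quotient minimum $\Delta(H(\theta))=\min\{\langle\psi|H(\theta)|\psi\rangle:\|\psi\|=1,\ \calP_\mathrm{GS}(\theta)\ket{\psi}=0\}$. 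First I would expand an arbitrary state as $\ket{\psi}=\sum_{\xx}c_{\xx}\ket{\Theta_{\xx}}$; this is legitimate since the single-qubit pair $\{\ket{\theta},\ket{\bar{\theta}}\}$ is a basis for $\theta\in(0,\tfrac{\pi}{2}]$, so $\{\ket{\Theta_{\xx}}\}$ is a (non-orthogonal) basis of the full space.

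Two explicit computations drive everything. First, the Gram matrix of the solution basis factorizes, $G_{\xx\yy}=\braket{\Theta_{\xx}|\Theta_{\yy}}=\cos^{D_{\xx\yy}}(\theta)=g^{\otimes n}$ with $g=\bigl(\begin{smallmatrix}1 & \cos\theta\\ \cos\theta & 1\end{smallmatrix}\bigr)$, so its eigenvalues are the products $\prod_j(1\pm\cos\theta)$ and lie in $[(1-\cos\theta)^n,(1+\cos\theta)^n]$. Second, a single-qubit calculation using $\braket{\theta|\theta^\perp}=\braket{\bar{\theta}|\bar{\theta}^\perp}=0$ and $\braket{\theta|\bar{\theta}^\perp}=\braket{\bar{\theta}|\theta^\perp}=\sin\theta$ shows that, on the $k$ clause qubits, $\braket{\Theta_{\xx_i}^\perp|\Theta_{\yy}}$ equals $\sin^k(\theta)$ exactly when $\yy$ restricted to clause $c_i$ equals the unique violating pattern $v_i$, and vanishes otherwise. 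Hence $\langle\psi|P_i(\theta)|\psi\rangle=\sin^{2k}(\theta)\bigl\|\sum_{\xx:\,\xx|_{c_i}=v_i}c_{\xx}\,\ket{\Theta_{\xx}}_{\mathrm{rest}}\bigr\|^2$, the ground space in coefficients is precisely $\{c:\operatorname{supp}(c)\subseteq\calS\}$, and bounding the remaining rest-qubit Gram matrix below by $(1-\cos\theta)^{n-k}\Id$ yields the operator comparison $H(\theta)\succeq\sin^{2k}(\theta)(1-\cos\theta)^{n-k}M_{\mathrm{cl}}$, where $M_{\mathrm{cl}}$ is the diagonal operator counting violated clauses and obeys $M_{\mathrm{cl}}\succeq\Id-\Pi_{\calS}$ (a non-solution violates at least one clause).

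It then remains to convert physical orthogonality to the ground space into $\ell_2$-mass off $\calS$ in coefficient space: writing the gap as a generalized Rayleigh quotient $\min_{Gc\perp K}\,c^\dagger M c/(c^\dagger G c)$ with $K=\operatorname{ran}\Pi_{\calS}$ and $M$ the coefficient-space form of $H(\theta)$, the factor $\sin^{2k}(\theta)$ from clause detection and the Gram eigenvalue ratio $(1-\cos\theta)^n/(1+\cos\theta)^n=\tan^{2n}(\theta/2)$ are exactly the two ingredients that must combine to the claimed bound $\sin^{2k}(\theta)\bigl(\tfrac{1-\cos\theta}{1+\cos\theta}\bigr)^n$. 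This matching of exponents is what makes the stated form plausible and tells me where each factor originates.

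I expect the main obstacle to be precisely this last translation. The constraint $Gc\perp K$ couples the on-$\calS$ and off-$\calS$ parts of $c$ through the off-diagonal Gram block $G_{S\bar{S}}$, and estimating the coupling by the crude inequality $\|G_{SS}^{-1}G_{S\bar{S}}\|\le\|G_{SS}^{-1}\|\,\|G_{S\bar{S}}\|$ is far too lossy: interlacing only gives $\|G_{SS}^{-1}\|\le(1-\cos\theta)^{-n}$ and $\|G_{S\bar{S}}\|\le(1+\cos\theta)^n$, producing a spurious $\tan^{4n}(\theta/2)$ instead of the required $\tan^{2n}(\theta/2)$. To recover the sharp exponent I would avoid treating $G$ as a generic positive matrix and instead exploit the product structure $G=g^{\otimes n}$ directly, organizing the coefficient space by Hamming distance to a fixed reference solution (the same Hamming-distance weighting that reappears in the average-case discussion) and diagonalizing $g$ qubit by qubit; I would then confirm that the constant is saturated on a near-product test configuration, so that the base $(1-\cos\theta)/(1+\cos\theta)$ is indeed the correct one.
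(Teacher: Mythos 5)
Your setup coincides with the paper's: you expand in the non-orthogonal basis $\{\ket{\Theta_{\xx}}\}$, extract the factor $\sin^{2k}(\theta)$ from the single-qubit overlaps $\braket{\theta|\bar{\theta}^\perp}=\braket{\bar{\theta}|\theta^\perp}=\sin(\theta)$, identify the Gram matrix as $M^{\otimes n}$ with $M=\bigl(\begin{smallmatrix}1&\cos\theta\\ \cos\theta&1\end{smallmatrix}\bigr)$ and spectrum contained in $[(1-\cos\theta)^n,(1+\cos\theta)^n]$, and reduce the numerator of the Rayleigh quotient to a constant times the off-solution coefficient mass $\sum_{\xx\notin\calS}|c_{\xx}|^2$. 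Up to that point the argument is sound and tracks the paper's proof step for step (your intermediate factor $(1-\cos\theta)^{n-k}$ is even marginally sharper than the paper's $(1-\cos\theta)^{n}$, though you do not carry the improvement into the final bound).

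The proof is nevertheless incomplete: the step you yourself flag as ``the main obstacle'' --- showing $\sum_{\xx\notin\calS}|c_{\xx}|^2\geq(1+\cos\theta)^{-n}$ for a normalized state orthogonal to the ground space --- is precisely where the theorem's denominator comes from, and your proposed remedy (controlling the cross block $G_{S\bar S}$ via a Schur-complement-type estimate, or diagonalizing $g^{\otimes n}$ organized by Hamming distance to a reference solution) is both unexecuted and unlikely to succeed, since $\calS$ is an arbitrary subset of $\{0,1\}^n$ and the tensor-product structure of $G$ does not restrict usefully to the blocks $G_{SS}$, $G_{S\bar S}$, $G_{\bar S\bar S}$. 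The paper's resolution is elementary and sidesteps the coupling entirely: orthogonality $\braket{\Theta_{\myvec{s}}|\Phi}=0$ for every $\myvec{s}\in\calS$, contracted against $c_{\myvec{s}}^{*}$ and summed over $\myvec{s}\in\calS$, yields the exact identity $c_S^\dagger G_{S\bar S}c_{\bar S}=-\,c_S^\dagger G_{SS}c_S$, whence
\begin{equation}
1=\braket{\Phi|\Phi}=c_{\bar S}^\dagger G_{\bar S\bar S}c_{\bar S}-c_S^\dagger G_{SS}c_S\;\leq\; c_{\bar S}^\dagger G_{\bar S\bar S}c_{\bar S}\;\leq\;(1+\cos\theta)^{n}\sum_{\xx\notin\calS}|c_{\xx}|^2,
\end{equation}
using only that the principal block $G_{SS}$ is positive semidefinite and that $\lambda_{\max}$ of a principal submatrix is at most $\lambda_{\max}(G)=(1+\cos\theta)^n$. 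No inverse of $G_{SS}$ and no cross-block norm estimate is needed; supplying this identity is what your write-up is missing.
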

\begin{proof}
    Although we follow the proof presented in Ref.~\cite[Appendix G]{benjamin2017}, we decided to spell it out explicitly here for the convenience of the reader. 
    Let us start by noting the  matrix inequality 
    \begin{align}
    \label{eq:trace_ineq}
       \operatorname{Tr}[A]\cdot \lambda_{\mathrm{min}}(B)\leq  \operatorname{Tr}[AB] \leq \operatorname{Tr}[A]\cdot \lambda_{\mathrm{max}}(B)
    \end{align}
    with $\lambda_{\mathrm{min}}(B)$ being the minimum eigenvalue of $B$.
    Let us choose as a non-orthogonal basis
    \begin{align}
        \left\{\ket{\Theta_{\xx}} \big \vert \xx \in \{0,1\}^n \right\}.
    \end{align}
    With this choice, we make the following ansatz for the first excited state vector
    \begin{align}
        \ket{\Phi} = \sum_{\xx \in \{0,1\}^n} \alpha_{\xx} \ket{\Theta_{\xx}}.
    \end{align}
    Since the Hamiltonian is frustration-free, i.e., has energy zero, the gap is equal to the energy eigenvalue of the first excited state, i.e.,
    \begin{align}
        \Delta(H(\theta)) & = E_1(\theta)\\
        \nonumber
      & =   \min_{\ket{\Phi}: \calP_\mathrm{GS}(\theta) \ket{\Phi} = 0, \norm{\ket{\Phi}}=1} \braket{\Phi|H(\theta)|\Phi}\\
      \nonumber
     & = \min_{\ket{\Phi}: \calP_\mathrm{GS}(\theta) \ket{\Phi} = 0, \norm{\ket{\Phi}}=1} \sum_i \braket{\Phi|P_i(\theta)|\Phi} \\
      \nonumber
     & = \min_{\ket{\Phi}: \calP_\mathrm{GS}(\theta) \ket{\Phi} = 0, \norm{\ket{\Phi}}=1} \sum_i \sum_{\xx,\yy \in \bar{\calS}_i} \alpha_{\xx}^* \alpha_{\yy} \sin(\theta)^{2k} \braket{\Theta_{\xx} \vert \Theta_{\yy}}\\
      \nonumber
     & = \sin^{2k}(\theta) \min_{\ket{\Phi}: \calP_\mathrm{GS}(\theta) \ket{\Phi} = 0, \norm{\ket{\Phi}}=1} \sum_i \sum_{\xx,\yy \in \bar{\calS}_i} \alpha_{\xx}^* \alpha_{\yy}  \braket{\Theta_{\xx} \vert \Theta_{\yy}}.
      \nonumber
    \end{align}
    Here, we have noted that $\braket{\Theta_{\xx} |P_i(\theta)|\Theta_{\yy}}$ 
    is either equal to zero or to $\sin(\theta)^{2k}\cdot \braket{\Theta_{\xx} \vert \Theta_{\yy}}$. The latter only occurs when the $k$ bits of $\xx,\yy$ on which the projector $P_i(\theta)$ acts non-trivially are equal and are equal to the exact $k$-bit string that violates the $i$'th clause. Therefore, we only sum $x,y$ over the strings $\bar{\calS}_i$ that violate the $i$'th clause.

    Let us note further that $\braket{\Theta_{\xx} \vert \Theta_{\yy}}$ can be re-expressed as $\cos^{D_{\xx \yy}}(\theta)$ where $D_{\xx \yy}$ is the Hamming distance of $\xx$ and $\yy$. This fact establishes that we can rewrite
    \begin{align}
        \braket{\Theta_{\xx} \vert \Theta_{\yy}} = \cos^{D_{\xx \yy}}(\theta) = \braket{\xx \vert M^{\otimes n} \vert \yy}
    \end{align}
    with $\ket{\xx}$ and $\ket{\yy}$ being quantum states in the computational basis associated to bitsrings $\xx$ and $\yy$ and
    \begin{align}
        M = \begin{pmatrix}
            1 & \cos(\theta)\\
            \cos(\theta) & 1
        \end{pmatrix}.
    \end{align}
    This allows us to re-express 
    \begin{align}
        \Delta(H(\theta)) &= \sin^{2k}(\theta) \min_{\ket{\Phi}: \calP_\mathrm{GS}(\theta) \ket{\Phi} = 0, \norm{\ket{\Phi}}=1} \sum_i \sum_{\xx,\yy \in \bar{\cal{S}}_i} \alpha_{\xx}^* \alpha_{\yy}  \braket{\xx \vert M^{\otimes n} \vert \yy}\\
        \nonumber
        & = \sin^{2k}(\theta) \min_{\ket{\Phi}: \calP_\mathrm{GS}(\theta) \ket{\Phi} = 0, \norm{\ket{\Phi}}=1} \operatorname{Tr}\left[\left(\sum_i  \sum_{\xx,\yy \in \bar{\calS}_i} \alpha_{\xx}^* \alpha_{\yy}  \ket{\yy}\bra{\xx}\right) M^{\otimes n}\right].
        \nonumber
    \end{align}
    This expression can be bounded from below by using the 
    matrix inequality from \cref{eq:trace_ineq}
    \begin{align}
        \Delta(H(\theta)) & \geq \sin^{2k}(\theta) \min_{\ket{\Phi}: \calP_\mathrm{GS}(\theta) \ket{\Phi} = 0, \norm{\ket{\Phi}}=1} \left\{ \lambda_{\mathrm{min}}(M)^n \cdot \operatorname{Tr}\left[\left(\sum_i  \sum_{\xx,\yy \in \bar{\calS}_i} \alpha_{\xx}^* \alpha_{\yy}  \ket{\yy}\bra{\xx}\right)\right]\right\}\\
        \nonumber
        & = \sin^{2k}(\theta) \min_{\ket{\Phi}: \calP_\mathrm{GS}(\theta) \ket{\Phi} = 0, \norm{\ket{\Phi}}=1} \left\{ \lambda_{\mathrm{min}}(M)^n \cdot \left(\sum_i  \sum_{\xx,\yy \in \bar{\calS}_i} \Abs{\alpha_{\xx}}^2  \right)\right\}.
        \nonumber
    \end{align}
    We note that the matrix $M$ has eigenvalues $(1\pm \cos(\theta))$. Moreover, each $x$ which is not a solution must be contained in at least one set $\bar{\calS}_i$. Thus, we have
    \begin{align}
        \Delta(H(\theta)) & = \sin^{2k}(\theta) (1 - \cos(\theta))^n \min_{\ket{\Phi}: \calP_\mathrm{GS}(\theta) \ket{\Phi} = 0, \norm{\ket{\Phi}}=1} \left\{ \sum_{\xx,\yy \notin \calS} \Abs{\alpha_{\xx}}^2 \right\}
    \end{align}
    where $\calS$ denotes the set of binary strings that are solutions. To bound the term $\sum_{\xx \notin \calS} \Abs{\alpha_{\xx}}^2$, we make use of the following fact: Let us partition the ansatz state into states which correspond to solutions (i.e., $\xx \in \calS$) and others which are not solutions (i.e., $\xx \notin \calS$), i.e.,
    \begin{align}
        \ket{\Phi} = \sum_{\xx} \alpha_{\xx} \ket{\theta_{\xx}} = \sum_{\xx \in \calS} \alpha_{\xx} \ket{\theta_{\xx}} + \sum_{\xx \notin \calS} \alpha_{\xx} \ket{\theta_{\xx}}.
    \end{align}
    We require that $\ket{\Phi}$ is orthogonal to all solutions state vector  $\ket{\theta_{\myvec{s}}}$ with $\myvec{s} \in \calS$. Requiring this yields the fact
    that
    \begin{align}
        \label{eq:gap_ansatz_orthogonality}
        &0  = \braket{\theta_{\myvec{s}} \vert \Phi} = \sum_{\xx \in \calS} \alpha_{\xx} \braket{\theta_{\myvec{s}} \vert \theta_{\xx}} + \sum_{\xx \notin \calS} \alpha_{\xx} \braket{\theta_{\myvec{s}} \vert \theta_{\xx}} \\
        \nonumber
        &\Leftrightarrow \sum_{\xx \notin \calS} \alpha_{\xx} \braket{\theta_{\myvec{s}} \vert \theta_{\xx}} = - \sum_{\xx \in \calS} \alpha_{\xx} \braket{\theta_{\myvec{s}} \vert \theta_{\xx}} ,
        \nonumber
    \end{align}
    which has to hold for every $\myvec{s} \in \calS$, therefore
    \begin{align}
        \label{eq:gap_ansatz_orthogonality2}
        & \sum_{\xx \in \calS} \sum_{\yy \in \calS} \alpha_{\xx}^* \alpha_{\yy} \braket{\theta_{\xx} \vert \theta_{\yy}} = - \sum_{\xx \in \calS} \sum_{\yy \notin \calS} \alpha_{\xx}^* \alpha_{\yy} \braket{\theta_{\xx} \vert \theta_{\yy}}.
    \end{align}
    Combined with the normalization we enforce, this yields
    \begin{align}
        1 = \braket{\Phi \vert \Phi} = \sum_{\xx \notin \calS} \sum_{\yy \notin \calS} \alpha_{\xx}^* \alpha_{\yy} \braket{\theta_{\xx} \vert \theta_{\yy}} - \sum_{\uu \in \calS} \sum_{\vv \in \calS} \alpha_{\uu}^* \alpha_{\vv} \braket{\theta_{\uu} \vert \theta_{\vv}}.
    \end{align}
    Thus,
    \begin{align}
        1 &\leq \sum_{\xx \notin \calS} \sum_{\yy \notin \calS} \alpha_{\xx}^* \alpha_{\yy} \braket{\theta_{\xx} \vert \theta_{\yy}}
        \\
        \nonumber
        & = \sum_{\xx \notin \calS} \sum_{\yy \notin \calS} \alpha_{\xx}^* \alpha_{\yy} \braket{\xx \vert M^{\otimes n} \vert \yy}\\
        \nonumber
        &=  \operatorname{Tr}\left[ \left( \sum_{\xx \notin \calS} \sum_{\yy \notin \calS} \alpha_{\xx}^* \alpha_{\yy} \ket{\yy}\bra{\xx}\right) M^{\otimes n} \right]\\
        \nonumber
        &\leq (1+ \cos(\theta))^{n} \cdot \sum_{\xx \notin \calS} \Abs{\alpha_{\xx}}^2,
        \nonumber
    \end{align}
    where \cref{eq:trace_ineq} has been used in the last step. Therefore, 
    \begin{align}
        \sum_{\xx \notin \calS} \Abs{\alpha_{\xx}}^2 \geq  (1+ \cos(\theta))^{-n}.
    \end{align}
    Plugging this in yields the desired result
    \begin{align}
        \Delta(H(\theta)) & = \sin^{2k}(\theta) \left(\frac{1 - \cos(\theta)}{1 + \cos(\theta)}\right)^n.
    \end{align}
\end{proof}

Let us end this section by noting the following two things:
\begin{remark}
    If one substitutes the lower bound on the gap scaling from \cref{theorem:gap_scaling_benjamin} into \cref{thrm:summary_runtimes_gap}, the result suggests an optimal angle of $\theta = \frac{\pi}{2}$. This conclusion, however, stands in contrast with the numerical results reported in Ref.~\cite{benjamin2017}. Moreover, the examples in \cref{sec:Analysis of some restricted input classes} clearly demonstrate that choosing a non-trivial angle can significantly improve the overall runtime of the algorithm. In agreement with the results of Ref.~\cite{benjamin2017}, our findings provide further evidence that the lower bound on the gap derived in \cref{theorem:gap_scaling_benjamin} is not tight. As a note of caution, however, we point out that the numerics in this instance do not capture worst-case performance and can thus only be seen as an indicator for the existence of a better lower bound on the Hamiltonian gap.
\end{remark}

\begin{remark}
If an angle of subspaces $\mu$ is known through the \emph{method of alternating projections} in Section~\ref{sec:alternating_projections}, then this automatically implies a lower bound for the spectral gap $\Delta(\theta,n,m,k)$ of the associated Hamiltonian $H(\theta)$ by virtue of the quantum union bound from \cref{prop:QUB}. In particular,
\begin{align}
    1 - 4\Delta(\theta,n,m,k) \leq \mu\\
    \Rightarrow \Delta(\theta,n,m,k) \geq \frac{1 - \mu}{4}.
\end{align}
\end{remark}

\section{Known worst-case bounds for gaps}
\label{sec:worst-case_bounds_for_gaps}

As highlighted in \cref{thrm:summary_runtimes_gap}, the asymptotic runtime of the algorithm is ultimately determined by how the spectral gap of the rotated Hamiltonian scales. In this section, we provide a brief overview of the existing literature on spectral gaps of Hamiltonians. However, obtaining a tighter bound would directly translate into a better asymptotic runtime. Moreover, for some instances, the gap might scale substantially better than suggested by the worst-case bound (even when it is tight). 

Consider a system described by the family of Hamiltonians $\calH(N)$ where $N$ is the scalable system size. We say that a system is uniformly  \emph{gapped} if, in the large $N$ limit, the gap remains lower bounded by a constant. On the contrary, if the gap vanishes in the large $N$ limit, the system is called \emph{gapless}. A large community of quantum many-body theory is concerned with answering whether a system is gapped or gapless - a question known as the \emph{spectral gap problem}.

It has been shown in Ref.~\cite{cubitt2022} that it is, in general, \emph{undecidable} to solve the spectral gap problem. For this purpose, the authors consider families of nearest neighbor Hamiltonians on a two-dimensional  square lattice of $d$-level quantum systems. However, this does not imply that the spectral gap problem in our setting is undecidable. In fact, many statements are known about the gap scaling of frustration-free Hamiltonians as the variety of statements in the literature showcases.

Most bounds available are for geometrically local Hamiltonians.
Finite-size criteria (a.k.a.\ Knabe-type bounds) 
uplift sufficiently large local gaps to uniform gaps for the infinite system \cite{LemmKnabe,LemmKnabe2}.
Martingale method are related in mindset and 
build an increasing sequence of regions and control the overlaps of their ground-space projectors 
\cite{NachtergaeleBound,ImprovedNachtergaele}. For spin models that can be mapped to non-interacting fermions, also the spectral gap can be efficiently computed.
Let us note that the quantum many-body literature is typically concerned with determining whether a system is gapped or not, rather than bounding how rapidly a gap closes with increasing system size. Moreover, the theorems from above are often restricted to translation-invariant Hamiltonians on a lattice. However, this is not the kind of Hamiltonian we are concerned with in this work.

On the other hand, proven gaps for adiabatic quantum computing (see, e.g.,  Refs.~\cite{farhi2000,vanDam2001,farhi2002,Reichardt2004,farhi2005,Znidaric2006,farhi2010,Altshuler2010,hen2014,werner2023,braida2025}) are often in the form that the gap closes exponentially fast in the system size, which is what we also expect in our setting. However, the Hamiltonians are often very constructed, and it is not so clear how these ideas can be generalized to our setting. Possibly most \emph{useful} in the present context, and presumably at the same time the least known, are perturbations $H=H_0+V$,
where $V$ constitutes a relatively bounded perturbation and $H_0$ is a simple gapped Hamiltonian, ideas that also work for gaps that are closing in the system size.

On a meta-level, we emphasize that understanding the scaling of spectral gaps is crucial for making precise statements about the performance of quantum optimization algorithms. At the same time, standard techniques from quantum many-body theory are often not directly applicable in this setting, as the Hamiltonians arising in optimization problems are typically neither translation-invariant nor geometrically local on a lattice. This discrepancy highlights the need for methods beyond the standard toolbox from quantum many-body theory. One promising direction in this regard is the combination of the \emph{quantum union bound} with the \emph{method of alternating projections}: here, the angle between subspaces provides a mechanism to establish the existence of a spectral gap for a certain family of Hamiltonians.
\end{document}